\definecolor{DarkBlue}{rgb}{0.1,0,0.55}
\definecolor{DarkGreen}{RGB}{24,126,35}
\definecolor{lgrey}{RGB}{245,245,245}     % background
\definecolor{colKeys}{RGB}{0,0,255}       % blue
\definecolor{colIdentifier}{RGB}{0,0,0}	  % black
\definecolor{colComments}{RGB}{34,139,34} % green
\definecolor{colString}{RGB}{160,32,240}  % purple
\DeclareMathOperator*{\cova}{Cov} 
\newcommand{\cov}[1]{{\cova}\left[ {#1} \right]}
\newcommand{\1}{\mbox{$\mathrm{1\hspace*{-2.5pt}l}$\,}}
\DeclareMathOperator{\bFa}{\pmb{F}}
\DeclareMathOperator{\bfa}{\pmb{f}}
\DeclareMathOperator{\bLam}{\pmb{\Lambda}}
\DeclareMathOperator{\blam}{\pmb{\lambda}}
\DeclareMathOperator{\bu}{\pmb{u}}
\DeclareMathOperator{\bvar}{\pmb{\varepsilon}}
\DeclareMathOperator{\bbeta}{\pmb{\beta}}
\DeclareMathOperator{\bomega}{\pmb{\omega}}
\DeclareMathOperator{\bX}{\pmb{X}}
\DeclareMathOperator{\bx}{\pmb{x}}
\DeclareMathOperator{\bSigma}{\pmb{\Sigma}}
\DeclareMathOperator{\bPhi}{\pmb{\Phi}}
\DeclareMathOperator{\bS}{\pmb{S}}
\DeclareMathOperator{\bA}{\pmb{A}}
\DeclareMathOperator{\bC}{\pmb{C}}
\DeclareMathOperator{\bG}{\pmb{G}}
\DeclareMathOperator{\bI}{\pmb{I}}
\DeclareMathOperator{\bD}{\pmb{D}}
\DeclareMathOperator{\bP}{\pmb{P}}
\DeclareMathOperator{\bB}{\pmb{B}}
\DeclareMathOperator{\bGamma}{\pmb{\Gamma}}
\DeclareMathOperator{\bM}{\pmb{M}}
\newcommand{\E}[1]{\mathbb{E}\left[#1\right]}
\newcommand{\Prob}[1]{\mathbb{P}\left(#1\right)}
\newcommand{\trace}[1]{\textup{\text{tr}}\left(#1\right)}
\newcommand{\traces}[1]{\textup{\text{tr}}\left[#1\right]}
\newcommand{\spec}[1]{\left\lVert #1\right\lVert}
\newcommand{\frob}[1]{\left\lVert #1\right\lVert_F}
\newcommand{\lone}[1]{\left\lVert #1\right\lVert_1}
\setlist[itemize]{leftmargin=2cm}
\theoremstyle{plain}
\newtheorem{assumption}{Assumption}
\newtheorem{theorem}{Theorem}
\newtheorem{lemma}[theorem]{Lemma}
\newtheorem{prop}{Proposition}
\renewenvironment{proof}{{\bfseries Proof.}}{\qed}
\newcites{main}{References}
\newcites{supp}{References}
\begin{document}
\def\spacingset#1{\renewcommand{\baselinestretch}%
	{#1}\small\normalsize} \spacingset{1}

%%%%%%%%%%%%%%%%%%%%%%%%%%%%%%%%%%%%%%%%%%%%%
\title{\renewcommand{\thefootnote}{\fnsymbol{footnote}}\vspace{-1.5cm}\textbf{Sparse Approximate Factor Estimation for High-Dimensional Covariance Matrices}\footnotemark[1]
}
\renewcommand{\thefootnote}{\fnsymbol{footnote}}
\footnotetext[1]{Financial support by the Graduate School of Decision Sciences (GSDS), the German Science Foundation (DFG) and the German Academic Exchange Service (DAAD) is gratefully acknowledged. For helpful comments on an earlier draft of the paper we would like to thank Lyudmila Grigoryeva and Karim Abadir.
	The usual disclaimer applies.}

\author{
	\renewcommand{\thefootnote}{\alph{footnote}}
	\Large{Maurizio Daniele}\footnotemark[1]
	\\ University of Konstanz, GSDS
	\and
	\renewcommand{\thefootnote}{\alph{footnote}}
	\Large{Winfried Pohlmeier}\footnotemark[2]
	\\University of Konstanz, CoFE, RCEA
	\and 
	\renewcommand{\thefootnote}{\alph{footnote}}
	\Large{Aygul Zagidullina}\footnotemark[2]
	\\University of Konstanz, QEF
}
\renewcommand{\thefootnote}{\alph{footnote}}
\footnotetext[1]{Department of Economics, Universit\"atsstra\ss e 1, D-78457 Konstanz, Germany. Phone: +49-7531-88-2657, email: Maurizio.Daniele@uni-konstanz.de.}
\footnotetext[2]{Department of Economics, Universit\"atsstra\ss e 1, D-78457 Konstanz, Germany.}
%Phone: +49-7531-88-2660, email: Winfried.Pohlmeier@uni-konstanz.de.}
%\footnotetext[3]{Department of Economics, Universit\"atsstra\ss e 1, D-78457 Konstanz, Germany. 
%Phone: +49-7531-88-3753, email: Aygul.Zagidullina@uni-konstanz.de.}
\maketitle

\begin{abstract}
	We propose a novel estimation approach for the covariance matrix based on the $l_1$-regularized approximate factor model.
	Our sparse approximate factor (SAF) covariance estimator allows for the existence of weak factors and hence relaxes the pervasiveness assumption generally adopted 
	for the standard approximate factor model. We prove consistency of the covariance matrix estimator under the Frobenius norm  
	as well as the consistency of the factor loadings and the factors.
	
	\indent Our Monte Carlo simulations reveal that the SAF covariance estimator has superior properties in finite samples 
	for low and high dimensions and different designs of the covariance matrix.  
	Moreover, in an out-of-sample portfolio forecasting application the estimator uniformly outperforms alternative portfolio strategies  
	based on alternative covariance estimation approaches and modeling strategies including the $1/N$-strategy.
\end{abstract}

\noindent%
{\it Keywords:}  Approximate Factor model, weak factors, $l_{1}$-regularization, high dimensional covariance matrix, portfolio allocation \\
{\it JEL classification: } C38, C55, G11, G17 
\vfill
%%%%%%%%%%%%%%%%%%%%%%%%%%%%%%%%%%%%%%%%%

%
\spacingset{1.37}
\setlength{\bibsep}{5pt plus 0.3ex}

\section{Introduction} \label{sec:intro}
The estimation of high-dimensional covariance matrices and their inverses (precision matrices) has recently received a great attention.
%  from both practitioners and statisticians. 
In economics and finance, it is central for portfolio allocation, risk measurement, asset pricing and 
graphical network analysis. The list of important applications from other areas of research includes, for example,  the analysis of climate data, gene classification 
and  image classification.  What appears to be a trivial estimation problem for a large sample size $T$  and a low dimensional vector of covariates,
turns out to be demanding, if $N$ is of the same order of magnitude or even larger than $T$.
In these cases, the sample covariance matrix becomes nearly singular and estimates the population covariance matrix poorly.
Moreover, assumptions of standard asymptotic theory with $T \to \infty $, holding $N $ fixed, turns out to be inappropriate and have to be replaced 
by assumptions allowing for both, $T$ and $N$,  approaching infinity.  

In recent years numerous studies proposed alternative estimation approaches for high-dimensional covariance matrices, which differ in the way of bounding the dimensionality problem. Two major approaches are 
factor models imposing a lower dimensional factor structure for the underlying multivariate process 
and regularization strategies for the parameters of the covariance matrix or its eigenvalues (see \citemain{Fan/Liao/Liu2016} for a recent survey on the estimation of large 
covariances and precision matrices).  
In this paper, we present an effective novel approach to the estimation of high-dimensional covariances, which 
profits from both branches of the literature. Our sparse approximate  
factor (SAF) approach to the estimation of high-dimensional covariance matrices is based on  $l_1$-regularization 
of the factor loadings and thereby is able to account for weak factors and shrinks elements in the covariance matrix towards zero.

Approaches to obtain consistent estimators by imposing a sparse structure on the covariance matrix directly include\nocitemain{Bickel2008}\nocitemain{BickelLevina2008}
\citeauthor{Bickel2008} (\citeyear{Bickel2008}, \citeyear{BickelLevina2008}),  \citemain{Cai/Liu2011} and \citemain{Cai/Zhou2012}.
These thresholding approaches  are shrinking small elements in the covariance matrix exactly to zero. While this may be 
a reasonable strategy, e.g. for genetic data, this assumption may not be appropriate for economic or financial data, where variables are driven
by common underlying factors. Such a feature may be more appropriately captured by covariance matrices based on factor representations.

In the literature on factor based covariance estimation \citemain{Fan2008a} consider the case of a strict factor representation with observed factors. 
This approach requires knowledge of additional observable variables (e.g. the Fama-French factors in the asset pricing framework), which may be an additional source 
of misspecification. Moreover, strict factor model representations impose the overly strong assumption of strictly uncorrelated idiosyncratic errors.  
This assumption was relaxed in \citemain{Fan2011a} and \citemain{FanLiaoMincheva2013}, who propose a covariance estimator based on an approximate factor model representation.
While \citemain{Fan2011a} shrink the entries of the covariance matrix of the idiosyncratic errors to zero using the adaptive thresholding technique by   
\citemain{Cai/Liu2011}, the approach proposed in \citemain{FanLiaoMincheva2013} rests on the more general principal
orthogonal complement thresholding method (POET) to allow for sparsity in the covariance matrix of the idiosyncratic errors. 

Our SAF covariance matrix estimator extends the existing framework on factor based approaches  
by imposing sparsity on both, the factor loadings and the covariance matrix of the idiosyncratic errors. Unlike imposing sparsity for the covariance matrix directly by thresholding 
or $l_1$-norm regularization, the $l_1$-regularization of the factor loadings does not necessarily imply zero entities of the covariance matrix, but simply reduces the dimensionality problem in the estimation of the factor driven part of the covariance matrix.  
Moreover, the sparsity in the matrix of factor loadings allows for weak factors, which only affect a subset of the observed variables.  Thus the SAF-approach 
relaxes the identifying assumption on the pervasiveness of the factors in the standard framework. This further implies that the eigenvalues of the covariance matrix corresponding to the common component are allowed to diverge at a slower rate than commonly considered (i.e. slower than $\mathcal{O}(N)$).   

The recent paper by %Fan, Liu and Wang (2018, The Annals of Statistics, "Large Covariance Estimation through Elliptical Factor Models") 
\citemain{FanLiuWang2018}
claims that the relaxation of the pervasiveness assumption in the approximate factor model framework is the next major concern which should be addressed in future research. Hence, in this paper we focus exactly on this issue and build a bridge between the standard factor model and a relaxed pervasiveness assumption.

The weaker conditions on the eigenvalues  allow us to derive the consistency for the SAF covariance matrix estimator under the average Frobenius norm 
under rather mild regularity conditions. To our knowledge this convergence result is new. Because of the fast diverging eigenvalues for estimators based on the approximate 
factor model, convergence has  only be shown under the weaker weighted quadratic norm but not for the more general Frobenius norm (see, e.g. \citemain{FanLiaoMincheva2013}). 
As a byproduct of our proof for the SAF covariance matrix estimator, we also prove the consistency for the estimators of the sparse factor loadings, 
the factors and the covariance matrix of the idiosyncratic errors.

The favorable asymptotic properties of the SAF covariance matrix estimator are well supported by our Monte Carlo study based on different dimensions and alternative designs of the population covariance matrix. More precisely, the SAF covariance matrix estimator yields the lowest difference in the Frobenius norm to the true underlying covariance matrix compared to several competing estimation strategies. 

Finally, in an empirical study on the portfolio allocation problem, we show that the SAF covariance matrix estimator 
is a superior choice to construct the weights of the Global Minimum Variance Portfolio (GMVP) for low and large dimensional portfolios.  
Based on returns data from the S\&P 500  the estimator uniformly outperforms portfolio strategies based on alternative covariance estimation approaches 
and modeling strategies including the $1/N$-strategy in terms of different popular out-of-sample portfolio performance measures. 

The rest of the paper is organized as follows. In Section \ref{sec:model} we introduce the approximate factor model approach and show how sparsity can be obtained with respect to the factor loadings matrix by  $l_{1}$-regularization. Section \ref{sec:theo} discusses the theoretical setup and provides the convergence results. In Section \ref{sec:sim}, we present Monte-Carlo evidence on the finite sample properties of our new covariance estimator, while in Section \ref{sec:pf} we show the performance of our approach when applied to the empirical portfolio allocation problem.
Section \ref{sec:conclusions} summarizes the main findings and gives an outlook on future research.

Throughout the paper we will use the following notation: $\pi_{\max}(\bA)$ and $\pi_{\min}(\bA)$ are the maximum and minimum eigenvalue of a matrix $\bA$. Further, $\spec{\bA}$, $\frob{\bA}$ and $\lone{\bA}$ denote the spectral, Frobenius 
and the $l_1$-norm of $\bA$, respectively. They are defined as $\spec{\bA} = \sqrt{\pi_{\max}(\bA'\bA)}$,  $\frob{\bA} = \sqrt{\trace{\bA'\bA}}$ and $ \lone{\bA}= \max_{j} \sum_i |a_{ij} |$. For some constant $c > 0$ and a non-random sequence $b_N$, we use the notation $b_N = \mathcal{O}(N)$, if $N^{-1} b_N \to c$, for $N \to \infty$. Moreover, $b_N = o(N)$, if $N^{-1} b_N \to 0$, for $N \to \infty$. Similarly, for a random sequence $d_N$, we say $d_N = \mathcal{O}_p(N)$, if $N^{-1} d_N \overset{p}{\to} c$, for $N \to \infty$ and $d_N = o_p(N)$, if $N^{-1} d_N \overset{p}{\to} 0$, for $N \to \infty$, where $\overset{p}{\to}$ denotes convergence in probability.

%%%%%%%%%%%%%%%%%%%%%%%%%%%%%%%%%%%%%%%%%%%%%%%%%%
\section{Factor Model Based Covariance Estimation}\label{sec:model}
%\section{MODEL}
\subsection{The Approximate Factor Model}\label{sub_sec:app_f_model}
\noindent
The following analysis is based on the approximate factor model (AFM) proposed by \citemain{Chamberlain1983}  to obtain a lower dimensional representation of a possibly 
high-dimensional covariance matrix. Let  
$x_{it}$ be the $i$-th observable variable at time $t$ for $i = 1, \dots, N$ and $t = 1, \dots, T$, such that 
$N$ and $T$ denote the sample size in the cross-section and in the time dimension, respectively.
The AFM is given by:
\begin{align}
	x_{it} = \blam_i' \bfa_t + u_{it} \label{approx_model} \, ,
\end{align}
where $\blam_i$ is a $(r \times 1)$-dimensional vector of factor loadings for variable $i$ and $\bfa_t$ is a $(r \times 1)$-dimensional vector of latent  factors at time $t$, where $r$ denotes the number of factors common to all variables in the model. 
Typically, we assume that $r$ is much smaller than the number of variables $N$. Finally, the idiosyncratic component $u_{it}$ accounts for variable-specific shocks, which are not captured by the common component $\blam_i' \bfa_t$. 
The AFM allows for weak serial and cross-sectional correlations among the idiosyncratic components with a dense covariance matrix of the idiosyncratic error term vector, $\bSigma_u = \cova{\left[(u_{1t}, u_{2t}, \ldots u_{Nt})'\right]}$. 
In matrix notation, \eqref{approx_model} can be written as:
\begin{align}
	\bX= \bLam \bFa'  + \bu \, ,		\label{approx_factor} 
\end{align}
where $\bX$ denotes a $(N \times T)$ matrix containing $T$ observations for $N$ weakly stationary time series. It is assumed that the time series are demeaned and standardized. $\bFa = (\bfa_1, \dots, \bfa_T)'$ is referred to as a $(T \times r)$-dimensional matrix of unobserved factors, $\bLam = (\blam_1, \dots, \blam_N)'$ is a $N \times r$ matrix of corresponding factor loadings and $\bu$ is a $(N \times T)$-dimensional matrix of idiosyncratic shocks.

There are several estimation approaches for a factor model as given by \eqref{approx_factor}. The principal component analysis (PCA)\footnote{See e.g., \citemain{Bai2002} for a detailed treatment of the PCA in approximate factor models.} and the quasi-maximum likelihood estimation (QMLE) under normality (see i.e. \citemain{Bai2016a}) are the two most popular ones.  In the following, we pursue estimating the factor model by QMLE. This allows us to introduce sparsity 
in the factor loadings by penalizing the likelihood function. Moreover, contrary to PCA,  all model parameters including the covariance matrix 
$\bSigma_u$ can be estimated jointly, while PCA-based second stage estimates of $\bSigma_{u}$ require consistent estimation of $\bLam$ and $\bFa$
in the first stage. This, however, may be problematic for the case of a relatively small $N$, because $\bFa$ can no longer be estimated consistently (\citemain{Bai2016}). 

The negative quasi log-likelihood function for the data in the AFM is defined as:
\begin{align}
	\mathcal{L}(\bLam, \bSigma_{F}, \bSigma_{u}) = \log\left|\det\left(\bLam \bSigma_{F}\bLam' + \bSigma_{u}\right)\right| + \traces{\bS_{x}\left(\bLam \bSigma_{F}\bLam' + \bSigma_{u}\right)^{-1}}, \label{quasi_lik_initial}
\end{align}
where $\bS_{x} = \frac{1}{T} \sum_{t = 1}^{T} \bx_t\bx_t'$ denotes the sample covariance matrix based on the observed data.  $\bSigma_{F}$ is the low dimensional covariance matrix of the factors. 
Within the framework of an AFM, the estimation of a full $\bSigma_u$ is cumbersome, as the number of parameters to estimate is $\frac{N(N+1)}{2}$ which may exceed the sample size $T$.
In order to overcome this problem, we treat $\bSigma_{u}$ as a diagonal matrix in the first step and define $\bPhi_u = \text{diag}\left(\bSigma_{u}\right)$ denoting a diagonal matrix that contains only the elements of the main diagonal of $\bSigma_{u}$. Furthermore, we restrict the covariance matrix of the factors to $\bSigma_{F} = \bI_r$.

Imposing these restrictions has the advantage that the estimation of the covariance matrix of the factors
becomes redundant.  Hence, our objective function reduces to:
\begin{align}
	\mathcal{L}(\bLam, \bPhi_{u}) = \log\left|\det\left(\bLam\bLam' + \bPhi_{u}\right)\right| + \traces{\bS_{x}\left(\bLam \bLam' + \bPhi_{u}\right)^{-1}} \, .  \label{quasi_lik}
\end{align}
As the true covariance matrix of $\bu_t$ allows for correlations of general form, but the previous objective function incorporates the error term structure of a strict factor model, \eqref{quasi_lik} may be seen as a quasi-likelihood. \citemain{Bai2016a} show that the QML estimator based on \eqref{quasi_lik} yields consistent parameter estimates. Hence, the consistency of $\bPhi_{u}$ is not affected by the general form of cross-section and serial correlations in $\bu_t$.

The factors $\bfa_t$ can be estimated by generalized least squares (GLS):
\begin{align}\label{gls_factors}
	\hat{\bfa}_t = \left(\hat{\bLam}' \hat{\bPhi}_{u}^{-1} \hat{\bLam}\right)^{-1}\hat{\bLam}'\hat{\bPhi}_{u}^{-1}\bx_t \, ,
\end{align}
where the estimates $\hat{\bLam}$ and $\hat{\bPhi}_{u}$ are the ones obtained from the optimization of the objective function in \eqref{quasi_lik}.

\subsection{The Sparse Approximate Factor Model} \label{factor_lasso_model}
\noindent The sparse approximate factor (SAF) model allows for sparsity in the factor loadings matrix $\bLam$ by shrinking 
single elements of $\bLam$ to zero. This is obtained by the $l_1$-norm penalized
MLE of \eqref{quasi_lik} based on the following optimization problem:
\begin{align}
	\underset{\bLam, \bPhi_{u}}{\min}\;\left[\log\left|\det\left(\bLam\bLam' + \bPhi_{u}\right)\right| + \traces{\bS_{x}\left(\bLam \bLam' + \bPhi_{u}\right)^{-1}} + \mu \sum_{k = 1}^{r} \sum_{i = 1}^{N}\left|\lambda_{ik}\right| \right], \label{sparse_lasso}
\end{align}
where $\mu \geq 0$ denotes a regularization parameter.  Note that the number of factors $r$ is predetermined and assumed to be fixed. 
Sparsity is obtained by shrinking some elements of $\bLam$ to zero, such that not all $r$ factors load on each $x_{it}$. Hence, this framework allows for weak factors (see, e.g. \citemain{Onatski2012}) that affect only a subset of the $N$ time series.

It is well known that the factors and factor loadings in AFM model in \eqref{approx_factor} are only identified up to an arbitrary non-singular rotation matrix $\bP$. This follows from the fact that $\bX = \bLam \bP \bP^{-1'}\bFa' + \bu = \bLam^{*} \bFa^{*'} + \bu$, with $\bLam^{*} = \bLam \bP$ and $\bFa^{*'} = \bP^{-1'}\bFa'$. \\
In contrast to the standard AFM model, which needs additional restrictions to identify $\bP$, our SAF model with embedded $l_1$-norm penalty function ensures the identification of the factors and factor loadings up to a unitary generalized permutation matrix $\bP$.\footnote{A short demonstration of the fact that $\bP$ can only be a unitary generalized permutation matrix for the $l_1$-norm is given in Section \ref{subsec:unitary_permutation} in the Supplement, as well as in \citemain{HornJohnson2012}.} 

Hence, by fixing the ordering of columns, e.g. by sorting the columns of the factor loadings matrix according to their respective sparsity, and assuming that the SAF estimator $\hat{\bLam}$ has identical column signs as the true factor loadings $\bLam_0$, as part of the identification conditions, the SAF model is fully identified. However, it should be noted that the identification of the SAF model only holds if the $l_1$-norm penalty on $\bLam$ enters the penalized optimization problem \eqref{sparse_lasso}, i.e. for $\mu > 0$. For $\mu = 0$, we are in the standard ML setting for the AFM and solely for this case we identify the model, following \citemain{Lawley1971}, by imposing the identification restriction that
$\bLam'\bPhi_{u}^{-1}\bLam$ is diagonal, with distinct diagonal entries that are arranged in a decreasing order.

In contrast to the weak factor assumption introduced in the following,  the pervasiveness assumption conventionally made for standard approximate factor models
(e.g. \citemain{Bai2002}, \citemain{stock2002forecasting}), implies that the $r$ largest eigenvalues of $\bLam' \bLam$ diverge at the rate $\mathcal{O}(N)$. Intuitively, this means that all factors are strong and the entire set of time series is affected. Consequently, the sparsity in the factor loadings matrix introduced 
in Assumption \ref{assum_lam} below considerably relaxes the conventional pervasiveness assumption.
\begin{assumption}[Weakness of the Factors]\label{assum_lam}
	\leavevmode\\
	There exists a constant $c > 0$ such that, for all $N$,
	\begin{align*}
		c^{-1} < \pi_{\min}\left(\frac{\bLam'\bLam}{N^{\beta}}\right) \leq \pi_{\max}\left(\frac{\bLam'\bLam}{N^{\beta}}\right) < c, \text{ where } 1/2\leq \beta \leq 1.\footnotemark
	\end{align*}
\end{assumption}
\footnotetext{The lower limit 1/2 for $\beta$ is necessary to consistently estimate the factors. See \lemref{lem_est_factor} in Section \ref{subsubsec:consistency_lam} in the Supplement.}
Assumption \ref{assum_lam} implies that the $r$ largest eigenvalues of $\bLam' \bLam$ diverge with the rate $\mathcal{O}\left( N^{\beta}\right)$,
which can be much slower than in the standard AFM. Furthermore, the parameter $\beta$ can take on different values for each of the eigenvalues of $\bLam' \bLam$. Hence, the eigenvalues can diverge at different rates. On the other hand, the special case of $\beta = 1$, implies the standard AFM framework with strong factors (i.e. \citemain{FanLiaoMincheva2013}, \citemain{Bai2016}). Hence, our sparse approximate factor model offers a convenient generalization of the standard one. Furthermore, Assumption \ref{assum_lam} has a direct implication on the sparsity of $\bLam$. In fact, this can be deduced by upper bounding the spectral norm of $\bLam$ according to the following expressions: 
\begin{align}\label{ine_spars}
	\lone{\bLam} \leq \sqrt{N} \spec{\bLam} = \mathcal{O}\left( N^{\left(1 + \beta\right)/2}\right) \text{ and } \lone{\bLam} \geq \spec{\bLam} = \mathcal{O}\left(N^{\beta/2}\right).
	%\spec{\bLam} = \mathcal{O}\left(N^{\beta/2}\right) \leq  \lone{\bLam} \leq \sqrt{N} \spec{\bLam} = \mathcal{O}\left( N^{\left(1 + \beta\right)/2}\right), \quad \text{for } 1/2 \leq \beta \leq 1.
\end{align}
This result shows that imposing the weak factor assumption limits the amount of affected time series across all factors and hence requires a non-negligible amount of zero elements in each column of the factor loadings matrix. Nevertheless, the number of zero factor loadings can be arbitrarily small as $\beta$ increases. Note, that the lower bound of equation \eqref{ine_spars} restricts the number of zero elements in each column of $\bLam$, so that we can disentangle the common component from the idiosyncratic one.

The pervasiveness assumption imposed by the standard AFM, further implies a clear separation of the eigenvalues of the data covariance matrix into two groups, corresponding to the diverging eigenvalues of the common component and the bounded eigenvalues of the covariance matrix of the idiosyncratic errors. These characteristics can be observed in Figure \ref{fig_eig_strong}, where both panels illustrate the eigenvalue structure of datasets, that are simulated only based on strong factors for $T=450$ and different $N$.  
\begin{figure}[H]
	\begin{minipage}{.45\textwidth}
		\centering
		\captionsetup{width=1\linewidth}
		\subfloat[Eigenvalues for simulated data with 1 strong factor with $T = 450$]{\includegraphics[width=1\linewidth]{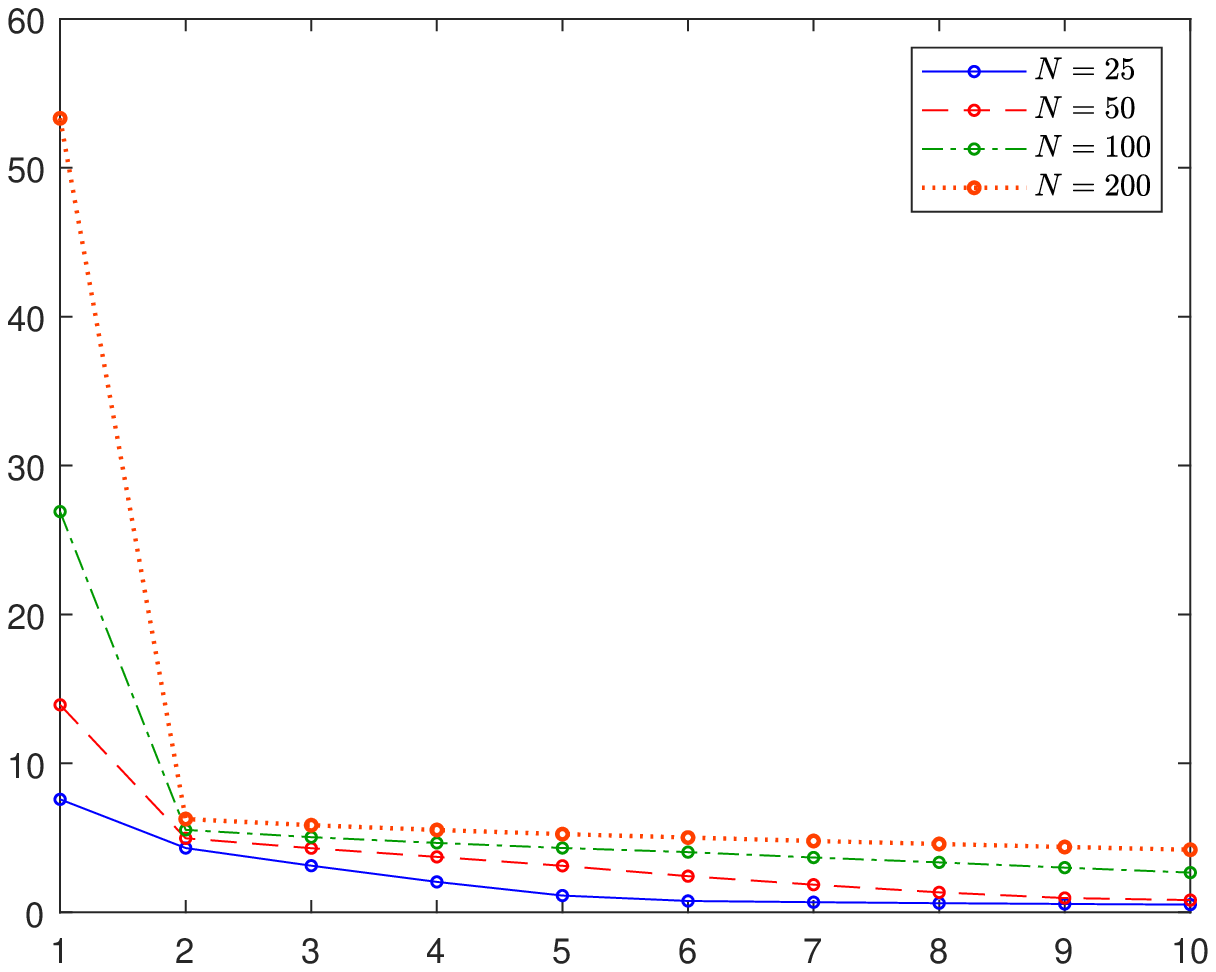}}
	\end{minipage}\hspace*{1cm}
	\begin{minipage}{.45\textwidth}
		\centering
		\captionsetup{width=1\linewidth}
		\subfloat[Eigenvalues for simulated data with 4 strong factors with $T = 450$]{\includegraphics[width=1\linewidth]{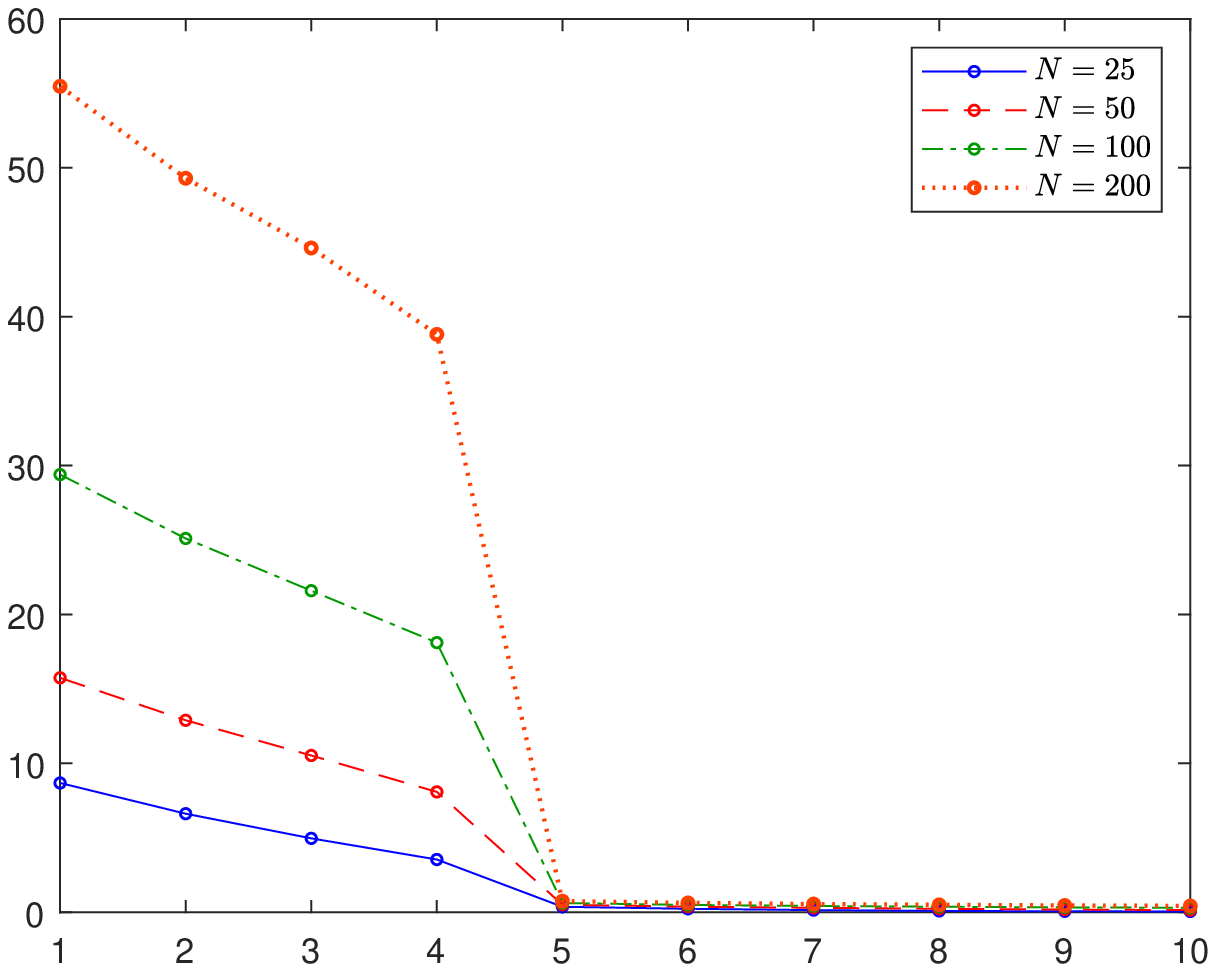}}
	\end{minipage}
	\caption{Structure of the eigenvalues based on strong factors}\label{fig_eig_strong}
\end{figure}
The panels differ solely in the number of factors included, where the left panel includes one strong factor and the right panel depicts the case of four strong factors. Both graphs reveal a clear partition in their respective eigenvalue structures, into sets of eigenvalues that diverge with the sample size $N$ corresponding to the number of included strong factors and sets of bounded eigenvalues associated to the idiosyncratic components.
\begin{figure}[H]
	\begin{minipage}{.45\textwidth}
		\centering
		\captionsetup{width=1\linewidth}
		\includegraphics[width=1\linewidth]{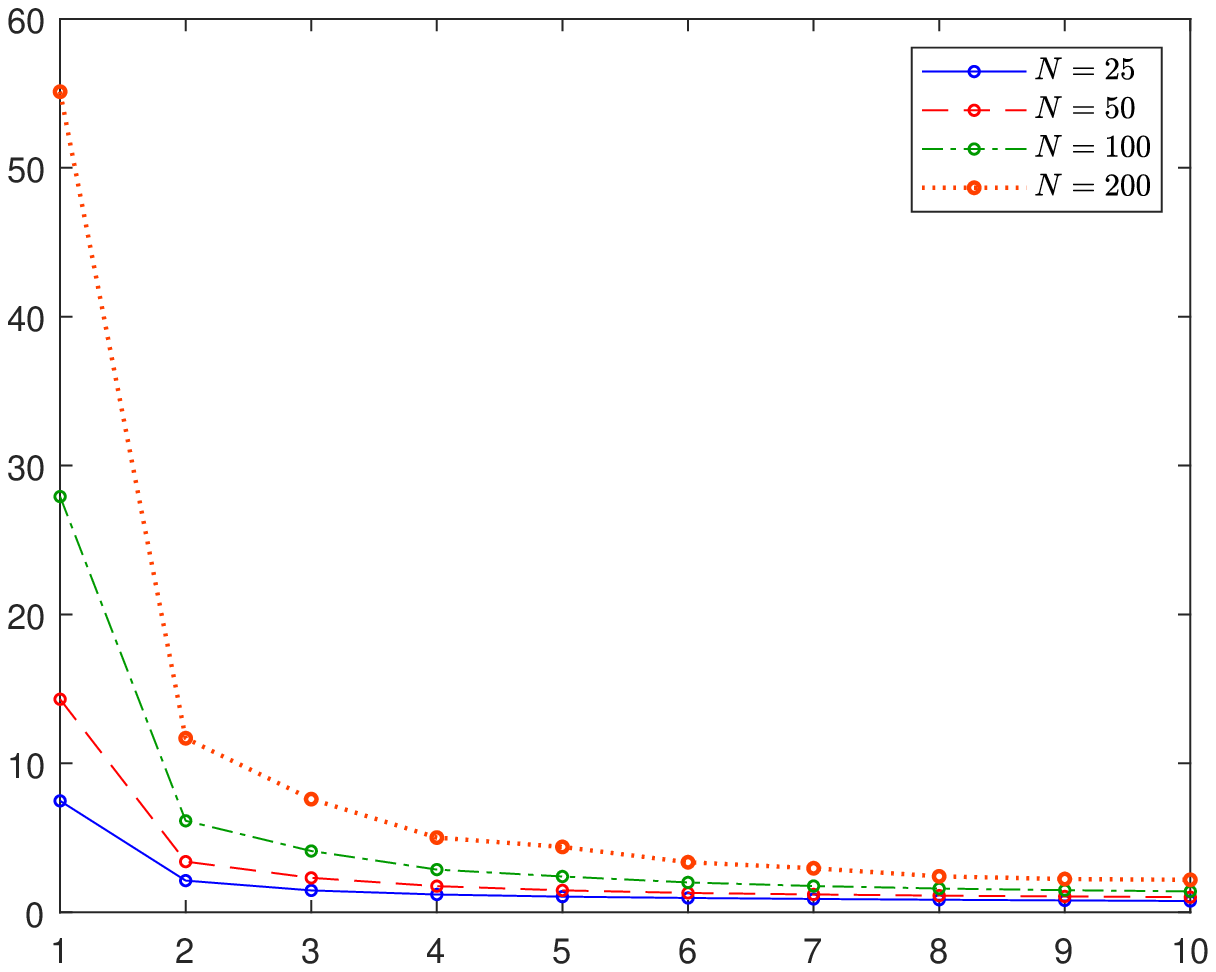}
		\caption{Eigenvalues for stock returns of stocks constituents of the S\&P 500 index with $T = 450$}\label{fig_sp500_eig}
	\end{minipage}\hspace*{1cm}
	\begin{minipage}{.45\textwidth}
		\centering
		\captionsetup{width=1\linewidth}
		\includegraphics[width=1\linewidth]{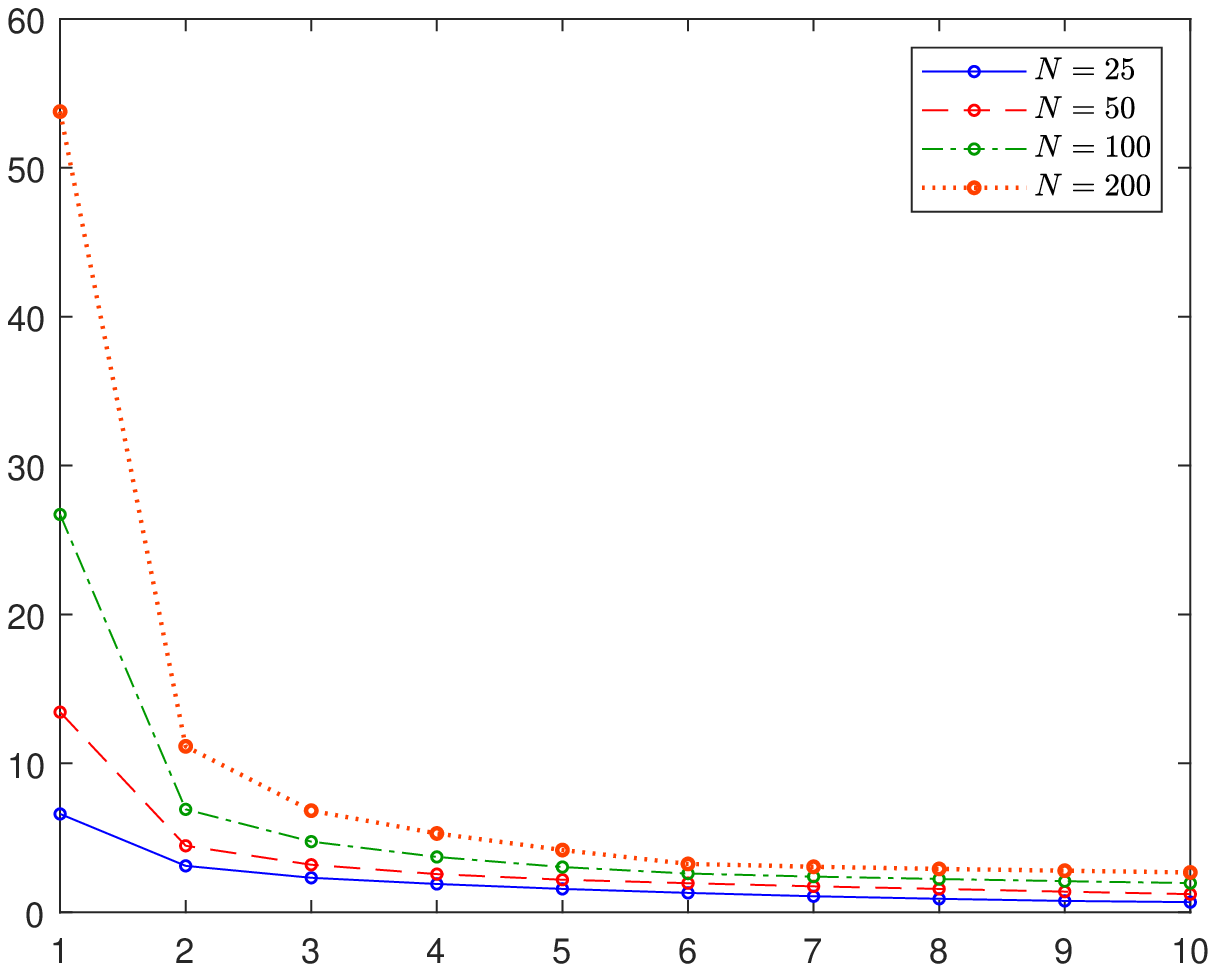}
		\caption{Eigenvalues for simulated data with 1 strong factor and 3 weak factors with $T = 450$}\label{fig_weak_eig}
	\end{minipage}
\end{figure}
However, such a clear separation in the eigenvalue structure of the covariance matrix cannot typically be found in real datasets. An example offers a dataset that contains the monthly asset returns of stocks constituents of the S\&P 500 stock index available for the entire period of 450 months,\footnote{The same dataset is also used in our empirical application and is described in more detail in Section \ref{sec:pf}.} whose eigenvalue distribution is illustrated in Figure \ref{fig_sp500_eig}. The graph shows a clear distinction between the first eigenvalue and the remaining eigenvalues.  However, the remaining eigenvalues diverge at a slower rate and a clear separation between the common and idiosyncratic component as implied by the standard AFM is impossible. Hence, the weak factor framework that allows for a slower divergence rate in the eigenvalues of the common component is more realistic for modeling the eigenvalue structure of real datasets. Furthermore, the weak factor assumption supports the well-documented empirical evidence that the eigenvalues of the sample covariance matrix of asset returns diverge at different rates (see, e.g. \citemain{Ross1976} and \citemain{Trzcinka1986}). Figure \ref{fig_weak_eig} depicts the eigenvalue structure of a dataset, which is generated by one strong factor and three weak factors. This model with weak factors nicely mimics the decaying eigenvalue structure we observe for the S\&P 500 asset returns.
\subsection{Estimation of the idiosyncratic error covariance matrix $\bSigma_{u}$} \label{error_factor}
In order to relax the imposed diagonality assumption on $\bSigma_{u}$ in the first step of our estimation, we re-estimate the covariance matrix 
%In order to allow for cross-sectional correlation in idiosyncratic error covariance matrix and hence to be in the context of an approximate factor model, 
of the idiosyncratic error term by means of the principal orthogonal complement thresholding (POET) estimator by \citemain{FanLiaoMincheva2013}. 
The POET estimator is based on soft-thresholding the off-diagonal elements of the sample covariance matrix of the residuals obtained from the estimation of an approximate factor model. Hence, it introduces sparsity in the idiosyncratic covariance matrix and offers a solution to the non-invertibility problem, generated using the sample covariance estimator, especially in high dimensional settings, where $N$ is close or even larger than $T$. More specifically, the estimated idiosyncratic error covariance matrix $\hat{\bSigma}_{u}^{\tau}$ based on the POET method is defined as:
\begin{align*}
	\hat{\bSigma}_{u}^{\tau} = \left(\hat{\sigma}_{ij}^{\tau}\right)_{N\times N}, \quad \hat{\sigma}_{ij}^{\tau} = \left\{\begin{array}{ll}
		\hat{\sigma}_{u,ii}, & i = j\\
		\mathcal{S}(\hat{\sigma}_{u,ij}, \tau), & i \neq j 
	\end{array}\right.
\end{align*}
where $\hat{\sigma}_{u,ij}$ is the $ij$-th element of the sample covariance matrix \\
$\bS_u = \frac{1}{T} \sum_{t = 1}^{T} (\bx_{t} - \hat{\bLam}\hat{\bfa}_t)(\bx_{t} - \hat{\bLam}\hat{\bfa}_t)'$ of the estimated factor model residuals, $\tau = \frac{1}{\sqrt{N}}+\sqrt{\frac{\log(N)}{T}}$ is a threshold\footnote{The threshold $\tau$ is based on the convergence rate of the idiosyncratic error covariance estimator specified in \lemref{lem_idio}. in Section \ref{subsubsec:rate_conv_sigma_u} in the Supplement.} and $\mathcal{S}(\cdot)$ denotes the soft-thresholding operator defined as:
\begin{align}\label{soft_t}
	\mathcal{S}(\sigma_{u,ij}, \tau) = \text{sign}(\sigma_{u,ij})(|\sigma_{u,ij}| - \tau)_+  \, .
\end{align}
In contrast to \citemain{FanLiaoMincheva2013}, who use the residuals of a static factor model based on the PCA estimator, 
our estimates are based on the residuals obtained from our sparse factor model. 

Thus, we follow the approach of \citemain{FanLiaoMincheva2013} and use a two-step procedure, where at first step we identify the common part, however, unlike in the PCA framework we allow for the weak factors; and in the second step, we model the general covariance structure for the idiosyncratic component.
By using a two-step procedure, we control for the sparsity patterns in $\bLam$ and $\bSigma_u$ separately and hence, this ensures that the sparsity in the loadings matrix is not distorted by the sparsity in the idiosyncratic error covariance matrix. 

Moreover, the joint estimation of two high-dimensional matrices with embedded $l_1$-norms, would become computationally burdensome and lead to considerable numerical instabilities. By separating the joint estimation into our two-step procedure we obtain a numerical stable optimization method that is computationally time-efficient. 
%In Section \ref{sec:theo:con_saf}, we show theoretically that all model parameters in both steps of our procedure are estimated consistently, and thus, the separate treatment of the model parameters does not affect them qualitatively.}

\subsection{SAF covariance matrix estimation}\label{fm_cov_est}
The estimator of the data covariance matrix based on the approximate factor model is obtained according to $ \bSigma = \cov{\bX} =  \bLam \bSigma_{F} \bLam' + \bSigma_{u}$.
%In this section we introduce the variance-covariance estimator according to our sparse approximate factor model. For this we consider the factor model representation in equation \eqref{approx_model} and calculate the true covariance matrix for $X$. This leads to the following expression:
%\begin{align*}
%\bSigma_\text{SF} &= \V{X} = \V{\bLam F'} + \V{u}	\\
%&= \bLam \bSigma_{F} \bLam' + \bSigma_{u}
%\end{align*}
Hereby, we first estimate the factors $\bfa_t$ and the factor loadings $\bLam$ according to our sparse factor model introduced in Section \ref{factor_lasso_model}. 
Consistent estimates of $\bLam$ and $\bfa_t$ are obtained by MLE and GLS as given by \eqref{quasi_lik} and \eqref{gls_factors}, respectively.
This yields the estimates of the common and idiosyncratic components of the AFM defined in \eqref{approx_model}. The latter one is used as input to estimate $\bSigma_u$ by the POET estimator   
introduced in Section \ref{error_factor}. Hence, our SAF covariance matrix estimator is given by:
\begin{align}\label{est_rfcov}
	\hat{\bSigma}_\text{SAF} = \hat{\bLam} \bS_{\hat{F}} \hat{\bLam}' + \hat{\bSigma}_{u}^{\tau}, 
\end{align}
where $\bS_{\hat{F}}$ denotes the sample estimator for the covariance matrix of the estimated factors, which is positive definite because the number of observations exceeds the number of factors. Further, using the convergence rate of the idiosyncratic error covariance matrix for the threshold $\tau$ also guarantees that $\hat{\bSigma}_{u}^{\tau}$ is positive definite with probability tending to one according to \citemain{Bickel2008}. Hence, the covariance matrix estimator $\hat{\bSigma}_\text{SAF}$ is positive definite by construction.

The implementations issues, the choice of the number of factors and the selection of the tuning parameter $\mu$ are described in Section \ref{sec:implem} in the Supplement.

%%%%%%%%%%%%%%%%%%%%%%%%%%%%%%%%%%%%%%%%%%%%%%%%%%%%%%%%

\section{Large Sample Properties}\label{sec:theo}
%\section{Large Sample Properties and Risk Bounds}}
%\subsection{Assumptions}
%In this section we present the asymptotic properties of our introduced sparse factor model in \eqref{sparse_lasso}. Subsequently, we will first establish the necessary assumptions.
In order to establish the consistency of the factor loadings matrix $\bLam$ and the data covariance matrix $\bSigma$ estimators, we adapt the following standard assumptions:
\begin{assumption}[Data generating process]\label{data_assum}
	\leavevmode
	\begin{enumerate}[label=(\roman*)]
		\item $\left\{\bu_t, \bfa_t\right\}_{t\geq 1}$ is strictly stationary. $\E{u_{it}} = \E{u_{it}f_{kt}} = 0$, $\forall i \leq N$, $k \leq r$ and $t \leq T$.\label{as11}
		\item There exist $r_1, r_2 > 0$ and $b_1, b_2 > 0$, such that for any $s > 0$, $i \leq N$ and $k \leq r$,\label{ass_exp}
		\begin{align*}
			\Prob{|u_{it}| > s} \leq \exp(-(s/b_1)^{r_1}), \quad \Prob{|f_{kt}| > s} \leq \exp(-(s/b_2)^{r_2}).
		\end{align*}
		\item Define the mixing coefficient: $\alpha(T) \coloneqq \sup_{A\in \mathcal{F}_{-\infty}^0, B\in \mathcal{F}_{T}^{\infty}} \left|\Prob{A}\Prob{B} - \Prob{AB}\right|,$
		where $\mathcal{F}_{-\infty}^0$ and $\mathcal{F}_{T}^{\infty}$ denote the $\sigma$-algebras generated by $\{(\bfa_t, \bu_t): -\infty \leq t \leq 0\}$ and $\{(\bfa_t, \bu_t): T \leq t \leq \infty\}$.\\
		Strong mixing: There exist $r_3 > 0$ and $C > 0$ s.t.: $\alpha(T) \leq \exp(-CT^{r_3}),$ $\forall T \in \mathcal{Z}^+$.
		% $T \in \mathbb{Z}^+$, % % 
		\label{ass_sm}
		
		\item There exist constants $c_1, c_2 > 0$ such that $c_2 \leq \pi_{\min}\left(\bSigma_{u0}\right) \leq \pi_{\max}\left(\bSigma_{u0}\right) \leq c_1$. \label{assum_sig}
		%, and $\max_{i\leq N} \spec{\lambda_{i0}} < c_1$.\label{assum_sig}
	\end{enumerate}
\end{assumption}
The assumptions in \ref{data_assum} impose regularity conditions on the data generating process and are identical to those imposed by \citemain{Bai2016}. Condition \textit{\ref{as11}} imposes strict stationarity for $\bu_t$ and $\bfa_t$ and requires that both terms are not correlated. Condition \textit{\ref{ass_exp}} requires exponential-type tails, which allows to use the large deviation theory for $\frac{1}{T} \sum_{t = 1}^{T} u_{it} u_{jt} - \sigma_{u, ij}$ and $\frac{1}{T} \sum_{t = 1}^{T} f_{jt} u_{it}$. In order to allow for weak serial dependence, we impose a strong mixing condition specified in Condition \textit{\ref{ass_sm}}. Further, Condition \textit{\ref{assum_sig}} implies bounded eigenvalues of the idiosyncratic error covariance matrix, which is a common identifying assumption in the factor model framework.

\begin{assumption}[Sparsity]\label{sparsity_assum}
	\leavevmode
	\begin{enumerate}[label=(\roman*)]
		\item $L_N = \sum_{k = 1}^{r} \sum_{i = 1}^{N}\1\left\{\lambda_{ik} \neq 0 \right\} = \mathcal{O}\left(N\right)$, \label{spars1}
		\item $S_N = \max_{i \leq N} \sum_{j = 1}^{N} \1\left\{\sigma_{u,ij} \neq 0 \right\}$, $S_N^2 d_T = o(1)$ and $S_N \mu = o(1)$\label{spars2},
	\end{enumerate}
	where $\1\{\cdot\}$ defines an indicator function that is equal to one if the boolean argument in braces is true, $d_T = \frac{\log N^{\beta}}{N} + \frac{1}{N^{\beta}}\frac{\log N}{T}$ and $\mu$ denotes the regularization parameter.
\end{assumption}
Assumptions \ref{sparsity_assum} imposes sparsity conditions on $\bLam$ and $\bSigma_{u}$, where condition \textit{\ref{spars1}} defines the quantity $L_N$ that reflects the number of non-zero elements in the factor loadings matrix $\bLam$. As the number of factors $r$ is assumed to be fixed,\textit{ \ref{spars1}} restricts the number of non-zero elements in each column of $\bLam$ to be upper bounded by $N$. At the same time, this assumption allows for a sparse factor loadings matrix with less than $N$ non-zero elements. Condition \textit{\ref{spars2}} specifies $S_N$ that quantifies the maximum number of non-zero elements in each row of $\bSigma_{u}$, following the definition of \citemain{Bickel2008}.  Furthermore, it restricts the number of zero elements in each row of $\Sigma_u$. Hence, it requires that $\Sigma_u$ is not too dense.
\subsection{Consistency of the Sparse Approximate Factor Model Estimator} \label{sec:theo:con_saf}
\begin{theorem}[Consistency of the Sparse Approximate Factor Model Estimator]\label{theo_consistency_lam}
	\leavevmode\\
	Under Assumptions \ref{assum_lam}, \ref{data_assum} and \ref{sparsity_assum} the sparse factor model in \eqref{sparse_lasso} satisfies the following properties, as $T$ and $N \to \infty$ and for $1/2 \leq \beta \leq 1$:
	\begin{align*}
		\frac{1}{N} \frob{\hat{\bLam} - \bLam_0}^2 &= \mathcal{O}_p\left(\mu^2 + \frac{\log N^{\beta}}{N} + \frac{1}{N^{\beta}}\frac{\log N}{T}\right),\\
		\frac{1}{N} \frob{\hat{\bPhi}_{u} -\bPhi_{u 0}}^2 &= \mathcal{O}_p\left(\frac{\log N^{\beta}}{N} + \frac{\log N}{T}\right).
	\end{align*}
	Hence, for $\, \log(N) = o(T)$ and the regularization parameter $\mu = o(1)$, we have:
	\begin{align*}
		\frac{1}{N} \frob{\hat{\bLam} - \bLam_0}^2 &= o_p(1), \quad \frac{1}{N} \frob{\hat{\bPhi}_{u} -\bPhi_{u0}}^2 = o_p(1) \quad \text{and} \quad \spec{\hat{\bfa}_t - \bfa_t} = o_p(1), \forall t \leq T.
	\end{align*}
	For the covariance matrix estimator of the idiosyncratic errors in the second step, specified in Section \ref{error_factor}, we get:
	\begin{align*}
		\spec{\hat{\bSigma}_u^{\tau} - \bSigma_u} = \mathcal{O}_p\left(S_N\sqrt{\mu^2 + \frac{N}{L_N} d_T}\right), \text{ for } d_T = \frac{\log N^{\beta}}{N} + \frac{1}{N^{\beta}}\frac{\log N}{T}.
	\end{align*}
	Hence, for $S_N^2 d_T = o(1)$ and $S_N \mu = o(1)$, this yields:
	$\spec{\hat{\bSigma}_u^{\tau} - \bSigma_u} = o_p(1).$
\end{theorem}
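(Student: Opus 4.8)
The plan is to read \eqref{sparse_lasso} as a penalized quasi-maximum-likelihood problem and to adapt the identification and curvature analysis of \citemain{Bai2016} and \citemain{Bai2016a}, adding two new ingredients: the $l_1$ penalty on $\bLam$, and the weak-factor scaling $N^{\beta}$ of Assumption \ref{assum_lam} in place of the pervasive scaling $N$. Recall from \citemain{Bai2016a} that the quasi-likelihood \eqref{quasi_lik} consistently targets $(\bLam_0,\bPhi_{u0})$ with $\bPhi_{u0}=\mathrm{diag}(\bSigma_{u0})$ even though the true $\bSigma_{u0}$ is non-diagonal, so these are the correct limits; the rotational indeterminacy has been removed by the identification conventions of Section \ref{factor_lasso_model} (ordering columns by sparsity, matching signs). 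Since the objective depends on $\bLam$ only through $\bLam\bLam'$ and is therefore non-convex in $\bLam$, the argument is local: following \citemain{Bai2016} one first shows that $\hat{\bLam}$ and $\hat{\bPhi}_u$ fall in shrinking neighbourhoods of $\bLam_0$ and $\bPhi_{u0}$, and then works on those neighbourhoods. The entry point is the basic inequality implied by optimality of $(\hat{\bLam},\hat{\bPhi}_u)$,
\begin{align*}
	\mathcal{L}(\hat{\bLam},\hat{\bPhi}_u)-\mathcal{L}(\bLam_0,\bPhi_{u0})\;\le\;\mu\sum_{k=1}^{r}\sum_{i=1}^{N}\bigl(|\lambda^{0}_{ik}|-|\hat{\lambda}_{ik}|\bigr),
\end{align*}
with $\mathcal{L}$ the unpenalized objective in \eqref{quasi_lik}, whose left-hand side is then Taylor-expanded around $(\bLam_0,\bPhi_{u0})$.

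On the right-hand side, the cross term is handled in the usual Lasso fashion: with $S=\{(i,k):\lambda^{0}_{ik}\neq 0\}$ and $|S|=L_N=\mathcal{O}(N)$ by Assumption \ref{sparsity_assum}\ref{spars1}, one has $\sum_{k,i}(|\lambda^{0}_{ik}|-|\hat{\lambda}_{ik}|)\le\sum_{(i,k)\in S}|\hat{\lambda}_{ik}-\lambda^{0}_{ik}|\le\sqrt{L_N}\,\frob{\hat{\bLam}-\bLam_0}$. On the left-hand side, the first-order term is the quasi-score at the truth paired with $(\hat{\bLam}-\bLam_0,\hat{\bPhi}_u-\bPhi_{u0})$; its size is governed by the entrywise fluctuation of $\bS_x$ around $\bLam_0\bLam_0'+\bPhi_{u0}$, which is $\mathcal{O}_p(\sqrt{\log N/T})$ by the exponential-tail and strong-mixing conditions of Assumption \ref{data_assum}\ref{ass_exp}--\ref{ass_sm} together with Bernstein-type large-deviation bounds, while a maximal inequality over the $\mathcal{O}(N)$ free loadings and the $N$ diagonal entries of $\bPhi_u$ produces the $\log N^{\beta}/N$ contribution. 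The quadratic (curvature) term is bounded below using the bounded-eigenvalue condition on $\bPhi_{u0}$ (Assumption \ref{data_assum}\ref{assum_sig}) and, crucially, Assumption \ref{assum_lam}, which makes the information in the $\bLam$-direction of order $N^{\beta}$ rather than $N$, with the $\bPhi_u$-block retaining $\mathcal{O}(1)$ curvature per coordinate. Combining the inequality with these bounds and absorbing the cross term $\mu\sqrt{L_N}\,\frob{\hat{\bLam}-\bLam_0}$ against the curvature by AM--GM yields the $\mu^{2}$ term in $\tfrac1N\frob{\hat{\bLam}-\bLam_0}^{2}$ (the ratio $L_N/N=\mathcal{O}(1)$ being absorbed), while balancing the $\sqrt{\log N/T}$ score fluctuation against the $N^{\beta}$ curvature produces $\tfrac{1}{N^{\beta}}\tfrac{\log N}{T}$; the $\bPhi_u$-block, carrying no penalty and $\mathcal{O}(1)$ curvature, gives the sharper rate $\tfrac{\log N^{\beta}}{N}+\tfrac{\log N}{T}$. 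Imposing $\log N=o(T)$ and $\mu=o(1)$ then gives the $o_p(1)$ conclusions for $\hat{\bLam}$ and $\hat{\bPhi}_u$.

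For the factors, substitute $\bx_t=\bLam_0\bfa_t+\bu_t$ into \eqref{gls_factors} and split $\hat{\bfa}_t-\bfa_t$ into a piece driven by $\hat{\bLam}-\bLam_0$ and $\hat{\bPhi}_u-\bPhi_{u0}$ and the stochastic piece $(\hat{\bLam}'\hat{\bPhi}_u^{-1}\hat{\bLam})^{-1}\hat{\bLam}'\hat{\bPhi}_u^{-1}\bu_t$. By Assumption \ref{assum_lam} the matrix $\hat{\bLam}'\hat{\bPhi}_u^{-1}\hat{\bLam}$ is of order $N^{\beta}$, so its inverse contributes $N^{-\beta}$; paired with $\frob{\hat{\bLam}-\bLam_0}=\mathcal{O}_p(\sqrt{N}\sqrt{\mu^{2}+d_T})$ this gives $\mathcal{O}_p(N^{1/2-\beta}\sqrt{\mu^{2}+d_T})=o_p(1)$ precisely because $\beta\ge 1/2$, and the stochastic piece is $\mathcal{O}_p(N^{-\beta/2})=o_p(1)$ by the same large-deviation bounds --- this is where the restriction $\beta\ge1/2$ enters decisively.

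Finally, for the idiosyncratic covariance, write $\bx_t-\hat{\bLam}\hat{\bfa}_t=\bu_t+(\bLam_0\bfa_t-\hat{\bLam}\hat{\bfa}_t)$, propagate the common-component error through $\bS_u=\tfrac1T\sum_{t}(\bx_t-\hat{\bLam}\hat{\bfa}_t)(\bx_t-\hat{\bLam}\hat{\bfa}_t)'$, and show the maximum entrywise error satisfies $\max_{i,j}|\hat{\sigma}_{u,ij}-\sigma_{u,ij}|=\mathcal{O}_p(\sqrt{\mu^{2}+\tfrac{N}{L_N}d_T})$, the factor $N/L_N$ arising when the aggregate Frobenius bound on $\hat{\bLam}-\bLam_0$ is converted into the per-entry control needed for thresholding, using that the loading errors sit on the $\mathcal{O}(L_N)$ active entries. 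Applying the soft-thresholding bound of \citemain{BickelLevina2008} and \citemain{FanLiaoMincheva2013} with threshold $\tau$ of the stated order then gives $\spec{\hat{\bSigma}_u^{\tau}-\bSigma_u}=\mathcal{O}_p(S_N\sqrt{\mu^{2}+\tfrac{N}{L_N}d_T})$, which is $o_p(1)$ under $S_N^{2}d_T=o(1)$ and $S_N\mu=o(1)$ of Assumption \ref{sparsity_assum}\ref{spars2}. I expect the principal obstacle to be the curvature step: obtaining a lower bound of restricted-eigenvalue type on the quasi-likelihood in the $\bLam$-direction, uniform over the relevant neighbourhood, that degrades at exactly rate $N^{\beta}$ --- this is what yields the sharp $N^{-\beta}$ factors and what must be handled carefully to keep the non-diagonal $\bSigma_{u0}$ from contaminating the identification of the diagonal $\bPhi_{u0}$.
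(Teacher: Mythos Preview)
Your proposal captures the correct architecture --- basic inequality, penalty absorption via $\sqrt{L_N}\,\frob{\hat{\bLam}-\bLam_0}$, GLS decomposition for the factors, and thresholding for $\hat{\bSigma}_u^{\tau}$ --- and the last two steps essentially coincide with the paper's argument. The core step for $(\hat{\bLam},\hat{\bPhi}_u)$, however, is organized differently.

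You propose a Taylor expansion of $\mathcal{L}$ around $(\bLam_0,\bPhi_{u0})$ and a score--curvature balancing argument in the spirit of high-dimensional $M$-estimation. The paper instead follows the \citemain{Bai2016} route: it writes the penalized objective as an \emph{exact} sum $Q_1+Q_2+Q_3$, where $Q_2$ is the closed-form quadratic $\tfrac{1}{N}\mathrm{tr}\!\big[(\bLam-\bLam_0)'\bSigma_u^{-1}(I-P_{\bLam})(\bLam-\bLam_0)\big]$ with $P_{\bLam}$ the GLS projection onto the column space of $\bLam$, $Q_1$ collects the $\bSigma_u$ terms plus the penalty difference, and $Q_3$ is everything else. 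The heavy lifting is a uniform bound $\sup_{\Psi_\delta}|Q_3|=\mathcal{O}_p(d_T)$, obtained by Sherman--Morrison--Woodbury expansion of $(\bLam\bLam'+\bSigma_u)^{-1}$ and term-by-term control. This is where the $N^{-\beta}$ enters: it comes from $\pi_{\max}\!\big((\bLam\bLam'+\bSigma_u)^{-1}\big)=\mathcal{O}(N^{-\beta})$ acting on the cross term $\bLam_0\cdot\tfrac{1}{T}\sum_t\bfa_t\bu_t'$, not from the curvature as you suggest. In fact the quadratic $Q_2$ has an $\mathcal{O}(1)$ lower bound per coordinate (it is sandwiched by eigenvalues of $\bSigma_u^{-1}$ and $I-P_{\bLam}$), so your statement that ``the information in the $\bLam$-direction is of order $N^{\beta}$'' misattributes the source of the weak-factor rate. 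Once $Q_1+Q_2\le\mathcal{O}_p(d_T)$ is established, the paper reads off the $\bPhi_u$-rate from $Q_1$ alone and the $\bLam$-rate from $Q_2$ plus the penalty piece, solving a quadratic in $\|\hat{\blam}_i-\blam_{i0}\|$.

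The advantage of the paper's decomposition is that it sidesteps the need to control a third-order Taylor remainder uniformly over a high-dimensional, non-convex parameter set --- the ``restricted-eigenvalue'' obstacle you correctly flag at the end. Your approach could in principle be pushed through, but the exact $Q_1+Q_2+Q_3$ split is tailor-made for this likelihood and delivers the remainder bound directly via matrix inequalities rather than perturbation theory.
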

The proof of Theorem \ref{theo_consistency_lam} is given in the Sections \ref{subsubsec:consistency_lam} and \ref{subsubsec:rate_conv_sigma_u} in the Supplement. 
Under the given regularity conditions this theorem establishes the average consistency in the Frobenius norm of the estimators for the factor loadings matrix and idiosyncratic error covariance matrix based on our sparse factor model. More specifically, $\bLam$ and $\bPhi$ can be estimated consistently, regardless of the diagonality restriction on $\bSigma_{u}$ in the first step of our estimation procedure. Consequently, the factors $\bfa_t$ estimated based on GLS are as well consistent. The lower limit $1/2$ on $\beta$ is a necessary condition to achieve consistency. Intuitively this means that the factors should not be too weak such that there is still a clear distinction between the common and idiosyncratic component. Furthermore, the second step estimator of $\bSigma_u$ can be consistently estimated under the spectral norm. 

\subsection{Consistency of the Covariance Matrix Estimator}
Finally, in this section we take a closer look on the asymptotic properties of the SAF covariance matrix estimator, given in Section \ref{fm_cov_est}. 
%Hereby, we first introduce the relative matrix error norm, suggested by \citemain{FanLiaoMincheva2013}, that is defined as
%\begin{align}
%\spec{A}_{\Sigma} = \frac{1}{\sqrt{N}} \frob{\Sigma^{-1/2} A \Sigma^{-1/2}}
%\end{align}
The following theorem gives the convergence rates of the covariance matrix estimator and of its inverse under different matrix norms. 
\begin{theorem}[Convergence Rates for the Covariance Matrix Estimator]\label{theorem_cov}
	\leavevmode\\	
	Under Assumptions \ref{assum_lam}, \ref{data_assum} and \ref{sparsity_assum}, the covariance matrix estimator based on the SAF model in equation \eqref{est_rfcov} satisfies the following properties, as $T$, $N \to \infty$ and $1/2 \leq \beta \leq 1$:
	\begin{align}
		\frac{1}{N} \spec{\hat{\bSigma}_{\normalfont{\text{SAF}}} - \bSigma}_{\bSigma}^2 &= \mathcal{O}_p\left(\left[\mu^2 + d_T\right]^2 + \left[\frac{N^{\beta}}{N} + \frac{S_N^2}{N}\right]\left[\mu^2 + d_T\right]\right),\label{con_weigh}\\
		\frac{1}{N} \frob{\hat{\bSigma}_{\normalfont{\text{SAF}}} - \bSigma}^2 &= \mathcal{O}_p\left(N \left[\mu^2 + d_T\right]^2 + \left[N^{\beta} + S_N^2\right]\left[\mu^2 + d_T\right]\right), \label{fro_cov_con}\\
		\frac{1}{N}\frob{\hat{\bSigma}_{\normalfont{\text{SAF}}}^{-1} - \bSigma^{-1}}^2 &= \mathcal{O}_p\left(\left[\frac{1}{N^{\beta}} + S_N^2\right]\left[\mu^2 + d_T\right]\right),\label{cov_inv_con}
	\end{align}
	where $d_T = \frac{\log N^{\beta}}{N} + \frac{1}{N^{\beta}}\frac{\log N}{T}$ and $\spec{\bA}_{\bSigma} = \frac{1}{\sqrt{N}} \frob{\bSigma^{-1/2} \bA \bSigma^{-1/2}}$ denotes the weighted quadratic norm introduced by \citemain{Fan2008a}.
\end{theorem}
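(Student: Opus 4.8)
The plan is to split the estimation error into a low-rank ``common'' piece and a sparse ``idiosyncratic'' piece,
\begin{align*}
\hat{\bSigma}_{\text{SAF}}-\bSigma=\bigl(\hat{\bLam}\bS_{\hat F}\hat{\bLam}'-\bLam_0\bLam_0'\bigr)+\bigl(\hat{\bSigma}_u^{\tau}-\bSigma_u\bigr),
\end{align*}
(using $\bSigma_F=\bI_r$, so that $\bSigma=\bLam_0\bLam_0'+\bSigma_u$), and to bound each piece in all three norms from the rates already in Theorem~\ref{theo_consistency_lam} and Assumptions~\ref{assum_lam}--\ref{sparsity_assum}. The building blocks I would use are: $\frob{\hat{\bLam}-\bLam_0}^2=\mathcal{O}_p(N[\mu^2+d_T])$; $\spec{\hat{\bSigma}_u^{\tau}-\bSigma_u}=\mathcal{O}_p\!\bigl(S_N\sqrt{\mu^2+d_T}\bigr)$ (since $L_N=\mathcal{O}(N)$); $\spec{\bLam_0}=\mathcal{O}(N^{\beta/2})$ and $\pi_{\min}(\bLam_0'\bLam_0)\asymp N^{\beta}$ from Assumption~\ref{assum_lam}; $\spec{\bSigma^{-1}}=\mathcal{O}(1)$ and $\spec{\bSigma}=\mathcal{O}(N^{\beta})$ from Assumption~\ref{data_assum}; and $\spec{\hat{\bSigma}_{\text{SAF}}^{-1}}=\mathcal{O}_p(1)$, $\spec{(\hat{\bSigma}_u^{\tau})^{-1}}=\mathcal{O}_p(1)$, which hold because $\hat{\bSigma}_{\text{SAF}}$ and $\hat{\bSigma}_u^{\tau}$ are positive definite with probability tending to one (as noted after \eqref{est_rfcov}). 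I also need an auxiliary rate for the sample covariance of the estimated factors: writing $\bS_{\hat F}-\bI_r=(\bS_{\hat F}-\bS_F)+(\bS_F-\bI_r)$ with $\bS_F=T^{-1}\sum_t\bfa_t\bfa_t'$, the second term is $\mathcal{O}_p(T^{-1/2})$ by Assumption~\ref{data_assum}, and the first is controlled via $T^{-1}\sum_t\spec{\hat{\bfa}_t-\bfa_t}^2=\mathcal{O}_p(\mu^2+d_T)$, obtained by plugging the loadings/idiosyncratic-variance rates of Theorem~\ref{theo_consistency_lam} into the GLS representation \eqref{gls_factors}.

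\emph{Frobenius and weighted norms \eqref{fro_cov_con}, \eqref{con_weigh}.} Set $\bD\coloneqq\hat{\bLam}-\bLam_0$ and expand
\begin{align*}
\hat{\bLam}\bS_{\hat F}\hat{\bLam}'-\bLam_0\bLam_0'=\bD\bS_{\hat F}\bD'+\bD\bS_{\hat F}\bLam_0'+\bLam_0\bS_{\hat F}\bD'+\bLam_0(\bS_{\hat F}-\bI_r)\bLam_0'.
\end{align*}
For \eqref{fro_cov_con} I would bound each term with $\frob{\bA\bM\bB}\le\spec{\bA}\spec{\bM}\frob{\bB}$ and $\spec{\bS_{\hat F}}=\mathcal{O}_p(1)$, getting $\frob{\cdot}^2$ of orders $N^2[\mu^2+d_T]^2$, $N^{1+\beta}[\mu^2+d_T]$ (twice), and $N^{2\beta}\spec{\bS_{\hat F}-\bI_r}^2$ (dominated by the previous, since $\beta\le1$); the idiosyncratic piece gives $\frob{\hat{\bSigma}_u^{\tau}-\bSigma_u}^2\le N\spec{\hat{\bSigma}_u^{\tau}-\bSigma_u}^2=\mathcal{O}_p(NS_N^2[\mu^2+d_T])$, and dividing by $N$ yields \eqref{fro_cov_con}. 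For \eqref{con_weigh} the same decomposition is sandwiched between $\bSigma^{-1/2}$'s; the key simplification is that the large eigenvalues of the common component are damped, since the Woodbury identity $\bLam_0'\bSigma^{-1}\bLam_0=(\bLam_0'\bSigma_u^{-1}\bLam_0)(\bI_r+\bLam_0'\bSigma_u^{-1}\bLam_0)^{-1}\preceq\bI_r$ gives $\spec{\bSigma^{-1/2}\bLam_0}=\mathcal{O}(1)$. Combined with $\spec{\bSigma^{-1}}=\mathcal{O}(1)$, the quadratic term $\bD\bS_{\hat F}\bD'$ contributes (after the $N^{-1}$ of the weighted norm) $[\mu^2+d_T]^2$, the $\bS_{\hat F}-\bI_r$ term is negligible, and the mixed term $\bD\bS_{\hat F}\bLam_0'$, bounded coarsely through $\spec{\bSigma^{-1/2}}\spec{\bLam_0}=\mathcal{O}(N^{\beta/2})$, contributes $N^{-1}N^{\beta}[\mu^2+d_T]$; the idiosyncratic piece gives $N^{-1}S_N^2[\mu^2+d_T]$ because $\frob{\bSigma^{-1/2}(\hat{\bSigma}_u^{\tau}-\bSigma_u)\bSigma^{-1/2}}^2\le N\spec{\bSigma^{-1}}^2\spec{\hat{\bSigma}_u^{\tau}-\bSigma_u}^2$. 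Summing gives \eqref{con_weigh}.

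\emph{Inverse \eqref{cov_inv_con}.} Here the naive identity $\hat{\bSigma}_{\text{SAF}}^{-1}-\bSigma^{-1}=-\hat{\bSigma}_{\text{SAF}}^{-1}(\hat{\bSigma}_{\text{SAF}}-\bSigma)\bSigma^{-1}$ is too lossy, because $\frob{\hat{\bSigma}_{\text{SAF}}-\bSigma}$ already carries the factors $N$ and $N^{\beta}$. Instead I would write both inverses by Sherman--Morrison--Woodbury,
\begin{align*}
\bSigma^{-1}&=\bSigma_u^{-1}-\bSigma_u^{-1}\bLam_0\bigl(\bI_r+\bLam_0'\bSigma_u^{-1}\bLam_0\bigr)^{-1}\bLam_0'\bSigma_u^{-1},\\
\hat{\bSigma}_{\text{SAF}}^{-1}&=(\hat{\bSigma}_u^{\tau})^{-1}-(\hat{\bSigma}_u^{\tau})^{-1}\hat{\bLam}\bigl(\bS_{\hat F}^{-1}+\hat{\bLam}'(\hat{\bSigma}_u^{\tau})^{-1}\hat{\bLam}\bigr)^{-1}\hat{\bLam}'(\hat{\bSigma}_u^{\tau})^{-1},
\end{align*}
and compare term by term. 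The ``diagonal'' difference $(\hat{\bSigma}_u^{\tau})^{-1}-\bSigma_u^{-1}$ is $\mathcal{O}_p\!\bigl(S_N\sqrt{\mu^2+d_T}\bigr)$ in spectral norm (from $\spec{\hat{\bSigma}_u^{\tau}-\bSigma_u}$ and $\spec{(\hat{\bSigma}_u^{\tau})^{-1}}=\mathcal{O}_p(1)$), which after $N^{-1}\frob{\cdot}^2\asymp\spec{\cdot}^2$–type bounds delivers the $S_N^2[\mu^2+d_T]$ piece. The rank-$r$ correction terms carry the factor $\bigl(\bI_r+\bLam'\bSigma_u^{-1}\bLam\bigr)^{-1}$, whose spectral norm is $\mathcal{O}(N^{-\beta})$ because $\pi_{\min}(\bLam_0'\bLam_0)\asymp N^{\beta}$ (and likewise for $\hat{\bLam}$, see below); propagating $\frob{\bD}$ and the factor error through these terms therefore produces only $N^{-\beta}[\mu^2+d_T]$, and adding the two gives \eqref{cov_inv_con}.

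\emph{Main obstacle.} The delicate part is \eqref{cov_inv_con}: carrying out the Woodbury comparison cleanly, verifying that the rank-$r$ corrections are genuinely damped by $N^{-\beta}$ uniformly over $1/2\le\beta\le1$, and keeping $\bigl(\bS_{\hat F}^{-1}+\hat{\bLam}'(\hat{\bSigma}_u^{\tau})^{-1}\hat{\bLam}\bigr)^{-1}$ under control, which requires the lower bound $\pi_{\min}(\hat{\bLam}'\hat{\bLam})\asymp N^{\beta}$ --- i.e.\ that $\hat{\bLam}$ inherits the weak-factor strength from $\bLam_0$; this follows from $\frob{\bD}^2=\mathcal{O}_p(N[\mu^2+d_T])=o_p(N^{\beta})$ under the maintained rate conditions, but the interaction with the random $\bS_{\hat F}$ and $\hat{\bSigma}_u^{\tau}$ needs care. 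A secondary difficulty is establishing the auxiliary rate for $T^{-1}\sum_t\spec{\hat{\bfa}_t-\bfa_t}^2$ (hence for $\spec{\bS_{\hat F}-\bI_r}$) from the GLS formula, and checking that, across the whole range of $\beta$, the many cross-terms in the common-component expansion are all dominated by the ones displayed in \eqref{con_weigh}--\eqref{fro_cov_con}.
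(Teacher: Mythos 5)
Your strategy coincides with the paper's own proof: the same split into a common and an idiosyncratic piece, the same expansion of $\hat{\bLam}\hat{\bLam}'-\bLam_0\bLam_0'$ in terms of $\bC=\hat{\bLam}-\bLam_0$ with bounds of the type $\frob{\bLam_0\bC'}^2\le\spec{\bLam_0}^2\frob{\bC}^2=\mathcal{O}_p\left(N^{\beta}\frob{\bC}^2\right)$, and the same Sherman--Morrison--Woodbury term-by-term comparison for the inverse (the paper's six terms $L_1,\dots,L_6$, with $\spec{\left(\bI_r+\hat{\bLam}'(\hat{\bSigma}_u^{\tau})^{-1}\hat{\bLam}\right)^{-1}}=\mathcal{O}_p\left(N^{-\beta}\right)$ supplying exactly the $N^{-\beta}$ damping you describe, and the same need for $\pi_{\min}(\hat{\bLam}'\hat{\bLam})\asymp N^{\beta}$).

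The one place where your write-up has a genuine gap is the treatment of $\bS_{\hat F}$. The paper's proof silently replaces $\bS_{\hat F}$ by $\bI_r$ and works with $\hat{\bLam}\hat{\bLam}'+\hat{\bSigma}_u^{\tau}$ throughout; this is backed by the in-sample identification normalization $\frac{1}{T}\sum_{t=1}^T\bfa_t\bfa_t'=\bI_r$ used in the supplement, under which $\bS_F=\bI_r$ exactly and $\spec{\bS_{\hat F}-\bI_r}$ is controlled entirely by the factor estimation error. Your decomposition $\bS_{\hat F}-\bI_r=(\bS_{\hat F}-\bS_F)+(\bS_F-\bI_r)$ with $\bS_F-\bI_r=\mathcal{O}_p(T^{-1/2})$ produces, in the Frobenius bound, a contribution of order $N^{2\beta-1}/T$ from $\bLam_0(\bS_F-\bI_r)\bLam_0'$, and this is \emph{not} dominated by $N^{\beta}d_T\ge \log N/T$ once $\beta>1/2$; so the claim that this term is ``dominated by the previous, since $\beta\le 1$'' fails on that piece. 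Invoking the exact in-sample normalization (or, equivalently, proving the theorem for $\hat{\bLam}\hat{\bLam}'$ as the paper does) closes the gap; the rest of your argument, including the auxiliary rate for $\frac{1}{T}\sum_t\spec{\hat{\bfa}_t-\bfa_t}^2$ and the handling of the rank-$r$ Woodbury corrections, matches the paper's.
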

The proof of Theorem \ref{theorem_cov} is given in Section \ref{subsubsec:riskbounds} in the Supplement. Similar as for Theorem \ref{theo_consistency_lam}, we assume that the regularization parameter $\mu = o(1)$ and $\log(N) = o(T)$. Equation \eqref{con_weigh} in Theorem \ref{theorem_cov} shows that the covariance matrix estimator based on the sparse factor model in equation \eqref{est_rfcov} is consistent if we consider the weighted quadratic norm for the entire set of possible values for $\beta$.

Generally, convergence under the average Frobenius norm is hard to achieve because of the too fast diverging eigenvalues of the common component (see \citemain{FanLiaoMincheva2013}). However, according to equation \eqref{fro_cov_con} our SAF covariance matrix estimator is consistent, if $\mu = o\left(N^{-\beta/2}\right)$ and $1/2 \leq \beta \lessapprox 9/10$. Hence, the relaxation of the pervasiveness assumption in the standard approximate factor model to allow for weak factors leads to convergence of the covariance estimator under the average Frobenius norm. The upper bound for $\beta$ follows from the expression $\frac{N^{\beta} \log N^{\beta}}{N}$ in Equation \eqref{fro_cov_con} of Theorem \ref{theorem_cov}.\footnote{A closed form solution for the upper bound of $\beta$ is not feasible, hence we numerically approximate the maximum value of $\beta$ in the neighbourhood of one such that the expression $\frac{N^{\beta} \log N^{\beta}}{N}$ converges to zero.}  Further, Equation \eqref{cov_inv_con} of Theorem \ref{theorem_cov} shows that the inverse of $\bSigma_{\normalfont{\text{SAF}}}$ is consistently estimated under the average Frobenius norm.

%%%%%%%%%%%%%%%%%%%%%%%%%%%%%%%%%%%%%%%%%%%%%%%%%%%%%%%%%%%%%%%%%%%%%%%%%%%%%%%%%%%%
\section{Monte Carlo Evidence}\label{sec:sim}
In the following, we present Monte Carlo evidence on the finite sample properties of our new covariance estimator. In particular, we  focus on the accuracy of the covariance matrix estimates depending on the dimensionality as well as on the strength of correlations in the true covariance matrix to be estimated. The simulation results for the SAF estimator are compared to the ones obtained from eight competing estimators that are popular in the literature.

\subsection{Monte Carlo Designs}
For our Monte Carlo experiments we use three different designs of the true covariance matrix $\bSigma$. In the first case, we consider the uniform covariance matrix design used in \citemain{AbadirDistasoZikes2014}, which takes the following form:
\begin{align}
	\sigma^\text{u}_{ii} = 1 \text{ and } \sigma^\text{u}_{ij} = \eta \;\mathcal{U}_{(0,1)}, \; \text{ for } i \neq j, \label{first_d}
\end{align}
where $\mathcal{U}_{(0,1)}$ denotes a standard uniform random variable, and we set $\eta \in \{0.025, 0.05, 0.075\}$. In this setting, $\eta$ controls for the correlations among the variables,
where an increase in $\eta$ amplifies the strength of the dependencies among the covariates.

For the second design, we use the sparse covariance matrix suggested by \citemain{BienTibshiraniothers2011}, which contains zero entries for the off-diagonals with a certain probability.
More specifically, the $ij$-th element of the covariance matrix $\sigma_{ij} = \sigma_{ji}$ is assigned to be non-zero with probability $p$, where $p \in \{0.05, 0.075, 0.1\}$. Similar as in the uniform design, the diagonal elements are set to 1. The non-zero off-diagonal elements are independently drawn from the uniform distribution $\mathcal{U}_{(0,0.2)}$.

Finally, the last design we consider is based on a generalized spiked covariance model as in \citemain{BaiYao2012}. More precisely, we use the following definition:
\begin{align}
	\bSigma_s =  \text{diag}\left(r_1, r_2, r_3, r_4, 0, \cdots, 0 \right) + \bSigma_u, \label{spiked_form}
\end{align}
where $r_1 - r_4$ correspond to four spiked eigenvalues and $\bSigma_u$ is a covariance matrix based on the uniform design in equation \eqref{first_d}. As this covariance matrix design complies with the approximate factor model framework, estimation approaches that are based on a factor model specification might benefit from this setting. More precisely, the first part of equation \eqref{spiked_form} is in accordance with the eigenvalue distribution of the common component in an AFM with four factors, whereas the second part in \eqref{spiked_form} corresponds to the covariance matrix of the idiosyncratic component and allows for weak correlations among the errors. In the simulation, we consider the following specification for the spiked eigenvalues: $r_1 = r_2 = N, r_3 = N^{0.8}, r_4 = N^{0.5}$. This design is in line with the weak factor framework, where the first two factors are strong and the last two correspond to weak factors.%\footnote{The selection criterion by \citemain{Onatski2010} is proved to be consistent in the spiked covariance model framework according to Chapter 11.6.1. in \citemain{YaoZhengBai2015}.}

For all three covariance matrix designs, we draw a time independent random data series $\bX$ from a multivariate normal distribution with zero population mean.\footnote{The same Monte Carlo experiments are carried out based on data from a multivariate t-distribution with five degrees of freedom. The results are rather similar to the multivariate normal setting and can be obtained upon request.}
The time dimension $T$ is set to 60, which relates to a dataset with 5 years of monthly data. The number of replications is 1000.
Further, we consider several dimensions for $\bX$ and set $N \in \{30, 50, 100, 200\}$.  
%Hence, the focus of this simulation is of two kinds. The first lies in the high-dimensional framework, where $N$ is larger than $T$ and the sample covariance estimator is ill-conditioned. %Additionally, we also check the performance of the estimators in small samples.\\
As goodness of fit criterion for the difference between the true and the estimated covariance matrix, we use the Frobenius norm.

\subsection{Alternative covariance estimation strategies} \label{sub_sec:cov_est}
Table \ref{tab:est_models} gives an overview of the methods for the covariance matrix estimation that are compared in our Monte Carlo experiments. A more detailed description of the alternative strategies is provided in Section \ref{sec:A_methods} in the Supplement.

\begin{table}[!htb]
	\footnotesize
	\setlength{\extrarowheight}{1pt}
	\centering
	\caption{Considered Models}
	%	\resizebox{\textwidth}{!}{
	\begin{tabular}{p{1.5cm}p{12cm}}
		\hline
		\hline
		1/N   & Equally Weighted Portfolio \bigstrut[t]\\
		Sample & Sample covariance matrix estimator \bigstrut[b]\\
		\hline
		\multicolumn{2}{c}{\textbf{Factor Models}} \bigstrut\\
		\hline
		SAF   & Sparse Approximate Factor Model \\
		POET  & POET covariance matrix estimator by \par \citemain{FanLiaoMincheva2013} \bigstrut[t]\\
		DFM   & Dynamic Factor Model estimated as in \par \citemain{Doz2011} \\
		SIM   & Single Index Model by \citemain{Sharpe1963}\\
		FF3F  & 3-Factor Model by \citemain{Fama1993} \bigstrut[b]\\
		\hline
		\multicolumn{2}{c}{\textbf{Covariance Matrix Shrinkage Strategies}} \bigstrut\\
		\hline
		LW    & The linear shrinkage estimator by \citemain{Ledoit2003} \bigstrut[t]\\
		KDM   & The inverse covariance matrix estimator by \par \citemain{Kourtis2012}\\
		ADZ   & The design-free covariance matrix estimator by \par \citemain{AbadirDistasoZikes2014}\\
		LW-NL & The non-linear shrinkage estimator by \citemain{LedoitWolf2018} \bigstrut[b]\\
		\hline
		\multicolumn{2}{c}{\textbf{Sparse Covariance Estimators}} \bigstrut\\
		\hline
		ST    & The soft-thresholding estimator as in \par \citemain{Rothman2009} \bigstrut[t]\\
		BT    &  The sparse covariance matrix estimator by \par \citemain{BienTibshiraniothers2011} \bigstrut[b]\\
		\hline
		\hline
	\end{tabular}
	\label{tab:est_models}
\end{table}

Since the observed factor models SIM and FF3F require additional, observable economic factors, they will only be considered in our empirical application to portfolio choice.
%%%%%%%%%%%%%%%%%%%%%%%%%%%%%%%%%%%%%%%%%%%%%%%%%%%%%%%%%%%%%%%%%%%%%%%%%%%%%%%%%%%%%%%%%%%%%%%%%%%
\subsection{Simulation results}
Table \ref{tab:sim_uni_norm} below contains the Monte Carlo results for the uniform design of the true covariance matrix,  
Table \ref{tab:sim_sparse_norm} gives the results based on the sparse covariance matrix design, while Table \ref{tab:sim_spiked_norm} shows the results for the covariance matrix design with spiked eigenvalues. Interestingly, we find a very 
similar and  clear picture. In terms of the goodness of fit, our sparse approximate factor model approach provides the smallest Frobenius norm, i.e. the SAF fits the true covariance matrix best. These results hold for all of the three rather different designs, all dimensions and degrees of correlation between the variables. Note that the advantage of the SAF model in accurately estimating the true covariance matrix is even more pronounced when $N$ increases, especially for the two high dimensional settings with $N= 100, 200$ and $T = 60$. Concerning the alternative approaches, ST, which is rather similar to our approach, performs second best in most of the scenarios.
%%%%%%%%%%%%%%%%%%%%%%%%%
\begin{table}[htb] 	
	\centering
	\scriptsize
	\caption{Simulation results - Uniform Covariance Matrix Design}
	%\resizebox{0.7\textheight}{!}{
	%	\begin{adjustbox}{height=0.26\textheight}
	\begin{threeparttable}
		\begin{tabular}{clccc|clccc}
			\hline
			\hline
			\multirow{2}[4]{*}{$N$} & \multicolumn{1}{c}{\multirow{2}[4]{*}{Model}} & \multicolumn{3}{c|}{$\eta$} & \multirow{2}[4]{*}{$N$} & \multicolumn{1}{c}{\multirow{2}[4]{*}{Model}} & \multicolumn{3}{c}{$\eta$} \bigstrut\\
			\cline{3-5}\cline{8-10}          & \multicolumn{1}{c}{} & 0.025 & 0.05  & 0.075 &       & \multicolumn{1}{c}{} & 0.025 & 0.05  & 0.075 \bigstrut\\
			\hline
			\multirow{9}[2]{*}{30} & Sample & 14.78 & 14.80 & 14.82 & \multirow{9}[2]{*}{100} & Sample & 168.25 & 167.80 & 167.00 \bigstrut[t]\\
			& \textbf{SAF} & \textbf{0.89} & \textbf{1.17} & \textbf{1.44} &       & \textbf{SAF} & \textbf{2.45} & \textbf{8.47} & \textbf{18.54} \\
			& POET  & 6.92  & 7.24  & 7.67  &       & POET  & 25.72 & 28.15 & 32.02 \\
			& DFM   & 6.41  & 6.59  & 6.84  &       & DFM   & 25.53 & 27.94 & 31.74 \\
			& LW    & 3.67  & 3.88  & 4.26  &       & LW    & 15.19 & 20.52 & 25.79 \\
			& ADZ   & 2.45  & 2.86  & 3.13  &       & ADZ   & 5.56  & 12.11 & 23.54 \\
			& LW-NL & 0.96  & 1.31  & 1.55  &       & LW-NL & 14.10 & 24.05 & 33.85 \\
			& ST    & 1.96  & 2.22  & 2.46  &       & ST    & 5.36  & 11.60 & 21.82 \\
			& BT    & 1.83  & 2.38  & 3.29  &       & BT    & 11.60 & 17.64 & 28.46 \bigstrut[b]\\
			\hline
			\multirow{9}[2]{*}{50} & Sample & 41.51 & 41.35 & 41.33 & \multirow{9}[2]{*}{200} & Sample & 674.22 & 672.98 & 671.15 \bigstrut[t]\\
			& \textbf{SAF} & \textbf{0.87} & \textbf{2.33} & \textbf{4.75} &       & \textbf{SAF} & \textbf{8.96} & \textbf{33.32} & \textbf{63.74} \\
			& POET  & 11.16 & 11.87 & 12.09 &       & POET  & 64.55 & 78.16 & 98.27 \\
			& DFM   & 11.00 & 11.69 & 12.03 &       & DFM   & 64.20 & 77.51 & 98.85 \\
			& LW    & 5.96  & 7.31  & 8.75  &       & LW    & 44.41 & 66.45 & 89.96 \\
			& ADZ   & 2.91  & 4.36  & 6.40  &       & ADZ   & 18.49 & 39.61 & 68.02 \\
			& LW-NL & 1.53  & 2.88  & 5.10  &       & LW-NL & 58.09 & 134.27 & 126.48 \\
			& ST    & 2.15  & 3.72  & 6.24  &       & ST    & 14.87 & 39.81 & 81.27 \\
			& BT    & 4.29  & 5.85  & 8.48  &       & BT    & 35.32 & 59.57 & 100.65 \bigstrut[b]\\
			\hline
			\hline
		\end{tabular}%
		\vspace*{-0.5cm}
		\begin{tablenotes}
			\footnotesize
			\singlespacing
			\item \leavevmode\kern-\scriptspace\kern-\labelsep 
			Note: The table gives the mean goodness of fit in terms of the Frobenius norm.	
		\end{tablenotes}
	\end{threeparttable}
	%\end{adjustbox}
	\label{tab:sim_uni_norm}%
\end{table}%

%%%%%%%%%%%%%%%%%%%%%%%%%%%%%%%%%%% 
However, for small samples ($N = 30, 50$) it is outperformed by LW-NL, for the uniform and sparse covariance matrix designs. Furthermore, for the uniform covariance matrix design for high dimensions and very strong dependencies ($N= 100,200, \eta = 0.075$), ADZ performs slightly better than ST. 
It is also interesting to note that direct $l_1$-norm penalization of the covariance matrix as suggested by 
\citemain{ BienTibshiraniothers2011} does not do nearly as well as our approach, which profits from sparsity in the factor loadings matrix and thresholding of the covariance matrix of the idiosyncratic component. Moreover, the results for the POET estimator by \citemain{FanLiaoMincheva2013} that allows only for sparsity in the idiosyncratic error covariance matrix indicate that allowing for sparsity in the factor loadings matrix leads to a considerable improvement in the estimation accuracy.

%%%%%%%%%%%%%%%%%%%%%%%%%

\begin{table}[htb]
	\centering
	\scriptsize
	\caption{Simulation results - Sparse Covariance Matrix Design}
	%\resizebox{0.7\textheight}{!}{
	%\begin{adjustbox}{height=0.17\textheight}
	\begin{threeparttable}
		\begin{tabular}{clccc|clccc}
			\hline
			\hline
			\multirow{2}[4]{*}{$N$} & \multicolumn{1}{c}{\multirow{2}[4]{*}{Model}} & \multicolumn{3}{c|}{$p$} & \multirow{2}[4]{*}{$N$} & \multicolumn{1}{c}{\multirow{2}[4]{*}{Model}} & \multicolumn{3}{c}{$p$} \bigstrut\\
			\cline{3-5}\cline{8-10}          & \multicolumn{1}{c}{} & 0.05  & 0.075 & 0.1   &       & \multicolumn{1}{c}{} & 0.05  & 0.075 & 0.1 \bigstrut\\
			\hline
			\multirow{9}[2]{*}{30} & Sample & 14.78 & 14.80 & 14.82 & \multirow{9}[2]{*}{100} & Sample & 168.06 & 167.84 & 167.08 \bigstrut[t]\\
			& \textbf{SAF} & \textbf{0.89} & \textbf{1.17} & \textbf{1.44} &       & \textbf{SAF} & \textbf{6.93} & \textbf{10.15} & \textbf{13.36} \\
			& POET  & 6.92  & 7.24  & 7.67  &       & POET  & 31.16 & 34.56 & 37.90 \\
			& DFM   & 6.78  & 7.10  & 7.32  &       & DFM   & 30.99 & 34.54 & 37.63 \\
			& LW    & 3.67  & 3.88  & 4.26  &       & LW    & 18.19 & 21.56 & 24.20 \\
			& ADZ   & 2.45  & 2.86  & 3.13  &       & ADZ   & 10.37 & 13.71 & 17.36 \\
			& LW-NL & 0.96  & 1.31  & 1.55  &       & LW-NL & 19.64 & 20.56 & 23.42 \\
			& ST    & 1.96  & 2.22  & 2.46  &       & ST    & 9.89  & 13.23 & 16.46 \\
			& BT    & 2.67  & 3.01  & 3.29  &       & BT    & 16.19 & 19.30 & 23.04 \bigstrut[b]\\
			\hline
			\multirow{9}[2]{*}{50} & Sample & 41.48 & 41.40 & 41.47 & \multirow{9}[2]{*}{200} & Sample & 673.93 & 673.22 & 672.50 \bigstrut[t]\\
			& \textbf{SAF} & \textbf{1.92} & \textbf{2.73} & \textbf{3.54} &       & \textbf{SAF} & \textbf{26.75} & \textbf{39.72} & \textbf{51.95} \\
			& POET  & 12.37 & 13.39 & 14.32 &       & POET  & 86.87 & 101.04 & 114.01 \\
			& DFM   & 12.24 & 13.15 & 13.86 &       & DFM   & 86.95 & 101.35 & 114.55 \\
			& LW    & 6.65  & 7.54  & 8.32  &       & LW    & 56.63 & 70.98 & 82.47 \\
			& ADZ   & 4.02  & 4.82  & 5.56  &       & ADZ   & 36.57 & 49.23 & 61.00 \\
			& LW-NL & 2.87  & 3.71  & 4.76  &       & LW-NL & 39.23 & 53.53 & 72.73 \\
			& ST    & 3.24  & 4.11  & 4.94  &       & ST    & 33.10 & 46.32 & 58.93 \\
			& BT    & 5.37  & 6.23  & 7.11  &       & BT    & 52.26 & 65.41 & 77.36 \bigstrut[b]\\
			\hline
			\hline
		\end{tabular}%
		\vspace*{-0.5cm}
		\begin{tablenotes}
			\footnotesize
			\singlespacing
			\item \leavevmode\kern-\scriptspace\kern-\labelsep 
			Note: The table gives the mean goodness of fit in terms of the Frobenius norm.	
		\end{tablenotes}
	\end{threeparttable}
	%\end{adjustbox}
	\label{tab:sim_sparse_norm}%
\end{table}%

%%%%%%%%%%%%%%%%%%%%%%%%%

\begin{table}[htb]
	\centering
	\scriptsize
	\caption{Simulation results - Spiked Eigenvalues Covariance Matrix Design}
	%\resizebox{0.8\textheight}{!}{
	%\begin{adjustbox}{height=0.17\textheight}
	\begin{threeparttable}
		\begin{tabular}{clccc|clccc}
			\hline
			\hline
			\multirow{2}[4]{*}{$N$} & \multicolumn{1}{c}{\multirow{2}[4]{*}{Model}} & \multicolumn{3}{c|}{$\eta$} & \multirow{2}[4]{*}{$N$} & \multicolumn{1}{c}{\multirow{2}[4]{*}{Model}} & \multicolumn{3}{c}{$\eta$} \bigstrut\\
			\cline{3-5}\cline{8-10}          & \multicolumn{1}{c}{} & 0.025 & 0.05  & 0.075 &       & \multicolumn{1}{c}{} & 0.025 & 0.05  & 0.075 \bigstrut\\
			\hline
			\multirow{9}[2]{*}{30} & Sample & 357.67 & 356.99 & 364.88 & \multirow{9}[2]{*}{100} & Sample & 3685.11 & 3764.92 & 3648.53 \bigstrut[t]\\
			& \textbf{SAF} & \textbf{81.47} & \textbf{76.18} & \textbf{89.71} &       & \textbf{SAF} & \textbf{728.35} & \textbf{811.97} & \textbf{703.69} \\
			& POET  & 348.22 & 345.81 & 355.75 &       & POET  & 3217.32 & 3301.45 & 3200.86 \\
			& DFM   & 374.08 & 362.31 & 370.60 &       & DFM   & 3859.43 & 3782.30 & 3591.11 \\
			& LW    & 235.92 & 225.21 & 243.05 &       & LW    & 1943.57 & 1884.93 & 1591.67 \\
			& ADZ   & 363.31 & 335.58 & 359.26 &       & ADZ   & 37344.83 & 37130.45 & 37395.58 \\
			& LW-NL & 433.97 & 422.25 & 437.14 &       & LW-NL & 9551.24 & 9636.64 & 9581.48 \\
			& ST    & 121.28 & 120.42 & 131.18 &       & ST    & 894.49 & 1004.33 & 892.38 \\
			& BT    & 100.96 & 95.58 & 108.89 &       & BT    & 1149.95 & 1085.80 & 1073.54 \bigstrut[b]\\
			\hline
			\multirow{9}[2]{*}{50} & Sample & 964.06 & 973.54 & 972.58 & \multirow{9}[2]{*}{200} & Sample & 14918.21 & 14629.11 & 13959.27 \bigstrut[t]\\
			& \textbf{SAF} & \textbf{205.87} & \textbf{212.06} & \textbf{210.39} &       & \textbf{SAF} & \textbf{3172.47} & \textbf{3143.15} & \textbf{2687.99} \\
			& POET  & 886.14 & 896.26 & 897.94 &       & POET  & 12729.59 & 12471.16 & 11878.79 \\
			& DFM   & 937.97 & 983.28 & 925.26 &       & DFM   & 14816.18 & 15167.83 & 13272.48 \\
			& LW    & 598.89 & 563.39 & 543.15 &       & LW    & 7039.80 & 5968.86 & 5044.31 \\
			& ADZ   & 9257.10 & 9300.04 & 9273.15 &       & ADZ   & 12587.36 & 12504.54 & 12258.25 \\
			& LW-NL & 1605.19 & 1573.07 & 1604.67 &       & LW-NL & 52593.67 & 52959.12 & 53066.79 \\
			& ST    & 268.97 & 286.27 & 294.92 &       & ST    & 4208.99 & 4141.47 & 3595.39 \\
			& BT    & 241.11 & 245.81 & 241.74 &       & BT    & 6402.21 & 6317.62 & 6306.77 \bigstrut[b]\\
			\hline
			\hline
		\end{tabular}%
		\vspace*{-0.5cm}
		\begin{tablenotes}
			\footnotesize
			\singlespacing
			\item \leavevmode\kern-\scriptspace\kern-\labelsep 
			Note: The table gives the mean goodness of fit in terms of the Frobenius norm.	
		\end{tablenotes}
	\end{threeparttable}
	%\end{adjustbox}
	\label{tab:sim_spiked_norm}%
\end{table}%
\FloatBarrier
%%%%%%%%%%%%%%%%%%%%%%%%%%%%%%%%%%%%%%%%%%%%%%%%%%%%%%%%%%%%%%%%%%%%%%%%%%%%%%%%%
\section{An Application to Portfolio Choice}\label{sec:pf}
Empirical portfolio models, particularly when applied to large asset spaces, suffer from a high degree of instability. 
The estimation of $N $ mean and $N(N + 1)/2$  variance-covariance parameters yields extremely noisy estimates of  portfolio
weights with large standard errors. It is well-documented that these estimated portfolios show poor out-of-sample
performance, extreme short positions and no diversification (e.g. \citemain{Jobson/Korkie1980} and \citemain{Michaud1989}).
In order to mitigate these shortcomings and to improve portfolio estimates against extreme
estimation noise, a range of alternative strategies have been proposed including the shrinkage estimation of the covariance matrix of asset returns
(\citemain{Ledoit2003}; \citemain{LedoitWolf2018} and \citemain{Kourtis2012}). 

In the following, we investigate to what extent the SAF model can be used to obtain robustified
estimates of high-dimensional covariance matrices of asset returns as input for empirical portfolio  models.
In an out-of-sample portfolio forecasting experiment, we compare the performance of the global minimum variance portfolio (GMVP) strategy based on 
a covariance matrix estimated by our sparse factor model to popular alternative portfolio strategies with regularized covariance estimators. 
As in many other studies, we restrict our analysis to the GMVP, because its vector of portfolio weights, 
$ \bomega = \frac{\bSigma^{-1}  \mathbf{1}_N}{\mathbf{1}_N' \bSigma^{-1} \mathbf{1}_N}$, is solely a function of the covariance matrix of the asset returns. 
Thus, for estimating the GMVP  the mean vector of asset returns is redundant and its empirical performance only depends on the quality of 
the covariance matrix estimator.

In a first step, we theoretically analyze the properties of the GMVP weights based on the SAF estimator. The results are summarized in the following proposition:  
\begin{prop}\label{prop_eig_cov}
	Based on the general definition of the covariance matrix of an approximate factor model given in Section \ref{fm_cov_est}, we obtain:
	\begin{align*}
		\sum_{k = 1}^r \pi_k\left(\bLam\bLam'\right) &= \trace{\bLam\bLam'} = \sum_{i = 1}^{N} \sum_{k = 1}^r \lambda_{ik}^2,\\
		\sum_{i = 1}^{N} \pi_i\left(\bSigma^{-1}\right) &\leq \sum_{i = 1}^{N} \pi_i\left(\bI_N\right) - \frac{\sum_{i = 1}^{N}\pi_i\left(\bLam\bLam'\right)}{N + \sum_{i = 1}^{N}\pi_i\left(\bLam\bLam'\right)}.
	\end{align*}
\end{prop}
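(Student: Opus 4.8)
The plan is to split the two displayed statements: the first is a plain matrix identity, and the second an eigenvalue inequality that collapses to a one–variable estimate.

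For the first line I would note that $\bLam\bLam'$ is an $N\times N$ positive semidefinite matrix of rank at most $r$, so it has at most $r$ nonzero eigenvalues and $\sum_{k=1}^{r}\pi_k(\bLam\bLam')$ is just the sum of all of its eigenvalues, which equals $\trace{\bLam\bLam'}$. Cyclicity of the trace then gives $\trace{\bLam\bLam'}=\trace{\bLam'\bLam}=\sum_{i=1}^{N}\sum_{k=1}^{r}\lambda_{ik}^{2}$. This is routine.

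For the second line I would use the approximate-factor covariance $\bSigma=\bLam\bSigma_F\bLam'+\bSigma_u$ with $\bSigma_F=\bI_r$ and the idiosyncratic part taken to be $\bI_N$ (or, more generally, $\pi_{\min}(\bSigma_u)\ge 1$), so that $\pi_i(\bSigma^{-1})\le\bigl(\pi_i(\bLam\bLam')+1\bigr)^{-1}$ — an equality when $\bSigma_u=\bI_N$, and otherwise a consequence of Weyl's inequality $\pi_i(\bSigma)\ge\pi_i(\bLam\bLam')+\pi_{\min}(\bSigma_u)$. Writing $S:=\sum_{i=1}^{N}\pi_i(\bLam\bLam')=\trace{\bLam\bLam'}$ and using $\tfrac{1}{x+1}=1-\tfrac{x}{x+1}$,
\[
\sum_{i=1}^{N}\pi_i(\bSigma^{-1})\le\sum_{i=1}^{N}\frac{1}{\pi_i(\bLam\bLam')+1}=N-\sum_{i=1}^{N}\frac{\pi_i(\bLam\bLam')}{\pi_i(\bLam\bLam')+1}.
\]
The crux is the lower bound on the last sum: since $\bLam\bLam'\succeq0$, each eigenvalue obeys $\pi_i(\bLam\bLam')\le S$, hence $\pi_i(\bLam\bLam')+1\le S+N$ (using $N\ge 1$), so $\tfrac{\pi_i(\bLam\bLam')}{\pi_i(\bLam\bLam')+1}\ge\tfrac{\pi_i(\bLam\bLam')}{N+S}$; summing over $i$ gives $\sum_i\tfrac{\pi_i(\bLam\bLam')}{\pi_i(\bLam\bLam')+1}\ge\tfrac{S}{N+S}$. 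Substituting this back and writing $N=\sum_{i=1}^{N}\pi_i(\bI_N)$ yields the claimed bound. An equivalent route is to apply the Woodbury identity to obtain $\trace{\bSigma^{-1}}=N-\trace{(\bI_r+\bLam'\bLam)^{-1}\bLam'\bLam}$ and then use subadditivity of the concave map $x\mapsto x/(1+x)$ on the (nonzero) eigenvalues of $\bLam'\bLam$, which are those of $\bLam\bLam'$.

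I do not expect a genuine obstacle; the only thing to get right is the lower bound on $\sum_i\pi_i(\bLam\bLam')/(\pi_i(\bLam\bLam')+1)$ — realizing that a pointwise comparison (or subadditivity of $x/(1+x)$) is enough, and that the resulting bound remains informative in the weak-factor regime, where $S$ may be only of order $N^{\beta}$ with $\beta$ as small as $1/2$. Everything else is bookkeeping with traces and eigenvalues.
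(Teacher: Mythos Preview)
Your proposal is correct and follows essentially the same approach as the paper: the paper also specializes to $\bSigma_u=\bI_N$, uses the Woodbury identity to write $\bSigma^{-1}=\bI_N-\bigl[\bI_N+(\bLam\bLam')^{-1}\bigr]^{-1}$, reaches the same intermediate expression $N-\sum_i\pi_i(\bLam\bLam')/(\pi_i(\bLam\bLam')+1)$, and then passes to the stated bound. In fact you supply the justification for that last step (the pointwise comparison $\pi_i+1\le N+S$) which the paper leaves implicit.
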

The proof is given in Section \ref{subsec:gmv_weights} in the Supplement. \propref{prop_eig_cov} shows that allowing for sparsity in the factor loadings matrix leads to shrinking the eigenvalues of the precision matrix towards the ones of an identity matrix. Hence, the portfolio weights based on our SAF model are shrunken towards those of the $1/N$ portfolio. This result makes intuitively sense as it is reasonable to invest in the equally weighted portfolio in the case of great estimation instabilities regarding the covariance matrix.   

\subsection{Data and Design of the Forecasting Experiment}\label{emp_desc}
The dataset comprises the monthly excess returns of stocks of the S\&P 500 index, that were constituents of the index in December, 2016. The excess returns are obtained by subtracting the corresponding one-month Treasury bill rate from the asset returns. We consider the time period from January, 1980 until December, 2016, which yields $T = 443$ monthly returns for each of the 205 available stocks.\footnote{The return data are retained from Thompson Reuters Datastream.} In order to check the performance of our estimator with respect to the dimensionality of the asset space, we consider the following portfolio sizes: $N \in \{30, 50, 100, 200\}$. Out of the 205 stocks, we select at random individual subsets from the overall number of assets and work with the selected assets for the entire forecasting experiment.

Since by construction, a theoretical portfolio built on a subset of assets from a larger portfolio cannot outperform the larger one, 
an observed inferiority  of the larger empirical portfolio can only be the consequence of higher estimation noise 
due to the larger dimensionality, which overcompensates for the ex-ante theoretical superiority.
Therefore, this selection strategy provides us with insights into the impact of estimation noise on the performance of empirical portfolios.

In order to estimate the portfolio weights for each strategy, we apply a rolling window approach with $h=60$ months, corresponding to 5 years of historic data. Thus, at time $t$ we use 
the last 60 months from $t-59$ until $t$ for our estimation. Using the estimated portfolio weights, we compute the out-of-sample portfolio return $\hat{r}^p_{t+1}(s) = \hat{\bomega}(s)' \pmb{r}_{t+1}$ for the period $t+1$ 
for the 12 different estimation strategies $s=1, \ldots, 12$. All portfolios are rebalanced on a monthly basis. This generates a series of $T-h$ out-of-sample portfolio returns. 
The results are then used to estimate the mean $\mu(s)$ and variance $\sigma^2(s)$ of the portfolio returns for each strategy by their empirical counterparts:
\begin{align}
	\hat \mu(s) = \frac{1}{T} \sum_{t = h+1}^{T} \hat r^p_t(s) 
	\qquad \text{and} \qquad \hat{\sigma}^2(s) = \frac{1}{T-1} \sum_{t = h+1}^{T} \left(\hat{r}^p_{t}(s) - \hat{\mu}(s)\right)^2 \, .  \label{est_mean_var}
\end{align}
We repeat this procedure 100 times to avoid that the out-of-sample results depend on the initially randomly selected stocks. 
Hence, all results reported below are average outcomes across the 100 forecasting experiments.

The criteria for the performance evaluation are the out-of-sample standard deviation (SD), the average return (AV), the Certainty Equivalent (CE) and the Sharpe ratio (SR).

\subsection{Out-of-Sample Portfolio Performance}
Table \ref{est_res_sim_sp500} contains the annualized results of our comparative study on the out-of-sample performance of different portfolio estimation approaches.
The results represent average outcomes across the 100 different forecasting experiments for each of the four performance measures. 
Our sparse approximate factor model (SAF) yields the lowest out-of-sample portfolio standard deviation for all portfolio dimensions, i.e. 
it is performing best for the performance criterion the GMVP-strategy is designed for.

In theory, the GMVP-strategy may not necessarily outperform the $1/N$-strategy in terms of the remaining three performance criteria, since it 
completely disregards optimization with respect to the expected portfolio return. Nevertheless, our SAF model also outperforms the $1/N$-strategy 
and the other estimation approaches in terms of AV, CE and SR, which depend on the expected return. 
In the portfolio forecasting experiment, our regularization method does best for the expected out-of-sample portfolio return.

It is of utmost importance to note that the superiority of our approach does not only hold for different performance measures, but also for all 
portfolio dimensions. The SAF model performs best for low, but also for high dimensional portfolios, for which the sample size is much smaller than the portfolio dimension, i.e. $T \ll N$. This indicates, at least for this specific application, that the selection of the penalty parameter is reasonable.

\begin{table}[htb]
	
	\footnotesize
	
	\caption{Estimation results for the Portfolio Application}
	
	\centering
	\resizebox{\textwidth}{!}{
		\begin{threeparttable}
			
			\begin{tabular}{c|cccccccccccc}
				\hline
				\hline
				Model & 1/N & Sample & \textbf{SAF} & POET  & DFM   & SIM   & FF3F  & LW    & KDM   & ADZ   & LW-NL & BT \bigstrut\\
				\hline
				\multicolumn{13}{c}{N = 30} \bigstrut\\
				\hline
				SD    & 0.1572 & 0.2184 & \textbf{0.1538} & 0.1694 & 0.1661 & 0.1574 & 0.1557 & 0.1638 & 0.1680 & 0.1618 & 0.1620 & 0.1571 \bigstrut[t]\\
				AV    & 0.1002 & 0.0970 & \textbf{0.1005} & 0.0920 & 0.0954 & 0.1005 & 0.0955 & 0.0958 & 0.0967 & 0.0938 & 0.0976 & 0.0996 \\
				CE    & 0.0878 & 0.0731 & \textbf{0.0887} & 0.0776 & 0.0816 & 0.0881 & 0.0834 & 0.0824 & 0.0826 & 0.0807 & 0.0845 & 0.0873 \\
				SR    & 0.6372 & 0.4431 & \textbf{0.6534} & 0.5417 & 0.5747 & 0.6386 & 0.6138 & 0.5838 & 0.5761 & 0.5786 & 0.6024 & 0.6335 \bigstrut[b]\\
				\hline
				\multicolumn{13}{c}{N = 50} \bigstrut\\
				\hline
				SD    & 0.1543 & 0.3812 & \textbf{0.1501} & 0.1654 & 0.1619 & 0.1545 & 0.1519 & 0.1603 & 0.1590 & 0.1543 & 0.1585 & 0.1610 \bigstrut[t]\\
				AV    & 0.0996 & 0.1041 & \textbf{0.1006} & 0.0971 & 0.0949 & 0.0999 & 0.0936 & 0.1008 & 0.0966 & 0.0994 & 0.0927 & 0.1012 \\
				CE    & 0.0876 & 0.0312 & \textbf{0.0893} & 0.0834 & 0.0817 & 0.0879 & 0.0820 & 0.0880 & 0.0839 & 0.0875 & 0.0801 & 0.0882 \\
				SR    & 0.6452 & 0.2725 & \textbf{0.6703} & 0.5876 & 0.5871 & 0.6463 & 0.6160 & 0.6296 & 0.6079 & 0.6444 & 0.5847 & 0.6284 \bigstrut[b]\\
				\hline
				\multicolumn{13}{c}{N = 100} \bigstrut\\
				\hline
				SD    & 0.1525 & -     & \textbf{0.1456} & 0.1593 & 0.1580 & 0.1527 & 0.1489 & 0.1558 & 0.1644 & 0.1534 & 0.1522 & 0.1600 \bigstrut[t]\\
				AV    & 0.0999 & -     & \textbf{0.1042} & 0.0972 & 0.0907 & 0.1003 & 0.0913 & 0.1021 & 0.0931 & 0.0999 & 0.1001 & 0.0993 \\
				CE    & 0.0883 & -     & \textbf{0.0936} & 0.0845 & 0.0783 & 0.0886 & 0.0802 & 0.0900 & 0.0796 & 0.0881 & 0.0885 & 0.0865 \\
				SR    & 0.6556 & -     & \textbf{0.7153} & 0.6105 & 0.5755 & 0.6567 & 0.6127 & 0.6560 & 0.5666 & 0.6514 & 0.6573 & 0.6212 \bigstrut[b]\\
				\hline
				\multicolumn{13}{c}{N = 200} \bigstrut\\
				\hline
				SD    & 0.1505 & -     & \textbf{0.1410} & 0.1534 & 0.1559 & 0.1507 & 0.1465 & 0.1496 & 0.1539 & 0.1455 & 0.1459 & 0.1472 \bigstrut[t]\\
				AV    & 0.0993 & -     & \textbf{0.1071} & 0.0998 & 0.0904 & 0.0996 & 0.0893 & 0.1037 & 0.0934 & 0.1014 & 0.1026 & 0.0984 \\
				CE    & 0.0879 & -     & \textbf{0.0972} & 0.0880 & 0.0782 & 0.0882 & 0.0786 & 0.0925 & 0.0815 & 0.0909 & 0.0920 & 0.0876 \\
				SR    & 0.6596 & -     & \textbf{0.7600} & 0.6505 & 0.5801 & 0.6608 & 0.6094 & 0.6932 & 0.6065 & 0.6971 & 0.7036 & 0.6689 \bigstrut[b]\\
				\hline
				\hline
			\end{tabular}%
			\vspace*{-0.5cm}
			\begin{tablenotes}
				
				\footnotesize
				\singlespacing
				\item \leavevmode\kern-\scriptspace\kern-\labelsep 
				Note: The sparse approximate factor model (SAF) in the third column is compared to the equally weighted portfolio ($1/N$), the GMVP based on the sample covariance matrix (Sample), the POET estimator by \citemain{FanLiaoMincheva2013} (POET), the Dynamic Factor Model (DFM), the Single Factor Model by \citemain{Sharpe1963} (SIM), the Three-Factor Model by \citemain{Fama1993} (FF3F), the estimators by \citemain{Ledoit2003} (LW), \citemain{Kourtis2012} (KDM), \citemain{AbadirDistasoZikes2014} (ADZ), \citemain{LedoitWolf2018} (LW-NL) and \citemain{BienTibshiraniothers2011} (BT).
				
			\end{tablenotes}
			
	\end{threeparttable}}

	\label{est_res_sim_sp500}
	
\end{table}%

As mentioned earlier, increasing the portfolio dimension does not necessarily improve the out-of-sample performance 
of an empirical portfolio as the theoretical gains maybe overcompensated by the increase in estimation noise due to the increase in the number of parameters to be estimated.
It is not too surprising that this phenomenon is most dramatically pronounced for the plug-in  estimator of the GMVP, but we also find it to some extent for the DFM.  
Moreover, for the SIM and FF3F, we do not find a strict monotonicity between portfolio dimension and portfolio performance,
while the performance of our SAF model strictly increases with $N$.

While in the portfolio forecasting experiment for any performance measure and any portfolio dimension our sparse factor model shows the best performance, there is no clear further ranking 
regarding the other approaches. FF3F is performing second best in terms of the minimization of portfolio risk for all portfolio dimensions, but it is outperformed by other estimation approaches when performance measures other than the portfolio risk are considered.

Our comparative study  also confirms the findings of  \citemain{DeMiguel2009a} that the $1/N$ portfolio is a strong competitor for many alternative portfolio strategies.  For low dimensions ($N = 30$ and $N = 50$), we can see that, apart from our estimator only the single factor model generates a higher average SR compared to the equally weighted portfolio, although it is very close to it. 
In terms of the portfolio risk, only our method and FF3F reveal performance  superior to the $1/N$ portfolio for low dimensions of the asset space. 
The picture slightly changes, when higher asset dimensions ($N > 50$) are considered. For higher dimensions, the method by \citemain{AbadirDistasoZikes2014} 
is a serious competitor to the $1/N$ portfolio. This mirrors our finding from the simulation study in Section \ref{sec:sim}, where the ADZ estimator performs comparatively well in high dimensional settings with strong linear dependencies.

%It is also interesting to compare our method to the general approximate factor model with a dense factor loadings matrix and to the dynamic factor model. The simulation results show that none of both methods provide better estimation results compared to the $1/N$ portfolio.\\

Table \ref{est_res_sim_sp500_w} in  Appendix \ref{sec:A_tables} provides additional insights into the quality of the weight estimates.
The summary statistics indicate that the outstanding performance of the SAF model results from effectively stabilizing the estimated portfolio weights by avoiding extreme positions  (moderate minima and maxima in the weight estimates) and by the low
standard deviations. Furthermore, the results show that the weights of our SAF estimator shrink towards the weights of the equally weighted portfolio as $N$ increases. This is in line with the theoretical results in Proposition \ref{prop_eig_cov}. The relative good performance of  SIM and FF3F result from very low variation in the portfolio weights, which come for the SIM with $N=200$ close to the constant weights of the equally weighted portfolio.

In order to check the robustness of our findings, which are based on data from January 1980 until December 2016, we also consider forecasts based on subperiods. We restrict our attention to the standard deviation of the out-of-sample portfolio returns and consider how a gradual increase of the evaluation sample affects the performance of the competing estimators. 
The results are illustrated in Figure \ref{sd_sub} in Appendix \ref{sec:A_figures}, where the portfolio standard deviation at time $t$ incorporates the out-of-sample portfolio returns until $t$ (e.g. the out-of-sample portfolio standard deviation in January 2005 incorporates the out-of-sample portfolio returns from January 1985 until January 2005).  Special attention is given  to the periods before and after the financial crisis in 2007. 
The graphs indicate that the SAF estimator also provides for different subperiods the lowest portfolio standard deviation compared to FF3F and LW-NL. Note, that the difference is more pronounced when the recent financial crisis period is included. Hence, in comparison to our SAF model both,  FF3F and LW-NL,  fail to pick up the changing risk during the crisis and, as a result, they provide more volatile portfolio estimates. 

%%%%%%%%%%%%%%%%%%%%%%%%%%%%%%%%%%%%%%%%%%%%%%%%%%%%%%%%%%%%%%%%%%%%%%%%

\section{Conclusions}\label{sec:conclusions}
%\section{Conclusion}
In this paper, we propose a novel approach for the estimation of high-dimensional covariance matrices based on a sparse approximate factor model. The estimator allows for sparsity in the factor loadings matrix by shrinking single elements of the factor loadings matrix to zero. 
Hence, this setting reduces the number of parameters to be estimated and therefore leads to a reduction in estimation noise. 
Furthermore, the sparse factor model framework allows for weak factors, which only affect a subset of the available time series. 
Thus, our framework offers a convenient generalization to the pervasiveness assumption in the standard approximate factor model that solely leads to strong factors.

We prove average consistency under the Frobenius norm for the factor loadings matrix estimator and consistency in the spectral norm for the idiosyncratic component covariance matrix estimator based on our sparse approximate factor model. The factors estimated using the GLS method are also shown to be consistent. Furthermore, we derive average consistency for our factor model based covariance matrix estimator under the Frobenius norm for a particular rate of divergence for the eigenvalues of the covariance matrix corresponding to the common component. To the best of our knowledge, this result has not been shown in the existing literature because of the fast diverging eigenvalues. Additionally, we provide consistency results of our covariance matrix estimator under the weighted quadratic norm.

In our Monte Carlo study, we analyze the finite sample properties of our covariance matrix estimator for different simulation designs for the true underlying covariance matrix. The results show that our estimator offers the lowest difference in Frobenius norm to the true covariance matrix compared to the competing estimators. Further, the benefit of the covariance matrix estimator based on our sparse factor model is even more pronounced if the dimensionality of the problem increases.

In an out-of-sample portfolio forecasting experiment, we compare the performance of the global minimum variance portfolio based on the covariance matrix estimator of our sparse approximate factor model to alternative estimation approaches frequently used in the literature. The forecasting results reveal that our estimator yields the lowest average out-of-sample portfolio standard deviation across different portfolio 
dimensions.  At the same time, it generates the highest Certainty Equivalent and Sharpe Ratio compared to all considered portfolio strategies. The performance gains of our SAF model are especially pronounced during the recent financial crisis. Hence, our estimator has a stabilizing impact on the portfolio weights, especially during highly volatile periods.

The results of our out-of-sample portfolio forecasting study show a substantial reduction of the portfolio standard deviation of the dynamic factor model compared to the standard approximate factor model, especially for small asset dimensions. Hence, it would be interesting to analyze if a possible extension of our SAF model by considering dynamic factors, would as well lead to a more efficient estimation of the covariance matrix. We leave this for future research.

%%%%%%%%%%%%%%%%%%%%%%%%%%%%%%%%%%%%%%%%%%%%%%%%%%%%%%%%%%%%%%%%%%%%%%%%%%%%%%%%%%%%%%%%

\section*{Supplement}
The supplementary material provides the omitted proofs and additional related results. Furthermore, it discusses the implementations issues, the choice of the number of factors and the selection of the tuning parameter $\mu$.

\bibliographystylemain{econometrica}
\bibliographymain{DaPoZaFLasso}

\newpage
%============= APPENDIX =======================
\begin{appendix}
	\section*{Appendix}
	\label{sec:appendix}
	\section{Tables}
	\label{sec:A_tables}
	 \begin{table}[H]
		
		\footnotesize
		
		\caption{Summary Statistics for the Estimated Portfolio Weights}
		
		\centering
		\resizebox{\textwidth}{!}{
			\begin{threeparttable}
				
				\begin{tabular}{c|cccccccccccc}
					\hline
					\hline
					Model & 1/N & Sample & \textbf{SAF} & POET  & DFM   & SIM   & FF3F  & LW    & KDM   & ADZ   & LW-NL & BT \bigstrut\\
					\hline
					\multicolumn{13}{c}{N = 30} \bigstrut\\
					\hline
					Min   & 0.0333 & -0.2676 & 0.0049 & -0.1150 & -0.0870 & 0.0325 & 0.0155 & -0.1163 & -0.0423 & -0.0572 & -0.0670 & -0.0080 \bigstrut[t]\\
					Max   & 0.0333 & 0.2898 & 0.0529 & 0.1765 & 0.1255 & 0.0369 & 0.0547 & 0.1498 & 0.1064 & 0.1505 & 0.1467 & 0.0660 \\
					SD    & 0.0000 & 0.1336 & 0.0122 & 0.0685 & 0.0502 & 0.0011 & 0.0094 & 0.0635 & 0.0356 & 0.0513 & 0.0523 & 0.0179 \\
					MAD   & 0.0000 & 0.1051 & 0.0097 & 0.0533 & 0.0389 & 0.0008 & 0.0073 & 0.0496 & 0.0280 & 0.0408 & 0.0415 & 0.0141 \bigstrut[b]\\
					\hline
					\multicolumn{13}{c}{N = 50} \bigstrut\\
					\hline
					Min   & 0.0200 & -0.5199 & -0.0043 & -0.1045 & -0.0769 & 0.0195 & 0.0035 & -0.1136 & -0.0231 & 0.0150 & -0.0686 & -0.0590 \bigstrut[t]\\
					Max   & 0.0200 & 0.5124 & 0.0353 & 0.1378 & 0.0934 & 0.0225 & 0.0393 & 0.1212 & 0.0626 & 0.0251 & 0.1224 & 0.0888 \\
					SD    & 0.0000 & 0.2219 & 0.0092 & 0.0510 & 0.0365 & 0.0007 & 0.0078 & 0.0503 & 0.0184 & 0.0023 & 0.0425 & 0.0324 \\
					MAD   & 0.0000 & 0.1745 & 0.0073 & 0.0397 & 0.0284 & 0.0005 & 0.0061 & 0.0393 & 0.0144 & 0.0018 & 0.0338 & 0.0256 \bigstrut[b]\\
					\hline
					\multicolumn{13}{c}{N = 100} \bigstrut\\
					\hline
					Min   & 0.0100 & -     & -0.0159 & -0.0776 & -0.0577 & 0.0097 & -0.0030 & -0.0903 & -0.0488 & -0.0471 & -0.0418 & -0.0684 \bigstrut[t]\\
					Max   & 0.0100 & -     & 0.0224 & 0.0935 & 0.0601 & 0.0115 & 0.0248 & 0.0865 & 0.0686 & 0.0917 & 0.0822 & 0.0878 \\
					SD    & 0.0000 & -     & 0.0077 & 0.0315 & 0.0224 & 0.0003 & 0.0054 & 0.0331 & 0.0233 & 0.0274 & 0.0245 & 0.0311 \\
					MAD   & 0.0000 & -     & 0.0061 & 0.0246 & 0.0175 & 0.0002 & 0.0043 & 0.0258 & 0.0185 & 0.0217 & 0.0195 & 0.0247 \bigstrut[b]\\
					\hline
					\multicolumn{13}{c}{N = 200} \bigstrut\\
					\hline
					Min   & 0.0050 & -     & -0.0165 & -0.0539 & -0.0419 & 0.0049 & -0.0038 & -0.0612 & -0.0368 & -0.0324 & -0.0336 & -0.0376 \bigstrut[t]\\
					Max   & 0.0050 & -     & 0.0145 & 0.0587 & 0.0368 & 0.0059 & 0.0150 & 0.0563 & 0.0510 & 0.0620 & 0.0660 & 0.0593 \\
					SD    & 0.0000 & -     & 0.0063 & 0.0183 & 0.0136 & 0.0002 & 0.0034 & 0.0197 & 0.0159 & 0.0166 & 0.0173 & 0.0171 \\
					MAD   & 0.0000 & -     & 0.0050 & 0.0142 & 0.0106 & 0.0001 & 0.0027 & 0.0154 & 0.0126 & 0.0132 & 0.0136 & 0.0135 \bigstrut[b]\\
					\hline
					\hline
				\end{tabular}%
				\vspace*{-0.5cm}
				\begin{tablenotes}
					
					\footnotesize
					\singlespacing
					\item \leavevmode\kern-\scriptspace\kern-\labelsep 
					Note: The sparse approximate factor model (SAF) in the third column is compared to the equally weighted portfolio ($1/N$), the GMVP based on the sample covariance matrix (Sample), the POET estimator by \citemain{FanLiaoMincheva2013} (POET), the Dynamic Factor Model (DFM), the Single Factor Model by \citemain{Sharpe1963} (SIM), the Three-Factor Model by \citemain{Fama1993} (FF3F), the estimators by \citemain{Ledoit2003} (LW), \citemain{Kourtis2012} (KDM), \citemain{AbadirDistasoZikes2014} (ADZ), \citemain{LedoitWolf2018} (LW-NL) and \citemain{BienTibshiraniothers2011} (BT).
					
				\end{tablenotes}
				
		\end{threeparttable}}
		
		\label{est_res_sim_sp500_w}
		
	\end{table}%

	%%%%%%%%%%%%%%%%%%%%%%%%%%%%%%%%%%%%%%%%%%%%%%%%%%%%%%%%%%%%%%%%%%%%%%%%%%%%%%%%%%%
	\section{Figures}
	\label{sec:A_figures}
	
	\begin{figure}[H]
		\begin{minipage}{.5\textwidth}
			\centering
			\captionsetup{width=1\linewidth}
			\subfloat[$N = 30$]{\includegraphics[width=1\linewidth]{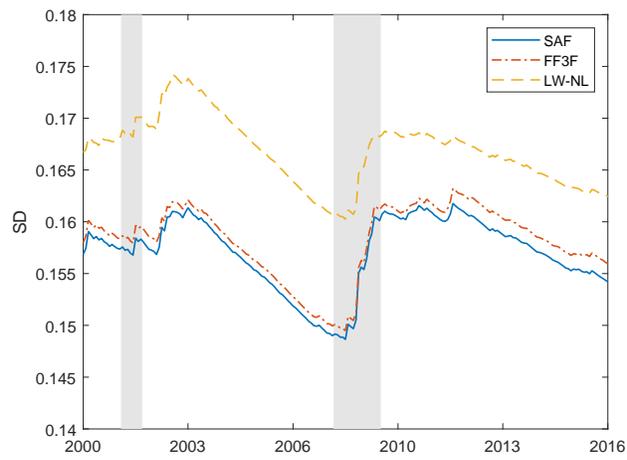}}
		\end{minipage}\hspace*{1cm}
		\begin{minipage}{.5\textwidth}
			\centering
			\captionsetup{width=1\linewidth}
			\subfloat[$N = 50$]{\includegraphics[width=1\linewidth]{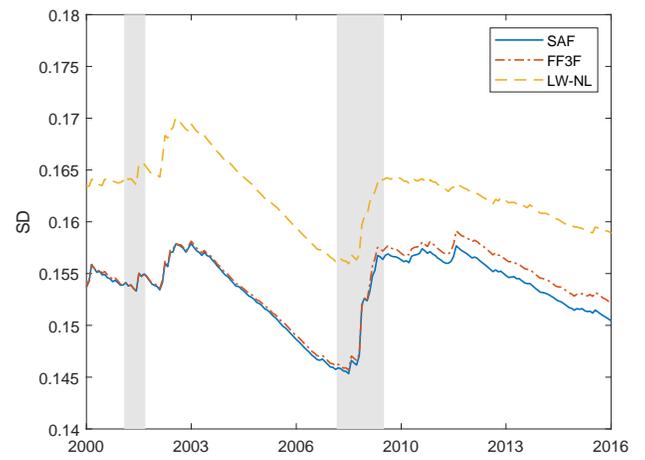}}
		\end{minipage}\\\vspace{0.5cm}
		
		\begin{minipage}{.5\textwidth}
			\centering
			\captionsetup{width=1\linewidth}
			\subfloat[$N = 100$]{\includegraphics[width=1\linewidth]{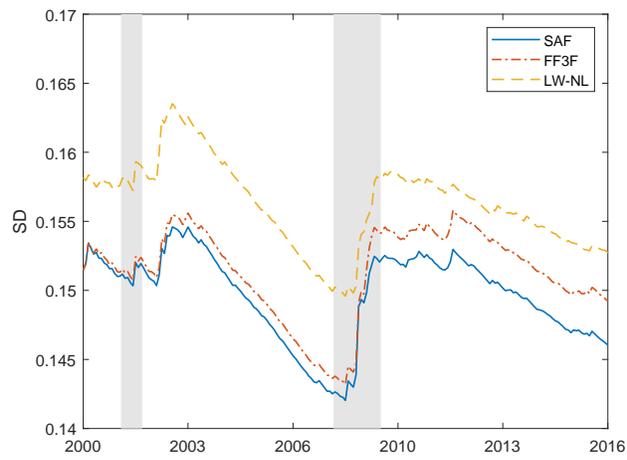}}
		\end{minipage}\hspace*{1cm}
		\begin{minipage}{.5\textwidth}
			\centering
			\captionsetup{width=1\linewidth}
			\subfloat[$N = 200$]{\includegraphics[width=1\linewidth]{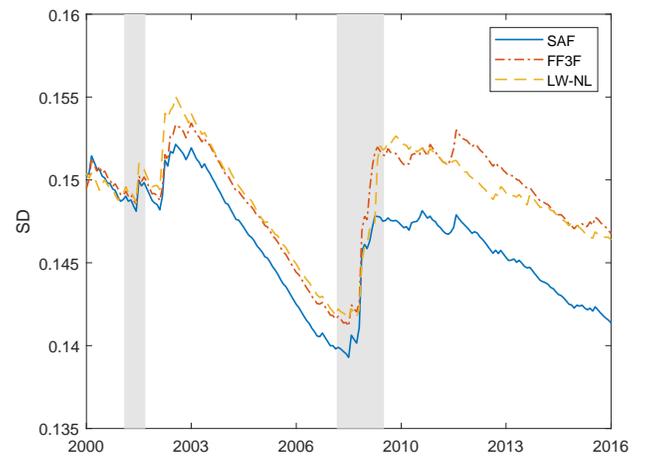}}
		\end{minipage}
		\caption{SD for different subperiods}\label{sd_sub}
	\end{figure}
\end{appendix}
\newpage
\setcounter{page}{1}
%============= APPENDIX =======================
\section*{Supplement to "Sparse Approximate Factor Estimation for High-Dimensional Covariance Matrices"}
This supplement provides the omitted proofs and additional related results. Furthermore, it discusses the implementations issues, the choice of the number of factors and the selection of the tuning parameter $\mu$.
\vspace*{-0.5cm}
\setcounter{section}{0}
\setcounter{equation}{0}
\renewcommand\theequation{S.\arabic{equation}}
\renewcommand\thesection{S.\arabic{section}}
\section{Proofs}
\label{sec:A_proofs}
%\section{Proofs}
%%%%%%%%%%%%%%%%%%%%%
\subsection{Consistency of the Sparse Approximate Factor Model Estimator} \label{subsubsec:consistency_lam}
\textbf{Proof.} Theorem \ref{theo_consistency_lam} (Consistency of the Sparse Approximate Factor Model Estimator)

Define the penalized log-likelihood
\begin{align}
	\mathcal{L}_p(\bLam, \bSigma_{u}) = Q_1(\bLam, \bSigma_{u}) + Q_2(\bLam, \bSigma_{u}) + Q_3(\bLam, \bSigma_{u}), \label{const_lik}
\end{align}
where
\begin{align*}
	Q_1(\bLam, \bSigma_{u}) &= \frac{1}{N} \log\left|\bSigma_{u}\right| + \frac{1}{N} \trace{\bS_{u} \bSigma_{u}^{-1}} - \frac{1}{N} \log\left|\bSigma_{u 0}\right| - \frac{1}{N} \trace{\bS_u \bSigma_{u 0}^{-1}}\\
	&\quad + \frac{1}{N} \mu \sum_{k = 1}^{r} \sum_{i = 1}^{N} \left|\lambda_{ik}\right| - \frac{1}{N} \mu \sum_{k = 1}^{r} \sum_{i = 1}^{N} \left|\lambda_{ik0}\right|,	\\
	Q_2(\bLam, \bSigma_{u}) &= \frac{1}{N} \traces{\left(\bLam - \bLam_0\right)' \bSigma_{u}^{-1} \left(\bLam - \bLam_0\right) - \left(\bLam - \bLam_0\right)' \bSigma_{u}^{-1} \bLam \left(\bLam'\bSigma_{u}^{-1} \bLam\right)^{-1} \bLam'\bSigma_{u}^{-1}\left(\bLam - \bLam_0\right)},	\\
	Q_3(\bLam, \bSigma_{u}) &= \frac{1}{N} \log\left|\bLam\bLam' + \bSigma_{u}\right| + \frac{1}{N} \trace{\bS_x\left(\bLam\bLam' + \bSigma_{u}\right)^{-1}} - Q_2(\bLam, \bSigma_{u}) \\
	&\quad - \frac{1}{N} \log\left|\bSigma_{u}\right| - \frac{1}{N} \trace{\bS_u \bSigma_{u}^{-1}}.
\end{align*}
Therefore, we can see that equation \eqref{const_lik} can be written as
\begin{align}
	\begin{split}
		\mathcal{L}_p(\bLam, \bSigma_{u}) &= \frac{1}{N} \log\left|\bLam\bLam' + \bSigma_{u}\right| + \frac{1}{N} \trace{\bS_x\left(\bLam\bLam' + \bSigma_{u}\right)^{-1}} \\
		&\quad - \frac{1}{N} \log\left|\bSigma_{u 0}\right| - \frac{1}{N} \trace{\bS_{u} \bSigma_{u 0}^{-1}}\\
		&\quad + \frac{1}{N} \mu \sum_{k = 1}^{r} \sum_{i = 1}^{N} \left|\lambda_{ik}\right| - \frac{1}{N} \mu \sum_{k = 1}^{r} \sum_{i = 1}^{N} \left|\lambda_{ik0}\right|. \label{const_lik2}
	\end{split}
\end{align}
Define the set,
\begin{align*}
	\Psi_\delta = \left\{ \left(\bLam, \bSigma_{u}\right):\;\right. &\delta^{-1} < \pi_{\min}\left(\frac{\bLam'\bLam}{N^{\beta}}\right) \leq \pi_{\max}\left(\frac{\bLam'\bLam}{N^{\beta}}\right) < \delta, \\
	&\left. \delta^{-1} < \pi_{\min}\left(\bSigma_{u}\right) \leq \pi_{\max}\left(\bSigma_{u}\right) < \delta \right\}, \quad \text{for } 1/2 \leq \beta \leq 1.
\end{align*}
Further, $\bPhi_u = \text{diag}\left(\bSigma_{u}\right)$ and denotes a covariance matrix that contains only the elements of the main diagonal of $\bSigma_{u}$.\\
We impose the following sparsity assumptions on $\bLam$ and $\bSigma_{u}$:
\begin{align*}
	L_N &= \sum_{k = 1}^{r} \sum_{i = 1}^{N}\1\left\{\lambda_{ik} \neq 0 \right\} = \mathcal{O}\left(N\right),	\\
	S_N &= \max_{i \leq N} \sum_{j = 1}^{N} \1\left\{\sigma_{u,ij} \neq 0 \right\}, S_N^2 d_T = o(1) \text{ and } S_N \mu = o(1),
\end{align*} 
where $\1\{\cdot\}$ defines an indicator function that is equal to one if the boolean argument in braces is true, $d_T = \frac{\log N^{\beta}}{N} + \frac{1}{N^{\beta}}\frac{\log N}{T}$ and $\mu$ denotes the regularization parameter.

We introduce a lemma that will be necessary for the forthcoming derivations.\\

\begin{lemma}\label{lemma_fan}
	\leavevmode
	\begin{itemize}
		\item [(i)] $\max_{i,j\leq N} \left|\frac{1}{T} \sum_{t = 1}^{T} u_{it}u_{jt} - \E{u_{it}u_{jt}}\right| = \mathcal{O}_p\left(\sqrt{(\log N) / T}\right)$,
		\item[(ii)] $\max_{i \leq r, j\leq N} \left|\frac{1}{T} \sum_{t = 1}^{T} f_{it}u_{jt}\right| = \mathcal{O}_p\left(\sqrt{(\log N) / T}\right)$.
	\end{itemize}
\end{lemma}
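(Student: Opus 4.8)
Both statements are uniform large-deviation bounds for sample second moments of weakly dependent, exponential-tailed sequences, so the strategy is the classical one: (a) show that the relevant product variables are centered and have stretched-exponential (sub-Weibull) tails; (b) apply a Bernstein-type inequality for $\alpha$-mixing sequences to each fixed index pair; (c) take a union bound over the at most $N^2$ (resp.\ $rN$) pairs, which converts the exponential tail into the extra $\log N$ factor. The reduction is essentially that of the concentration lemmas in \citemain{FanLiaoMincheva2013} and \citemain{Bai2016}, whose assumptions coincide with Assumption~\ref{data_assum}.

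For part~(i), fix $i,j\le N$ and set $\xi_t:=u_{it}u_{jt}-\E{u_{it}u_{jt}}$, a centered strictly stationary sequence by Assumption~\ref{data_assum}\ref{as11}. By Assumption~\ref{data_assum}\ref{ass_exp}, $u_{it}$ and $u_{jt}$ each have tails dominated by $\exp(-(s/b_1)^{r_1})$, hence are sub-Weibull of order $r_1$; a direct tail estimate (or Cauchy--Schwarz) shows that the product $u_{it}u_{jt}$, and therefore $\xi_t$, is sub-Weibull of order $r_1/2$ with constants uniform in $i,j$. Since $\{(\bfa_t,\bu_t)\}$ is $\alpha$-mixing with $\alpha(T)\le\exp(-CT^{r_3})$ by Assumption~\ref{data_assum}\ref{ass_sm}, the sequence $\{\xi_t\}$ inherits this mixing rate. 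A Bernstein inequality for $\alpha$-mixing sub-Weibull sequences then yields, for $s$ small enough, a bound of the schematic form $\Prob{|\tfrac1T\sum_{t=1}^T\xi_t|>s}\le c_1\exp(-c_2 Ts^2)+c_3 T\exp(-c_4 (Ts)^{\gamma})$ with $\gamma=\gamma(r_1,r_3)>0$; taking $s=M\sqrt{(\log N)/T}$ with $M$ large makes the first (sub-Gaussian) term $\le N^{-c_2 M^2}=o(N^{-2})$ and the second term super-polynomially small. A union bound over the $N^2$ pairs $(i,j)$ gives $\max_{i,j\le N}|\tfrac1T\sum_t u_{it}u_{jt}-\E{u_{it}u_{jt}}|=\mathcal{O}_p(\sqrt{(\log N)/T})$.

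Part~(ii) is completely analogous: take $\xi_t:=f_{it}u_{jt}$, which is centered by the orthogonality $\E{u_{jt}f_{it}}=0$ in Assumption~\ref{data_assum}\ref{as11}; here $f_{it}$ is sub-Weibull of order $r_2$ and $u_{jt}$ of order $r_1$, so the product is sub-Weibull of order $(r_1^{-1}+r_2^{-1})^{-1}$, and the same Bernstein inequality together with a union bound over the $rN$ pairs (with $r$ fixed, so only a $\log N$ penalty) delivers the rate.

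The only genuinely delicate point is step~(a)--(b): verifying that the products satisfy the moment/tail hypotheses of the dependent-data Bernstein inequality and tracking the constants so that taking $s\asymp\sqrt{(\log N)/T}$ lands in the sub-Gaussian regime of the bound. Here the stretched-exponential mixing decay is precisely what makes the dependence-correction term negligible relative to the i.i.d.-type Gaussian term, and the fixedness of $r$ ensures that both $N^2$ and $rN$ contribute only a $\log N$ factor. I would present the sub-Weibull reduction for the products in detail and then cite \citemain{FanLiaoMincheva2013} and \citemain{Bai2016} for the remaining probabilistic estimate, since those references carry out exactly this computation under the same assumptions.
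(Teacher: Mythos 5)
Your proposal is correct and follows essentially the same route as the paper: the paper's entire proof is a citation to Lemmas A.3 and B.1 of Fan, Liao and Mincheva (2011), and those lemmas are proved by exactly the argument you sketch (exponential-tail products, a Bernstein-type inequality for strongly mixing sequences, and a union bound over the $N^2$ resp.\ $rN$ index pairs yielding the $\sqrt{(\log N)/T}$ rate). The only cosmetic difference is that you spell out the sub-Weibull reduction that the paper outsources to the reference.
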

\begin{proof}
	See Lemmas A.3 and B.1 in \citesupp{Fan2011a}.
\end{proof}
\begin{lemma} \label{lemma1}
	\begin{align*}
		\sup_{\left(\bLam, \bSigma_{u}\right) \in \Psi_\delta} \left|Q_3(\bLam, \bSigma_{u})\right| = \mathcal{O}_p\left(\frac{\log N^{\beta}}{N} + \frac{1}{N^{\beta}}\frac{\log N}{T}\right).
	\end{align*}
\end{lemma}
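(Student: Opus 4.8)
The plan is to convert $Q_3$ into a handful of traces via two exact matrix identities, cancel the genuinely large pieces against $Q_2$ and against a log-determinant term, and bound the small stochastic remainders with Lemma \ref{lemma_fan}. Set $\bG \coloneqq \bLam'\bSigma_u^{-1}\bLam$. The determinant product rule gives $\log|\bLam\bLam'+\bSigma_u|-\log|\bSigma_u| = \log\det(\bI_r+\bG)$, and the Woodbury identity $(\bLam\bLam'+\bSigma_u)^{-1} = \bSigma_u^{-1}-\bSigma_u^{-1}\bLam(\bI_r+\bG)^{-1}\bLam'\bSigma_u^{-1}$ turns the definition of $Q_3$ underlying \eqref{const_lik2} into
\begin{align*}
	Q_3 = \frac1N\log\det(\bI_r+\bG) + \frac1N\trace{(\bS_x-\bS_u)\bSigma_u^{-1}} - \frac1N\trace{\bLam'\bSigma_u^{-1}\bS_x\bSigma_u^{-1}\bLam(\bI_r+\bG)^{-1}} - Q_2.
\end{align*}
Plugging in $\bX=\bLam_0\bFa'+\bu$ gives $\bS_x-\bS_u = \bLam_0\bS_F\bLam_0'+\bLam_0\bS_{Fu}+\bS_{Fu}'\bLam_0'$ with $\bS_F\coloneqq T^{-1}\bFa'\bFa$ and $\bS_{Fu}\coloneqq T^{-1}\bFa'\bu'$, so that every summand of $Q_3$ is a trace of a product of $\bLam,\bLam_0,\bSigma_u^{-1},(\bI_r+\bG)^{-1}$ and one of $\bS_F,\bS_{Fu},\bS_u$.

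Three facts, all uniform on $\Psi_\delta$, drive the cancellation. (i) Since $\bG^{-1}\bLam'\bSigma_u^{-1}\bLam=\bI_r$, the matrix $\bM\coloneqq\bSigma_u^{-1}-\bSigma_u^{-1}\bLam\bG^{-1}\bLam'\bSigma_u^{-1}$ annihilates $\bLam$, so $Q_2=\frac1N\trace{(\bLam-\bLam_0)'\bM(\bLam-\bLam_0)}=\frac1N\trace{\bLam_0'\bM\bLam_0}$. (ii) Grouping the $\bS_F$-terms from $\trace{(\bS_x-\bS_u)\bSigma_u^{-1}}$ and from $\trace{\bLam'\bSigma_u^{-1}\bS_x\bSigma_u^{-1}\bLam(\bI_r+\bG)^{-1}}$ and applying Woodbury, they collapse to $\frac1N\trace{\bS_F\,\bLam_0'(\bLam\bLam'+\bSigma_u)^{-1}\bLam_0}$, and $\bLam_0'(\bLam\bLam'+\bSigma_u)^{-1}\bLam_0 = \bLam_0'\bM\bLam_0 + \bLam_0'\bSigma_u^{-1}\bLam\,\bG^{-1}(\bI_r+\bG)^{-1}\bLam'\bSigma_u^{-1}\bLam_0$, whose last summand has spectral norm $\mathcal{O}(1)$ because $\bG^{-1}(\bI_r+\bG)^{-1}$ has eigenvalues of order $N^{-2\beta}$ while $\bLam'\bSigma_u^{-1}\bLam_0$ has spectral norm $\mathcal{O}(N^{\beta})$. (iii) With $\bS_F=\bI_r+o_p(1)$ (the normalisation $\bSigma_F=\bI_r$), this block equals $Q_2+\frac1N\trace{(\bS_F-\bI_r)\bLam_0'\bM\bLam_0}+\mathcal{O}_p(1/N)$; the $Q_2$ cancels the $-Q_2$ in $Q_3$, leaving $\frac1N\log\det(\bI_r+\bG)=\mathcal{O}(N^{-1}\log N^{\beta})$ (the $r$ eigenvalues of $\bI_r+\bG$ being of order $N^{\beta}$) plus the $\mathcal{O}_p(1/N)$ correction. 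This accounts for the $N^{-1}\log N^{\beta}$ part of $d_T$.

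The remainder of $Q_3$ is stochastic. Using $\bSigma_u^{-1}\bLam(\bI_r+\bG)^{-1}\bLam'\bSigma_u^{-1}=\bSigma_u^{-1}-(\bLam\bLam'+\bSigma_u)^{-1}$ once more, the $\bS_{Fu}$-terms telescope to $\frac2N\trace{\bS_{Fu}(\bLam\bLam'+\bSigma_u)^{-1}\bLam_0}$ and the $\bS_u$-terms to $-\frac1N\trace{(\bS_u-\bSigma_{u0})\big(\bSigma_u^{-1}-(\bLam\bLam'+\bSigma_u)^{-1}\big)}$, once the deterministic piece carrying $\bSigma_{u0}$ is folded into the $\mathcal{O}_p(1/N)$ above. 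I would bound the first by writing it as $\frac1T\sum_t\bu_t'(\bLam\bLam'+\bSigma_u)^{-1}\bLam_0\bfa_t$ and applying Lemma \ref{lemma_fan}(ii) with the uniform estimates $\spec{(\bLam\bLam'+\bSigma_u)^{-1}}\le\delta$ and $\spec{\bLam_0}=\mathcal{O}(N^{\beta/2})$, noting that $\frob{(\bLam\bLam'+\bSigma_u)^{-1}\bLam_0}^2$ splits into an $\mathcal{O}(N^{-\beta})$ piece plus a multiple of $N\,Q_2$ (identified through $\trace{\bLam_0'\bM\bLam_0}$), so that the $Q_2$-proportional contribution re-enters a budget already controlled; the second is handled with Lemma \ref{lemma_fan}(i), the rank-$r$ structure of $\bSigma_u^{-1}-(\bLam\bLam'+\bSigma_u)^{-1}$ and the row-sparsity bound $S_N$ of Assumption \ref{sparsity_assum}. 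Collecting everything gives $\sup_{(\bLam,\bSigma_u)\in\Psi_\delta}|Q_3|=\mathcal{O}_p\big(N^{-1}\log N^{\beta}+N^{-\beta}T^{-1}\log N\big)$, the constants depending only on $\delta$, on the constant $c$ of Assumption \ref{assum_lam}, and on $r$.

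The main obstacle is the bookkeeping in the cancellation step combined with uniformity over $\Psi_\delta$: one must track the three quantities of spectral norm $\mathcal{O}(N^{\beta})$ — $\bLam_0'\bSigma_u^{-1}\bLam_0$, $\bLam_0'(\bLam\bLam'+\bSigma_u)^{-1}\bLam_0$ and $\bLam'\bSigma_u^{-1}\bLam_0$ — and check that each appears in $Q_3$ only multiplied by a compensating factor of order $N^{-\beta}$ or $N^{-2\beta}$ coming from $(\bI_r+\bG)^{-1}$, or by $\bS_F-\bI_r$, or re-absorbed into $Q_2$. Upgrading the entrywise rates of Lemma \ref{lemma_fan} to the \emph{uniform}-in-$\Psi_\delta$ spectral/Frobenius bounds that are actually needed, and confirming that the accounting closes at exactly the rate $d_T$ rather than something slightly worse, is where the care lies.
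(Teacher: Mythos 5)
Your proof follows essentially the same route as the paper's: split $\log|\bLam\bLam'+\bSigma_u|$ into $\log|\bSigma_u|+\log|\bI_r+\bLam'\bSigma_u^{-1}\bLam|$ (contributing the $N^{-1}\log N^{\beta}$ piece), apply the Woodbury identity, decompose $\bS_x$ through the factor structure, cancel the $\bLam_0$-quadratic block against $Q_2$, and control the $\frac{1}{T}\sum_t\bfa_t\bu_t'$ cross terms with Lemma \ref{lemma_fan} and the uniform eigenvalue bounds $\pi_{\max}[(\bI_r+\bLam'\bSigma_u^{-1}\bLam)^{-1}]=\mathcal{O}(N^{-\beta})$ on $\Psi_\delta$. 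The one point to tighten is your item (iii): the paper imposes $T^{-1}\sum_t\bfa_t\bfa_t'=\bI_r$ \emph{exactly} as the identification convention, so there is no $\bS_F-\bI_r$ remainder, whereas with only $\bS_F=\bI_r+o_p(1)$ your leftover term $\frac{1}{N}\trace{(\bS_F-\bI_r)\bLam_0'\bM\bLam_0}$ is of order $N^{\beta-1}\spec{\bS_F-\bI_r}$, which (e.g.\ at $\beta=1$ with the usual $T^{-1/2}$ rate for $\bS_F-\bI_r$) is not $\mathcal{O}_p(d_T)$; adopting the sample-level normalization closes this gap.
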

\begin{proof}
	The unpenalized log-likelihood
	\begin{align}
		\mathcal{L}(\bLam, \bSigma_{u}) = \frac{1}{N} \log\left|\bLam\bLam' + \bSigma_{u}\right| + \frac{1}{N} \trace{\bS_x\left(\bLam\bLam' + \bSigma_{u}\right)^{-1}}, \label{unp_lik}
	\end{align}
	can be decomposed in a similar fashion as in \textit{Lemma A.2.} in \citesupp{Bai2016}.
	
	The first term in equation \eqref{unp_lik} can be written as:
	\begin{align*}
		\frac{1}{N} \log\left|\bLam\bLam' + \bSigma_{u}\right| = \frac{1}{N}\log\left|\bSigma_{u}\right| + \frac{1}{N} \log\left|\bI_r + \bLam'\bSigma_{u}^{-1}\bLam\right|.
	\end{align*}
	Hence, we have
	\begin{align}\label{det_f}
		\frac{1}{N} \log\left|\bLam\bLam' + \bSigma_{u}\right| = \frac{1}{N} \log\left|\bSigma_{u}\right| + \mathcal{O}\left(\frac{\log N^{\beta}}{N}\right).
	\end{align}
	Now, we consider the second term $\frac{1}{N} \trace{\bS_x\left(\bLam\bLam' + \bSigma_{u}\right)^{-1}}$. Hereby, $\bS_x$ is defined as:
	\begin{align*}
		\bS_x = \frac{1}{T} \sum_{t = 1}^{T} \bx_t \bx_t' = \bLam_0 \bLam_0' + \bS_u + \bLam_0\frac{1}{T}\sum_{t = 1}^{T} \bfa_t\bu_t' + \left(\bLam_0\frac{1}{T}\sum_{t = 1}^{T} \bfa_t\bu_t'\right)',
	\end{align*}
	where $\bS_u = \frac{1}{T} \sum_{t = 1}^{T} \bu_t\bu_t'$ and the identification condition $\frac{1}{T} \sum_{t = 1}^{T} \bfa_t\bfa_t' = \bI_r$ is used.
	
	By the matrix inversion formula we have:
	\begin{align*}
		\left(\bLam \bLam' + \bSigma_{u}\right)^{-1} = \bSigma_{u}^{-1} - \bSigma_{u}^{-1}\bLam\left(\bI_r + \bLam'\bSigma_{u}^{-1} \bLam\right)^{-1} \bLam'\bSigma_{u}^{-1}.
	\end{align*}
	Hence, we get:
	\begin{align}\label{t_r}
		\begin{split}
			\frac{1}{N} \trace{\bS_x\left(\bLam\bLam' + \bSigma_{u}\right)^{-1}} &= \frac{1}{N} \trace{\bLam_0' \bSigma_{u}^{-1} \bLam_0} + \frac{1}{N} \trace{\bS_u\bSigma_{u}^{-1}}\\
			&-A_1+A_2+A_3-A_4,
		\end{split}
	\end{align}
	where $A_1 = \frac{1}{N} \trace{\bLam_0\bLam_0'\bSigma_{u}^{-1}\bLam\left(\bI_r + \bLam'\bSigma_{u}^{-1} \bLam\right)^{-1} \bLam'\bSigma_{u}^{-1}}$,\\
	$A_2 = \frac{1}{N} \trace{\frac{1}{T} \sum_{t = 1}^{T}\bLam_0\bfa_t\bu_t'\left(\bLam\bLam'+\bSigma_{u}\right)^{-1}},$
	$A_3 = \frac{1}{N} \trace{\frac{1}{T} \sum_{t = 1}^{T}\bu_t\bfa_t'\bLam_0'\left(\bLam\bLam'+\bSigma_{u}\right)^{-1}}$ \\and
	$A_4 = \frac{1}{N} \trace{\bS_u\bSigma_{u}^{-1}\bLam\left(\bI_r + \bLam'\bSigma_{u}^{-1} \bLam\right)^{-1} \bLam'\bSigma_{u}^{-1}}$.\\
	
	Subsequently, we look at the terms $A_1 - A_4$, respectively.\\
	Since $\pi_{\max}\left(\bSigma_{u}\right)$ and $\pi_{\min}^{-1}\left(\bLam'\bLam\right)$ are bounded from above uniformly in $\Psi_\delta$, we can derive the following expressions similarly as in \citesupp{Bai2016}:
	\begin{align}
		\sup_{\left(\bLam, \bSigma_{u}\right)\in \Psi_\delta} \pi_{\max}\left[\left(\bLam'\bSigma_{u}^{-1}\bLam\right)^{-1}\right] &\leq \sup_{\left(\bLam, \bSigma_{u}\right)\in \Psi_\delta} \frac{\pi_{\max}\left(\bSigma_{u}\right)}{\pi_{\min} \left(\bLam'\bLam\right)} = \mathcal{O}\left(N^{-\beta}\right), \label{eig1}\\
		\sup_{\left(\bLam, \bSigma_{u}\right)\in \Psi_\delta} \pi_{\max}\left[\left(\bI_r + \bLam'\bSigma_{u}^{-1}\bLam\right)^{-1}\right] &\leq \sup_{\left(\bLam, \bSigma_{u}\right)\in \Psi_\delta} \pi_{\max}\left[\left(\bLam'\bSigma_{u}^{-1}\bLam\right)^{-1}\right] = \mathcal{O}\left(N^{-\beta}\right). \label{eig2}
	\end{align}
	By applying the matrix inversion formula we have,
	\begin{align*}
		A_1 = &\frac{1}{N} \trace{\bLam_0'\bSigma_{u}^{-1}\bLam\left(\bLam'\bSigma_{u}^{-1}\bLam\right)^{-1}\bLam'\bSigma_{u}^{-1}\bLam_0}	\\
		&- \frac{1}{N} \trace{\bLam_0'\bSigma_{u}^{-1}\bLam\left(\bLam'\bSigma_{u}^{-1}\bLam\right)^{-1}\left(\bI_r + \bLam'\bSigma_{u}^{-1} \bLam\right)^{-1}\bLam'\bSigma_{u}^{-1}\bLam_0},
	\end{align*}
	where the second term can be bounded using \eqref{eig1} and \eqref{eig2}, by the following:
	\begin{align*}
		\frac{1}{N} &\trace{\bLam_0'\bSigma_{u}^{-1}\bLam\left(\bLam'\bSigma_{u}^{-1}\bLam\right)^{-1}\left(\bI_r + \bLam'\bSigma_{u}^{-1} \bLam\right)^{-1}\bLam'\bSigma_{u}^{-1}\bLam_0}	\\
		&\leq \frac{1}{N} \frob{\bLam_0'\bSigma_{u}^{-1}\bLam}^2 \pi_{\max}\left[\left( \bLam'\bSigma_{u}^{-1}\bLam\right)^{-1}\right] \pi_{\max}\left[\left(\bI_r + \bLam'\bSigma_{u}^{-1}\bLam\right)^{-1}\right]	\\
		&\leq r \spec{\bLam_0'\bSigma_{u}^{-1}\bLam}^2 \mathcal{O}\left(N^{-2\beta}\right) \mathcal{O}\left(\frac{1}{N}\right) = \mathcal{O}\left(\frac{1}{N}\right).
	\end{align*}
	Hence,
	\begin{align*}
		A_1 = \frac{1}{N} \trace{\bLam_0'\bSigma_{u}^{-1}\bLam\left(\bLam'\bSigma_{u}^{-1}\bLam\right)^{-1}\bLam'\bSigma_{u}^{-1}\bLam_0} + \mathcal{O}\left(\frac{1}{N}\right).
	\end{align*}
	In the following, we define $s_i(\bA)$ as the $i$-th singular value of a $(m\times n)$ matrix $\bA$. Furthermore, $s_{\max}(\bA)$ denotes the largest singular value of $\bA$. Using \lemref{lemma_fan}., Fact 9.14.3 and Fact 9.14.23 in \citesupp{Bernstein2009} and the fact that
	\begin{align*}
		\pi_{\max}\left[\left(\bLam\bLam'+\bSigma_{u}\right)^{-1}\right] \leq \pi_{\max}\left[\left(\bLam\bLam'\right)^{-1}\right] = \mathcal{O}\left(N^{-\beta}\right),
	\end{align*}
	we have:
	\begin{align*}
		\sup_{\left(\bLam, \bSigma_{u}\right)\in \Psi_\delta} |A_2|&\leq \frac{1}{N}\sum_{i = 1}^{N} s_i\left(\frac{1}{T}\sum_{t = 1}^{T} \bLam_0 \bfa_t \bu_t'\right) s_i\left(\left(\bLam\bLam' + \bSigma_{u}\right)^{-1}\right)	\\
		&\quad \leq \frac{1}{2N} \sum_{i = 1}^{r} s_i\left(\bLam_0'\bLam_0 + \frac{1}{T}\sum_{t = 1}^{T} \bfa_t \bu_t' \bu_t \bfa_t'\right)s_i\left(\left(\bLam\bLam' + \bSigma_{u}\right)^{-1}\right)	\\
		&\quad \leq \frac{r}{2N} s_{\max}\left(\bLam_0'\bLam_0 + \frac{1}{T}\sum_{t = 1}^{T} \bfa_t \bu_t' \bu_t \bfa_t'\right)s_{\max}\left(\left(\bLam\bLam' + \bSigma_{u}\right)^{-1}\right)	\\
		&\quad \leq \frac{r}{2N} \left[s_{\max}\left(\bLam_0'\bLam_0\right) + s_{\max}\left(\frac{1}{T}\sum_{t = 1}^{T} \bfa_t \bu_t' \bu_t \bfa_t'\right)\right]s_{\max}\left(\left(\bLam\bLam' + \bSigma_{u}\right)^{-1}\right)	\\
		&\quad = \frac{r}{2N} \left[\pi^{1/2}_{\max}\left(\bLam_0'\bLam_0\bLam_0'\bLam_0\right) + \pi^{1/2}_{\max}\left(\frac{1}{T}\sum_{t = 1}^{T} \bfa_t \bu_t' \bu_t \bfa_t'\bfa_t \bu_t' \bu_t \bfa_t'\right)\right]\\
		&\quad \quad \cdot \pi^{1/2}_{\max}\left(\left(\bLam\bLam' + \bSigma_{u}\right)^{-1}\left(\bLam\bLam' + \bSigma_{u}\right)^{-1}\right)	\\
		&\quad = \frac{r}{2N} \left(\spec{\bLam_0'\bLam_0} + \spec{\frac{1}{T}\sum_{t = 1}^{T} \bfa_t \bu_t' \bu_t \bfa_t'}\right)\spec{\left(\bLam\bLam' + \bSigma_{u}\right)^{-1}}\\
		&\quad \leq \frac{r}{2N} \left(\mathcal{O}(1) + \mathcal{O}\left(N^{-\beta}\right) \spec{\frac{1}{T}\sum_{t = 1}^{T} \bfa_t \bu_t'}^2\right)	\\
		&\quad \leq \frac{r}{2N} \left(\mathcal{O}(1) + \mathcal{O}\left(N^{-\beta}\right) N\cdot r\spec{\frac{1}{T}\sum_{t = 1}^{T} \bfa_t \bu_t'}^2_{\max}\right) = \mathcal{O}_p\left(\frac{1}{N} + \frac{1}{N^{\beta}} \frac{\log N}{T}\right).	\\
	\end{align*}
	Similarly, we have that $\sup_{\left(\bLam, \bSigma_{u}\right)\in \Psi_\delta} |A_3| = \mathcal{O}_p\left(\frac{1}{N} + \frac{1}{N^{\beta}} \frac{\log N}{T}\right)$.
	
	By the matrix inversion formula, we have for $A_4$ the following:
	\begin{align*}
		A_4 = &\frac{1}{N} \trace{\bS_u\bSigma_{u}^{-1}\bLam\left(\bLam'\bSigma_{u}^{-1} \bLam\right)^{-1} \bLam'\bSigma_{u}^{-1}} \\
		&- \frac{1}{N} \trace{\bS_u\bSigma_{u}^{-1}\bLam\left(\bLam'\bSigma_{u}^{-1} \bLam\right)^{-1}\left(\bI_r + \bLam'\bSigma_{u}^{-1} \bLam\right)^{-1} \bLam'\bSigma_{u}^{-1}}.
	\end{align*}
	From equations \eqref{eig1} and \eqref{eig2}, we see that the second term on the right hand side is uniformly of a smaller order than the first term. The first term of $A_4$ is bounded by:
	\begin{align*}
		A_4 &= \traces{\left(\bSigma_{u}^{-1} \bS_u \bSigma_{u}^{-1}\right)^{1/2} \bLam \left(\bLam'\bSigma_u^{-1}\bLam\right)^{-1} \bLam'\left(\bSigma_{u}^{-1} \bS_u \bSigma_{u}^{-1}\right)^{1/2}}\\
		&\quad \leq \traces{\bSigma_{u}^{-1} \bS_u \bSigma_{u}^{-1}} \pi_{\max}\left(\bLam \left(\bLam'\bSigma_u^{-1}\bLam\right)^{-1} \bLam'\right)	\\
		&\quad \leq \traces{\left(\bS_u\bSigma_{u}^{-1}\right)^{1/2} \bSigma_{u}^{-1}\left(\bS_u\bSigma_{u}^{-1}\right)^{1/2}} \mathcal{O}(1)\\
		&\quad \leq \trace{\bS_u\bSigma_{u}^{-1}} \mathcal{O}(1).
	\end{align*}
	Hence, we can bound the unpenalized log-likelihood function by:
	\begin{align*}
		\mathcal{L}(\bLam, \bSigma_{u}) &= \frac{1}{N} \trace{\bLam_0'\bSigma_{u}^{-1}\bLam_0} + \frac{1}{N}\trace{\bS_u\bSigma_{u}^{-1}} + \frac{1}{N} \log\left|\bSigma_{u}\right| \\
		&\quad - \frac{1}{N} \trace{\bLam_0' \bSigma_{u}^{-1} \bLam \left(\bLam'\bSigma_{u}^{-1}\bLam\right)^{-1}\bLam'\bSigma_{u}^{-1} \bLam_0} + \mathcal{O}_p\left(\frac{\log N^{\beta}}{N} + \frac{1}{N^{\beta}}\frac{\log N}{T}\right)	\\
		&= \frac{1}{N}\trace{\bS_u\bSigma_{u}^{-1}} + \frac{1}{N} \log\left|\bSigma_{u}\right| + \frac{1}{N} \traces{\left(\bLam - \bLam_0\right)' \bSigma_{u}^{-1} \left(\bLam - \bLam_0\right)}	\\
		&\quad - \frac{1}{N} \traces{\left(\bLam - \bLam_0\right)' \bSigma_{u}^{-1} \bLam \left(\bLam'\bSigma_{u}^{-1} \bLam\right)^{-1} \bLam'\bSigma_{u}^{-1}\left(\bLam - \bLam_0\right)} +\mathcal{O}_p\left(\frac{\log N^{\beta}}{N} + \frac{1}{N^{\beta}}\frac{\log N}{T}\right)\\
		&= \frac{1}{N} \log\left|\bSigma_{u}\right| + \frac{1}{N}\trace{\bS_u\bSigma_{u}^{-1}} +  Q_2(\bLam, \bSigma_{u}) + \mathcal{O}_p\left(\frac{\log N^{\beta}}{N} + \frac{1}{N^{\beta}}\frac{\log N}{T}\right).
	\end{align*}
	By the definition of $Q_3(\bLam, \bSigma_{u})$ we have:
	\begin{align*}
		\sup_{\left(\bLam, \bSigma_{u}\right)\in \Psi_\delta}\left|Q_3(\bLam, \bSigma_{u})\right| = \mathcal{O}_p\left(\frac{\log N^{\beta}}{N} + \frac{1}{N^{\beta}}\frac{\log N}{T}\right).
	\end{align*}
\end{proof}
\begin{lemma} For $d_T = \frac{\log N^{\beta}}{N} + \frac{1}{N^{\beta}}\frac{\log N}{T}$,
	\begin{align*}
		Q_1(\hat{\bLam}, \hat{\bSigma}_u) + Q_2(\hat{\bLam}, \hat{\bSigma}_u) = \mathcal{O}_p\left(d_T\right).
	\end{align*}
\end{lemma}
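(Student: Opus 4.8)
The plan is to read the bound off the optimality of $(\hat{\bLam},\hat{\bSigma}_u)$. Up to additive constants that do not depend on $(\bLam,\bSigma_u)$ — namely $\tfrac1N\log\lvert\bPhi_{u0}\rvert+\tfrac1N\trace{\bS_u\bPhi_{u0}^{-1}}+\tfrac{\mu}{N}\sum_{k,i}\lvert\lambda_{ik0}\rvert$ — the penalized objective in \eqref{sparse_lasso} coincides with $\mathcal{L}_p$ of \eqref{const_lik}, so $(\hat{\bLam},\hat{\bSigma}_u)$ also minimizes $\mathcal{L}_p$ and in particular $\mathcal{L}_p(\hat{\bLam},\hat{\bSigma}_u)\le\mathcal{L}_p(\bLam_0,\bPhi_{u0})$. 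I would evaluate the right-hand side first: at $(\bLam_0,\bPhi_{u0})$ every term of $Q_1$ cancels against its counterpart at the truth, including the two $\ell_1$-penalty terms, so $Q_1(\bLam_0,\bPhi_{u0})=0$, and $Q_2(\bLam_0,\bPhi_{u0})=0$ because the loadings difference vanishes; hence $\mathcal{L}_p(\bLam_0,\bPhi_{u0})=Q_3(\bLam_0,\bPhi_{u0})$. By Assumption \ref{assum_lam}, $\pi_{\min}(\bLam_0'\bLam_0/N^{\beta})$ and $\pi_{\max}(\bLam_0'\bLam_0/N^{\beta})$ are bounded away from $0$ and $\infty$, and by condition \textit{\ref{assum_sig}} of Assumption \ref{data_assum} so are the diagonal entries of $\bSigma_{u0}$, i.e.\ the eigenvalues of $\bPhi_{u0}$; thus $(\bLam_0,\bPhi_{u0})\in\Psi_\delta$ for $\delta$ large enough, and Lemma \ref{lemma1} gives $\lvert Q_3(\bLam_0,\bPhi_{u0})\rvert\le\sup_{\Psi_\delta}\lvert Q_3\rvert=\mathcal{O}_p(d_T)$. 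Therefore $Q_1(\hat{\bLam},\hat{\bSigma}_u)+Q_2(\hat{\bLam},\hat{\bSigma}_u)+Q_3(\hat{\bLam},\hat{\bSigma}_u)\le\mathcal{O}_p(d_T)$.

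To pass to $Q_1+Q_2$ alone I would apply Lemma \ref{lemma1} once more, now at the estimator: provided $(\hat{\bLam},\hat{\bSigma}_u)\in\Psi_\delta$ with probability tending to one — which must be argued separately, via the standard reasoning that a minimizer with a vanishing or exploding eigenvalue of $\bLam\bLam'+\bSigma_u$ would force the objective to diverge — we obtain $\lvert Q_3(\hat{\bLam},\hat{\bSigma}_u)\rvert=\mathcal{O}_p(d_T)$ and hence $Q_1(\hat{\bLam},\hat{\bSigma}_u)+Q_2(\hat{\bLam},\hat{\bSigma}_u)\le\mathcal{O}_p(d_T)$. For the matching lower bound, note that $Q_2\ge0$ identically: writing $Q_2(\bLam,\bSigma_u)=\tfrac1N\traces{(\bLam-\bLam_0)'\bSigma_u^{-1/2}(\bI_N-\bP)\bSigma_u^{-1/2}(\bLam-\bLam_0)}$ with $\bP$ the orthogonal projector onto the column space of $\bSigma_u^{-1/2}\bLam$, the inner matrix is positive semidefinite. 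For $Q_1$ I would split it into its Gaussian-loss part $g(\hat{\bSigma}_u)-g(\bSigma_{u0})$, with $g(\bA)=\tfrac1N[\log\lvert\bA\rvert+\trace{\bS_u\bA^{-1}}]$, and its penalty part $\tfrac{\mu}{N}(\lone{\hat{\bLam}}-\lone{\bLam_0})$. Since $\bA\mapsto\log\lvert\bA\rvert+\trace{\bB\bA^{-1}}$ is minimized at $\bA=\bB$, a Bregman-divergence computation gives $g(\hat{\bSigma}_u)-g(\bSigma_{u0})\ge\tfrac1N\trace{(\bS_u-\bSigma_{u0})(\hat{\bSigma}_u^{-1}-\bSigma_{u0}^{-1})}$, and the cross term is controlled by Lemma \ref{lemma_fan}\,(i), the eigenvalue bounds that hold on $\Psi_\delta$, and a Young-type step absorbing the resulting quadratic piece back into the strictly convex leading term of the divergence. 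The penalty difference is bounded below by $-\tfrac{\mu}{N}\lone{\bLam_0}$, and $\lone{\bLam_0}\le\sqrt{L_N}\,\frob{\bLam_0}=\mathcal{O}(N^{(1+\beta)/2})$ by Cauchy–Schwarz with Assumption \ref{assum_lam} and condition \textit{\ref{spars1}} of Assumption \ref{sparsity_assum}, so with $\mu=o(1)$ it is negligible; collecting the pieces gives $Q_1(\hat{\bLam},\hat{\bSigma}_u)+Q_2(\hat{\bLam},\hat{\bSigma}_u)=\mathcal{O}_p(d_T)$.

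The delicate point — and the main obstacle — is this lower bound on $Q_1$: one must check that the cross term $\tfrac1N\trace{(\bS_u-\bSigma_{u0})(\hat{\bSigma}_u^{-1}-\bSigma_{u0}^{-1})}$ is genuinely of order $d_T$ rather than the a priori larger order $\sqrt{(\log N)/T}$ that a naive bound would yield. This forces one to split $\hat{\bSigma}_u^{-1}-\bSigma_{u0}^{-1}=(\hat{\bSigma}_u^{-1}-\bPhi_{u0}^{-1})+(\bPhi_{u0}^{-1}-\bSigma_{u0}^{-1})$, to use that $\hat{\bSigma}_u$ and $\bPhi_{u0}$ are diagonal so the first trace collapses to a sum over the diagonal (where Lemma \ref{lemma_fan}\,(i) applies with $i=j$), and to exploit the self-improving structure — the positive Bregman term dominates a multiple of $\tfrac1N\frob{\hat{\bSigma}_u-\bSigma_{u0}}^2\ge\tfrac1N\frob{\hat{\bPhi}_u-\bPhi_{u0}}^2$, which then absorbs the quadratic contribution of the cross term. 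A secondary obstacle is the separate verification that the unconstrained minimizer stays in $\Psi_\delta$ asymptotically, which is precisely what licenses invoking Lemma \ref{lemma1} at $(\hat{\bLam},\hat{\bSigma}_u)$.
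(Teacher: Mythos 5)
Your core argument is exactly the paper's: by optimality of the penalized estimator, $\mathcal{L}_p(\hat{\bLam},\hat{\bSigma}_{u})\le\mathcal{L}_p(\bLam_0,\bSigma_{u0})=Q_3(\bLam_0,\bSigma_{u0})$, and since $Q_1+Q_2=\mathcal{L}_p-Q_3$ by \eqref{const_lik}, Lemma~\ref{lemma1} yields $Q_1(\hat{\bLam},\hat{\bSigma}_{u})+Q_2(\hat{\bLam},\hat{\bSigma}_{u})\le Q_3(\bLam_0,\bSigma_{u0})-Q_3(\hat{\bLam},\hat{\bSigma}_{u})\le 2\sup_{\Psi_\delta}\lvert Q_3\rvert=\mathcal{O}_p(d_T)$. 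That one-sided bound is the entirety of the paper's proof (display \eqref{q_1_q2}) and is all that is used downstream; your remark that invoking Lemma~\ref{lemma1} at the estimator presupposes $(\hat{\bLam},\hat{\bSigma}_{u})\in\Psi_\delta$ with probability tending to one correctly identifies a step the paper passes over silently.

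Two problems with your additional material. First, the cancellation that makes the reference value equal to $Q_3$ holds at $(\bLam_0,\bSigma_{u0})$, which is the comparison point the paper takes; at the diagonal point $(\bLam_0,\bPhi_{u0})$ you do not get $Q_1=0$ but rather the extra contribution $\frac{1}{N}\bigl[\log\lvert\bPhi_{u0}\rvert-\log\lvert\bSigma_{u0}\rvert\bigr]+\frac{1}{N}\trace{\bS_u\left(\bPhi_{u0}^{-1}-\bSigma_{u0}^{-1}\right)}$, whose deterministic part is nonnegative by Fischer's inequality and in general of order one, not $\mathcal{O}_p(d_T)$, whenever $\bSigma_{u0}$ is non-diagonal. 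So one must compare at the non-diagonal truth, accepting the feasibility question you were trying to sidestep. Second, the matching lower bound cannot close at rate $d_T$: after the Young step the diagonal cross term $\frac{1}{N}\sum_{i}\left((\bS_u)_{ii}-\phi_{u0,ii}\right)\left(\hat{\phi}_{u,ii}^{-1}-\phi_{u0,ii}^{-1}\right)$ leaves a residual of order $(\log N)/T$, which is not $\mathcal{O}(d_T)$ when $T=o(N)$ — consistent with the weaker rate $\mathcal{O}_p\left(\frac{\log N}{T}+d_T\right)$ that the paper itself obtains for $\frac{1}{N}\frob{\hat{\bPhi}_{u}-\bPhi_{u0}}^2$ in Lemma~\ref{lemma_q2}. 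The lemma should therefore be read as the one-sided statement the paper proves; the two-sided version at rate $d_T$ is not attainable by this route, and is not needed for the subsequent lemmas.
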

\begin{proof}
	If we consider equation \eqref{const_lik2} at the true parameter values, we get:
	\begin{align}
		\begin{split}
			\mathcal{L}_p(\bLam_0, \bSigma_{u 0}) &= \frac{1}{N} \log\left|\bLam_0\bLam_0' + \bSigma_{u 0}\right| + \frac{1}{N} \trace{\bS_x\left(\bLam_0\bLam_0' + \bSigma_{u 0}\right)^{-1}} \\
			&\quad - Q_2(\bLam_0, \bSigma_{u 0}) - \frac{1}{N} \log\left|\bSigma_{u 0}\right| - \frac{1}{N} \trace{\bS_u\bSigma_{u 0}^{-1}}\\
			&\quad + \frac{1}{N} \mu \sum_{k = 1}^{r} \sum_{i = 1}^{N} \left|\lambda_{ik0}\right| - \frac{1}{N} \mu \sum_{k = 1}^{r} \sum_{i = 1}^{N} \left|\lambda_{ik0}\right|\\
			&= Q_3(\bLam_0, \bSigma_{u 0}). \label{const_lik_true}
		\end{split}
	\end{align}
	Hence, by \eqref{const_lik} and \eqref{const_lik_true}, we have:
	\begin{align*}
		Q_1(\hat{\bLam}, \hat{\bSigma}_{u}) + Q_2(\hat{\bLam}, \hat{\bSigma}_{u}) &= \mathcal{L}_p(\hat{\bLam}, \hat{\bSigma}_{u}) -  Q_3(\hat{\bLam}, \hat{\bSigma}_{u})	\\
		& \leq \mathcal{L}_p(\bLam_0, \bSigma_{u 0})- Q_3(\hat{\bLam}, \hat{\bSigma}_{u})\\
		&= Q_3(\bLam_0, \bSigma_{u 0}) - Q_3(\hat{\bLam}, \hat{\bSigma}_{u}) \\
		&= 2 \sup \left|Q_3(\bLam, \bSigma_{u})\right|.
	\end{align*}
	Therefore, by \lemref{lemma1}. we have:
	\begin{align}
		Q_1(\hat{\bLam}, \hat{\bSigma}_{u}) + Q_2(\hat{\bLam}, \hat{\bSigma}_{u}) &\leq d_T. \label{q_1_q2}
	\end{align}
\end{proof}

\begin{lemma}\label{lemma_q2}
	\begin{align*}
		\frac{1}{N} \frob{\hat{\bPhi}_{u} -\bPhi_{u 0}}^2 = \mathcal{O}_p\left(\frac{\log N}{T} + d_T\right) = o_p(1).
	\end{align*}	
\end{lemma}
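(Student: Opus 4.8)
The plan is to read the bound off inequality \eqref{q_1_q2} — $Q_1(\hat{\bLam},\hat{\bPhi}_u)+Q_2(\hat{\bLam},\hat{\bPhi}_u)=\mathcal{O}_p(d_T)$ — by pairing it with a quadratic lower bound for $Q_1$ in terms of $\tfrac1N\frob{\hat{\bPhi}_u-\bPhi_{u0}}^2$. Recall that in the first step $\bSigma_u$ is constrained to be diagonal, so $\hat{\bSigma}_u=\hat{\bPhi}_u$, and that $(\hat{\bLam},\hat{\bPhi}_u)\in\Psi_\delta$ with probability tending to one, so all eigenvalues entering the constants below are bounded and bounded away from zero.

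First I would record that $Q_2\geq 0$: writing $\bC=\bSigma_u^{-1/2}\bLam$ and $\bD=\bSigma_u^{-1/2}(\bLam-\bLam_0)$ gives $Q_2(\bLam,\bSigma_u)=\tfrac1N\traces{\bD'\bigl(\bI-\bC(\bC'\bC)^{-1}\bC'\bigr)\bD}=\tfrac1N\frob{(\bI-\bP_{\bC})\bD}^2\geq 0$, where $\bP_{\bC}$ is the projection onto the column space of $\bC$; hence $Q_1(\hat{\bLam},\hat{\bPhi}_u)\leq\mathcal{O}_p(d_T)$, and only this upper bound is used. Since $\hat{\bPhi}_u$ is diagonal, $\tfrac1N\log|\hat{\bPhi}_u|+\tfrac1N\trace{\bS_u\hat{\bPhi}_u^{-1}}=\tfrac1N\sum_{i}h_i(\hat\phi_i)$ with $h_i(x)=\log x+s_{u,ii}/x$ (here $\hat\phi_i,s_{u,ii}$ denote the diagonal entries of $\hat{\bPhi}_u,\bS_u$). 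Each $h_i$ is minimized at $s_{u,ii}$, and since $\hat\phi_i$ and $s_{u,ii}$ lie in a fixed compact interval bounded away from $0$ (the latter by \lemref{lemma_fan}(i) and the bounded-eigenvalue part of Assumption \ref{data_assum}), there is a $c>0$ with $h_i(\hat\phi_i)-h_i(s_{u,ii})\geq c(\hat\phi_i-s_{u,ii})^2$. Writing $Q_1$ out and dropping the nonnegative term $\tfrac\mu N\sum_{k,i}|\hat\lambda_{ik}|$ then gives
\begin{align*}
\frac{c}{N}\sum_{i=1}^{N}(\hat\phi_i-s_{u,ii})^2\ \leq\ Q_1(\hat{\bLam},\hat{\bPhi}_u)-R_N+\frac{\mu}{N}\sum_{k=1}^{r}\sum_{i=1}^{N}|\lambda_{ik0}|,
\end{align*}
with $R_N=\tfrac1N\log|\text{diag}(\bS_u)|+1-\tfrac1N\log|\bSigma_{u0}|-\tfrac1N\trace{\bS_u\bSigma_{u0}^{-1}}$ collecting the reference terms.

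It then remains to bound $-R_N$ and the penalty term by $\mathcal{O}_p(\log N/T+d_T)$; once that is done, $\tfrac1N\frob{\hat{\bPhi}_u-\text{diag}(\bS_u)}^2=\mathcal{O}_p(\log N/T+d_T)$, and since $\tfrac1N\frob{\text{diag}(\bS_u)-\bPhi_{u0}}^2\leq\max_i(s_{u,ii}-\sigma_{u0,ii})^2=\mathcal{O}_p(\log N/T)$ by \lemref{lemma_fan}(i), the triangle inequality yields $\tfrac1N\frob{\hat{\bPhi}_u-\bPhi_{u0}}^2=\mathcal{O}_p(\log N/T+d_T)$, which is $o_p(1)$ because $\log N=o(T)$ and $\tfrac{\log N^\beta}{N}\to0$ and $\tfrac1{N^\beta}\tfrac{\log N}{T}\leq\tfrac{\log N}{T}\to0$. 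For $-R_N$ I would isolate the Hadamard term $\tfrac1N\log\tfrac{|\bSigma_{u0}|}{\prod_i\sigma_{u0,ii}}\leq 0$ (which only helps), two mean-zero pieces whose variances are controlled by the moment and mixing conditions of Assumption \ref{data_assum} and the row-sparsity $S_N$ of $\bSigma_{u0}$ and which are $o_p(\log N/T+d_T)$ under $S_N^2d_T=o(1)$, and a quadratic Taylor remainder of the diagonal log-terms equal to $\mathcal{O}_p\!\bigl(\tfrac1N\sum_i(s_{u,ii}-\sigma_{u0,ii})^2\bigr)=\mathcal{O}_p(\log N/T)$. The penalty term I would bound through $\mu=o(1)$, $S_N\mu=o(1)$, $L_N=\mathcal{O}(N)$ and $\trace{\bLam_0'\bLam_0}=\mathcal{O}(N^\beta)$ from Assumptions \ref{assum_lam} and \ref{sparsity_assum}.

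The main obstacle, I expect, is the combination of two effects: the quasi-likelihood is referenced against the \emph{full} $\bSigma_{u0}$ while the first-step estimate is forced to be \emph{diagonal}, so one must verify that the discarded off-diagonal block of $\bSigma_{u0}$ enters only with a favorable sign — it does, through the Hadamard inequality and through $h_i$ being minimized at the sample diagonal $s_{u,ii}$ rather than at $\sigma_{u0,ii}$ — and that the $\ell_1$-penalty gap together with the $S_N$-dependent fluctuations of $\bS_u-\bSigma_{u0}$ do not inflate the rate beyond $\mathcal{O}_p(\log N/T+d_T)$. This last point is where the sparsity and tuning conditions of Assumption \ref{sparsity_assum} are genuinely needed, and where a crude Frobenius bound on $\bS_u-\bSigma_{u0}$ would lose the rate.
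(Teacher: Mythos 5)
Your argument is correct in outline and takes a genuinely different route from the paper's. The paper starts from the same inequality $Q_1+Q_2\leq \mathcal{O}_p(d_T)$, drops $B_2$, and then invokes the argument of Lemma B.1 of the cited Bai--Li paper: expanding around $\bPhi_{u0}^{-1}$ yields a quadratic term $c\frob{\hat{\bPhi}_{u}^{-1}-\bPhi_{u0}^{-1}}^2$ \emph{minus} a linear cross term of entrywise order $\sqrt{\log N/T}$ coming from $\bS_u-\bSigma_{u0}$, which is handled by Cauchy--Schwarz and solving the resulting quadratic inequality in $\frob{\hat{\bPhi}_{u}^{-1}-\bPhi_{u0}^{-1}}$, followed by the conversion back to $\frob{\hat{\bPhi}_{u}-\bPhi_{u0}}$. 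You instead exploit diagonality to write the fitted part of $Q_1$ as $\tfrac1N\sum_i h_i(\hat\phi_i)$ with $h_i$ exactly minimized at $s_{u,ii}$, so the linear term vanishes and you get a clean quadratic in $\hat\phi_i-s_{u,ii}$; the price is the reference term $R_N$ (where Hadamard's inequality gives the discarded off-diagonal block of $\bSigma_{u0}$ the favorable sign, and the mean-zero pieces genuinely need the mixing and $S_N$-sparsity conditions rather than a crude Frobenius bound) plus a final triangle inequality via $\max_i|s_{u,ii}-\sigma_{u0,ii}|=\mathcal{O}_p(\sqrt{\log N/T})$. Both routes deliver the same rate; yours avoids the cross-term/quadratic-solve machinery and makes explicit why centering at the sample diagonal is harmless, while the paper's version is shorter only because it delegates the key step to the cited lemma. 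One loose end you share with the paper: the penalty gap $\tfrac{\mu}{N}\sum_{i,k}|\lambda_{ik0}|$ must be disposed of (it sits inside the paper's $B_2$, which is silently dropped even though this piece enters with a negative sign); by \eqref{ine_spars} it is of order $\mu N^{(\beta-1)/2}$, which under the stated assumptions is only $o(1)$ and not obviously $\mathcal{O}(d_T+\log N/T)$, so the absence of a $\mu$-term in the stated rate is not fully justified by either argument — this is inherited from the paper, not a defect introduced by your approach.
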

\begin{proof}
	By equation \eqref{q_1_q2} and the definition of $Q_1(\hat{\bLam}, \hat{\bSigma}_{u})$ and $Q_2(\hat{\bLam}, \hat{\bSigma}_{u})$, we get:
	\begin{align}
		B_1 + B_2 \leq d_T,\label{a1a2}
	\end{align}
	where $B_1$ and $B_2$ are defined as:
	\begin{align*}
		B_1 &= \frac{1}{N} \log\left|\hat{\bSigma}_{u}\right| + \frac{1}{N} \trace{\bS_u \hat{\bSigma}_{u}^{-1}} - \frac{1}{N} \log\left|\bSigma_{u 0}\right| - \frac{1}{N} \trace{\bS_u \bSigma_{u 0}^{-1}},\\
		B_2 &= \frac{1}{N} \traces{\left(\hat{\bLam} - \bLam_0\right)' \hat{\bSigma}_{u}^{-1} \left(\hat{\bLam} - \bLam_0\right) - \left(\hat{\bLam} - \bLam_0\right)' \hat{\bSigma}_{u}^{-1} \hat{\bLam} \left(\hat{\bLam}'\hat{\bSigma}_{u}^{-1} \hat{\bLam}\right)^{-1} \hat{\bLam}'\hat{\bSigma}_{u}^{-1}\left(\hat{\bLam} - \bLam_0\right)}	\\
		&\quad + \frac{1}{N} \mu \sum_{k = 1}^{r} \sum_{i = 1}^{N} \left|\hat{\lambda}_{ik}\right| - \frac{1}{N} \mu \sum_{k = 1}^{r} \sum_{i = 1}^{N} \left|\lambda_{ik0}\right|.	\\
	\end{align*}
	By equation \eqref{a1a2}, we can see that
	\begin{align*}
		\frac{1}{N} \log\left|\hat{\bSigma}_{u}\right| + \frac{1}{N} \trace{\bS_u \hat{\bSigma}_{u}^{-1}} - \frac{1}{N} \log\left|\bSigma_{u 0}\right| - \frac{1}{N} \trace{\bS_u \bSigma_{u 0}^{-1}} \leq d_T
	\end{align*}
	and
	\begin{align}
		\frac{1}{N} \log\left|\hat{\bPhi}_{u}\right| + \frac{1}{N} \trace{\bS_u \hat{\bPhi}_{u}^{-1}} - \frac{1}{N} \log\left|\bPhi_{u 0}\right| - \frac{1}{N} \trace{\bS_u \bPhi_{u 0}^{-1}} \leq d_T, \label{a1}
	\end{align}
	where $\bPhi_u = \text{diag}\left(\bSigma_{u}\right)$ and denotes a covariance matrix that contains only the elements of the main diagonal of $\bSigma_{u}$. Using the same argument as in the proof of \textit{Lemma B.1.} in \citesupp{Bai2016}, we get:
	\begin{align*}
		c\frob{\hat{\bPhi}_{u}^{-1} - \bPhi_{u 0}^{-1}}^2 - \mathcal{O}_p\left(\sqrt{\frac{\log N}{T}}\right)\sum_{ij}\left|\phi_{u0, ij} - \hat{\phi}_{u,ij} \right| &\leq Nd_T	\\
		c\frob{\hat{\bPhi}_{u}^{-1} - \bPhi_{u 0}^{-1}}^2 - \mathcal{O}_p\left(\sqrt{\frac{\log N}{T}}\right) \sqrt{N} \frob{\hat{\bPhi}_{u} - \bPhi_{u 0}} &\leq Nd_T	\\
		c\frob{\hat{\bPhi}_{u}^{-1} - \bPhi_{u 0}^{-1}}^2 - \mathcal{O}_p\left(\sqrt{\frac{\log N}{T}}\right) \sqrt{N} \frob{\hat{\bPhi}_{u}\left(\hat{\bPhi}_{u}^{-1} - \bPhi_{u 0}^{-1}\right) \bPhi_{u 0}} &\leq Nd_T	\\
		c\frob{\hat{\bPhi}_{u}^{-1} - \bPhi_{u 0}^{-1}}^2 - \mathcal{O}_p\left(\sqrt{\frac{\log N}{T}}\right) \sqrt{N} \spec{\hat{\bPhi}_{u}} \spec{\bPhi_{u 0}} \frob{\hat{\bPhi}_{u}^{-1} - \bPhi_{u 0}^{-1}} &\leq Nd_T
	\end{align*}
	Solving for $\frob{\hat{\bPhi}_{u}^{-1} - \bPhi_{u 0}^{-1}}$ yields:
	\begin{align*}
		\frob{\hat{\bPhi}_{u}^{-1} - \bPhi_{u 0}^{-1}} &= \mathcal{O}_p\left(\sqrt{\frac{N\log N}{T}} + \sqrt{N d_T}\right),	\\
		\frac{1}{N}\frob{\hat{\bPhi}_{u}^{-1} - \bPhi_{u 0}^{-1}}^2 &= \mathcal{O}_p\left(\frac{\log N}{T} + d_T\right) = o_p(1).
	\end{align*}
	Hence, we can conclude the proof by the following derivation:
	\begin{align*}
		\frac{1}{N}\frob{\hat{\bPhi}_{u} - \bPhi_{u 0}}^2 &=  \frac{1}{N}\frob{\hat{\bPhi}_{u}\left(\bPhi_{u 0}^{-1} - \hat{\bPhi}_{u}^{-1}\right)\bPhi_{u 0}}^2	\\
		&\leq \frac{1}{N}\spec{\hat{\bPhi}_{u}}^2\spec{\bPhi_{u 0}}^2 \frob{\hat{\bPhi}_{u}^{-1} - \bPhi_{u 0}^{-1}}^2.
	\end{align*}
\end{proof}

In the following, we establish the consistency of the factor loadings estimator. Initially, we bound the first part of $B_2$ defined in equation \eqref{a1a2}.

\begin{lemma}\label{lem_b2_1}
	\begin{align*}
		&\frac{1}{N} \traces{\left(\hat{\bLam} - \bLam_0\right)' \hat{\bSigma}_{u}^{-1} \left(\hat{\bLam} - \bLam_0\right) - \left(\hat{\bLam} - \bLam_0\right)' \hat{\bSigma}_{u}^{-1} \hat{\bLam} \left(\hat{\bLam}'\hat{\bSigma}_{u}^{-1} \hat{\bLam}\right)^{-1} \hat{\bLam}'\hat{\bSigma}_{u}^{-1}\left(\hat{\bLam} - \bLam_0\right)}\\
		&\geq \mathcal{O}_p\left(\frac{L_N}{N}\right) \max_{i \leq N} \spec{\hat{\blam}_i - \blam_{i0}}^2.
	\end{align*}
\end{lemma}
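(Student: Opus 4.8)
\textbf{Proof of Lemma~\ref{lem_b2_1} (proposal).}
The plan is to recognise the bracketed $(r\times r)$ matrix as a projected Gram matrix, hence positive semidefinite — this is exactly what yields the ``$\geq$'' — then to strip off its rank-$r$ projection part using the eigenvalue bounds \eqref{eig1}--\eqref{eig2}, leaving a term comparable to $\frac{1}{N}\frob{\hat\bLam-\bLam_0}^2$, and finally to convert the latter into $\frac{L_N}{N}\max_{i\leq N}\spec{\hat\blam_i-\blam_{i0}}^2$ by means of the sparsity imposed in Assumption~\ref{sparsity_assum}. Concretely, set $\bD:=\hat\bLam-\bLam_0$ and let $\bP:=\hat\bSigma_u^{-1/2}\hat\bLam(\hat\bLam'\hat\bSigma_u^{-1}\hat\bLam)^{-1}\hat\bLam'\hat\bSigma_u^{-1/2}$ be the orthogonal projector onto the column space of $\hat\bSigma_u^{-1/2}\hat\bLam$. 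A direct computation shows the matrix inside the trace equals $\bD'\hat\bSigma_u^{-1/2}(\bI_N-\bP)\hat\bSigma_u^{-1/2}\bD$, whence
\begin{align*}
	\frac{1}{N}\traces{\big(\hat\bLam-\bLam_0\big)'\hat\bSigma_u^{-1}\big(\hat\bLam-\bLam_0\big)-\cdots}=\frac{1}{N}\frob{(\bI_N-\bP)\hat\bSigma_u^{-1/2}\bD}^2 \geq 0 .
\end{align*}

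Because $\bI_N-\bP$ is an orthogonal projector, the right-hand side equals $\frac{1}{N}\big(\frob{\hat\bSigma_u^{-1/2}\bD}^2-\frob{\bP\hat\bSigma_u^{-1/2}\bD}^2\big)$. On $\Psi_\delta$ one has $\frob{\hat\bSigma_u^{-1/2}\bD}^2\geq\pi_{\min}(\hat\bSigma_u^{-1})\frob{\bD}^2\geq\delta^{-1}\frob{\bD}^2$, while $\frob{\bP\hat\bSigma_u^{-1/2}\bD}^2\leq\pi_{\max}[(\hat\bLam'\hat\bSigma_u^{-1}\hat\bLam)^{-1}]\frob{\hat\bLam'\hat\bSigma_u^{-1}\bD}^2=\mathcal{O}_p(N^{-\beta})\frob{\hat\bLam'\hat\bSigma_u^{-1}\bD}^2$ by \eqref{eig1}. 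The crux is to show this correction is of strictly smaller order than $\frob{\bD}^2$: writing $\hat\bLam'\hat\bSigma_u^{-1}\bD=\hat\bLam'\hat\bSigma_u^{-1}\hat\bLam-\hat\bLam'\hat\bSigma_u^{-1}\bLam_0$ and using the first-order conditions of \eqref{sparse_lasso} in $\bLam$ at $(\hat\bLam,\hat\bSigma_u)$ — which force $\hat\bLam'\hat\bSigma_u^{-1}(\hat\bLam\hat\bLam'+\hat\bSigma_u-\bS_x)(\hat\bLam\hat\bLam'+\hat\bSigma_u)^{-1}\hat\bLam$ to equal an $l_1$-subgradient term of size $\mathcal{O}(\mu)$ — together with the decomposition of $\bS_x$ from \lemref{lemma1}, the deviation bounds of \lemref{lemma_fan}, the rate for $\hat\bPhi_u$ from \lemref{lemma_q2} and the bounds \eqref{eig1}--\eqref{eig2}, one obtains $\frob{\hat\bLam'\hat\bSigma_u^{-1}\bD}^2=\mathcal{O}_p\big(N^{\beta}(\mu^2+d_T)\big)+o_p(N^{\beta})\frob{\bD}^2$. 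Thus $\frob{\bP\hat\bSigma_u^{-1/2}\bD}^2=\mathcal{O}_p(\mu^2+d_T)+o_p(1)\frob{\bD}^2$, and so $\frac{1}{N}\frob{(\bI_N-\bP)\hat\bSigma_u^{-1/2}\bD}^2\geq\frac{c}{N}\frob{\bD}^2$ for some $c>0$ after absorbing the lower-order remainder (if $\frob{\bD}^2$ is itself only of order $N(\mu^2+d_T)$ the asserted bound holds trivially).

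Finally, $\frac{1}{N}\frob{\bD}^2$ is replaced by $\frac{L_N}{N}\max_{i\leq N}\spec{\hat\blam_i-\blam_{i0}}^2$. By Assumption~\ref{sparsity_assum}\ref{spars1} the support of $\bLam_0$ involves between a multiple of $L_N/r$ and $L_N$ nonzero entries, and — through the soft-thresholding embedded in \eqref{sparse_lasso} — the support of $\hat\bLam$ is, up to an event of vanishing probability, contained in a set of the same order, while the subgradient conditions force the perturbations $\hat\lambda_{ik}-\lambda_{ik0}$ across the active set to be of comparable magnitude. Consequently the number of rows of $\bD$ with squared norm of the order of $\max_{i\leq N}\spec{\hat\blam_i-\blam_{i0}}^2$ is of order $L_N$, so $\frob{\bD}^2=\sum_{i\leq N}\spec{\hat\blam_i-\blam_{i0}}^2\geq c'\,L_N\max_{i\leq N}\spec{\hat\blam_i-\blam_{i0}}^2$ with $c'>0$ depending only on the fixed $r$. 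Combining the three displays gives the claim.

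The main obstacle is the order reduction of $\frob{\bP\hat\bSigma_u^{-1/2}\bD}^2$: crude spectral-norm estimates only yield $\frob{\bP\hat\bSigma_u^{-1/2}\bD}^2=\mathcal{O}_p(1)\frob{\bD}^2$, i.e.\ the \emph{same} order as the leading term $\frob{\hat\bSigma_u^{-1/2}\bD}^2$, so without further input the difference need not exceed a positive multiple of $\frob{\bD}^2$. One genuinely has to use the optimality of $\hat\bLam$ to control $\hat\bLam'\hat\bSigma_u^{-1}(\hat\bLam-\bLam_0)$ relative to the $\mathcal{O}(N^{\beta})$ operator norm of $\hat\bLam'\hat\bSigma_u^{-1}\hat\bLam$; that $\bP$ has the fixed rank $r$ is what makes this feasible. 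The ``comparable-magnitude'' claim in the previous paragraph is a secondary technical point, handled from the same subgradient conditions that pin down the support and the entry sizes of $\hat\bLam$.
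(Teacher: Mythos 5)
Your opening move --- rewriting the bracketed matrix as $\bC'\hat{\bSigma}_u^{-1/2}(\bI_N-\bM)\hat{\bSigma}_u^{-1/2}\bC$ with $\bC=\hat{\bLam}-\bLam_0$ and $\bM$ the orthogonal projector onto the column space of $\hat{\bSigma}_u^{-1/2}\hat{\bLam}$, hence nonnegativity --- is correct and is a cleaner version of the paper's first step. But from there you diverge: the paper's entire proof is two eigenvalue bounds, namely lower-bounding the first trace by $\frac{1}{N}\frob{\bC}^2\,\pi_{\min}(\hat{\bSigma}_u^{-1})$ and upper-bounding the subtracted trace by $\frac{1}{N}\frob{\bC}^2\,\pi_{\max}\bigl(\hat{\bSigma}_u^{-1}\hat{\bLam}(\hat{\bLam}'\hat{\bSigma}_u^{-1}\hat{\bLam})^{-1}\hat{\bLam}'\hat{\bSigma}_u^{-1}\bigr)$, before passing directly to $\mathcal{O}_p(L_N/N)\max_{i\leq N}\spec{\hat{\blam}_i-\blam_{i0}}^2$. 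You correctly observe that such crude bounds give the projected term the \emph{same} order as the leading term, and you try to repair this with first-order conditions of \eqref{sparse_lasso}. That repair is only asserted, not carried out: the claim $\frob{\hat{\bLam}'\hat{\bSigma}_u^{-1}\bC}^2=\mathcal{O}_p(N^{\beta}(\mu^2+d_T))+o_p(N^{\beta})\frob{\bC}^2$ is exactly the kind of rate for $\hat{\bLam}$ that this lemma is being used to \emph{produce} (it feeds into Lemma \ref{lem_est_load}), so as sketched the argument is circular, and none of the intermediate subgradient computations are actually performed.

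The second genuine gap is the final passage $\frob{\bC}^2\geq c'\,L_N\max_{i\leq N}\spec{\hat{\blam}_i-\blam_{i0}}^2$. The trivial direction is $\frob{\bC}^2=\sum_{i\leq N}\spec{\hat{\blam}_i-\blam_{i0}}^2\geq\max_{i\leq N}\spec{\hat{\blam}_i-\blam_{i0}}^2$, i.e.\ a factor of $1$, not $L_N$; gaining the factor $L_N$ requires that order-$L_N$ many rows of $\hat{\bLam}-\bLam_0$ attain the maximal order. Neither Assumption \ref{sparsity_assum} nor the soft-thresholding KKT conditions force the perturbations across the active set to be ``of comparable magnitude'': a single badly estimated row is fully compatible with the subgradient conditions, in which case $\frob{\bC}^2\asymp\max_{i\leq N}\spec{\hat{\blam}_i-\blam_{i0}}^2\ll L_N\max_{i\leq N}\spec{\hat{\blam}_i-\blam_{i0}}^2$. (The paper's own last line makes the same jump with no more justification, so you have landed on the genuinely delicate points of this lemma; but the justifications you supply do not close either of them, and the overall construction is far heavier than the two-line eigenvalue argument the paper intends.)
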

\begin{proof}	
	\begin{align*}
		&\frac{1}{N} \traces{\left(\hat{\bLam} - \bLam_0\right)' \hat{\bSigma}_{u}^{-1} \left(\hat{\bLam} - \bLam_0\right) - \left(\hat{\bLam} - \bLam_0\right)' \hat{\bSigma}_{u}^{-1} \hat{\bLam} \left(\hat{\bLam}'\hat{\bSigma}_{u}^{-1} \hat{\bLam}\right)^{-1} \hat{\bLam}'\hat{\bSigma}_{u}^{-1}\left(\hat{\bLam} - \bLam_0\right)}	\notag\\
		&\geq \frac{1}{N} \traces{\left(\hat{\bLam} - \bLam_0\right)' \left(\hat{\bLam} - \bLam_0\right)} \pi_{\min}\left(\hat{\bSigma}_{u}^{-1}\right)\\
		&\quad - \frac{1}{N} \traces{\left(\hat{\bLam} - \bLam_0\right)' \left(\hat{\bLam}- \bLam_0\right)} \pi_{\max}\left(\hat{\bSigma}_{u}^{-1} \hat{\bLam} \left(\hat{\bLam}'\hat{\bSigma}_{u}^{-1} \hat{\bLam}\right)^{-1} \hat{\bLam}'\hat{\bSigma}_{u}^{-1}\right)	\\
		&\geq \left[\mathcal{O}_p\left(\frac{1}{N}\right) + \mathcal{O}_p\left(\frac{L_N}{N}\right)\right]\max_{i \leq N} \spec{\hat{\blam}_i - \blam_{i0}}^2 = \mathcal{O}_p\left(\frac{L_N}{N}\right) \max_{i \leq N} \spec{\hat{\blam}_i - \blam_{i0}}^2.
	\end{align*}	
\end{proof}

The consistency result for $\hat{\bLam}$ is summarized in the following lemma.

\begin{lemma}\label{lem_est_load}
	\begin{align*}
		\max_{i \leq N} \spec{\hat{\blam}_i - \blam_{i0}} = \mathcal{O}_p\left(\mu + \sqrt{\frac{N d_T}{L_N}}\right).
	\end{align*}
\end{lemma}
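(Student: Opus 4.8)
\textbf{Proof plan for \lemref{lem_est_load}.} The idea is to turn the basic inequality $Q_1(\hat{\bLam},\hat{\bSigma}_{u})+Q_2(\hat{\bLam},\hat{\bSigma}_{u})\le d_T$ from \eqref{q_1_q2} into a scalar quadratic inequality for $m:=\max_{i\le N}\spec{\hat{\blam}_i-\blam_{i0}}$, and then solve it. Write $Q_1=B_1+\frac{\mu}{N}\big(\sum_{k,i}|\hat{\lambda}_{ik}|-\sum_{k,i}|\lambda_{ik0}|\big)$, where $B_1$ collects the first four (idiosyncratic log\=/determinant\-plus\-trace) terms of $Q_1$ --- the quantity analyzed in the proof of \lemref{lemma_q2} --- and observe that $Q_2(\hat{\bLam},\hat{\bSigma}_{u})$ is exactly the left\-hand side lower\-bounded in \lemref{lem_b2_1}. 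Rearranging \eqref{q_1_q2},
\begin{align*}
	Q_2(\hat{\bLam},\hat{\bSigma}_{u})\;\le\;d_T-B_1-\frac{\mu}{N}\Big(\sum_{k,i}|\hat{\lambda}_{ik}|-\sum_{k,i}|\lambda_{ik0}|\Big),
\end{align*}
so it remains to bound the three terms on the right and the term on the left.

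For the left\-hand side I would invoke \lemref{lem_b2_1} directly: with probability approaching one, $Q_2(\hat{\bLam},\hat{\bSigma}_{u})\ge c\,\frac{L_N}{N}\,m^2$ for some constant $c>0$. For $-B_1$ I would reuse the Bregman\-type argument behind \lemref{lemma_q2}: discarding the nonnegative ``population'' piece (the entrywise inequality $x-1-\log x\ge0$ applied to the diagonal of $\bPhi_{u 0}\hat{\bPhi}_{u}^{-1}$) leaves the remainder $\frac{1}{N}\trace{(\bS_u-\bSigma_{u 0})(\hat{\bPhi}_{u}^{-1}-\bPhi_{u 0}^{-1})}$, which by \lemref{lemma_fan} and the rate already derived in \lemref{lemma_q2} for $\frob{\hat{\bPhi}_{u}^{-1}-\bPhi_{u 0}^{-1}}$ is $\mathcal{O}_p(d_T)$; hence $-B_1\le\mathcal{O}_p(d_T)$. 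For the penalty difference I would split the double sum over the support $S_0=\{(i,k):\lambda_{ik0}\neq0\}$, of cardinality $L_N$, and its complement: the off\-support part $\frac{\mu}{N}\sum_{(i,k)\notin S_0}|\hat{\lambda}_{ik}|$ is nonnegative, and on $S_0$ the reverse triangle inequality together with $|\hat{\lambda}_{ik}-\lambda_{ik0}|\le\spec{\hat{\blam}_i-\blam_{i0}}\le m$ gives
\begin{align*}
	\frac{\mu}{N}\Big(\sum_{k,i}|\hat{\lambda}_{ik}|-\sum_{k,i}|\lambda_{ik0}|\Big)\;\ge\;-\frac{\mu}{N}\sum_{(i,k)\in S_0}|\hat{\lambda}_{ik}-\lambda_{ik0}|\;\ge\;-\frac{\mu L_N}{N}\,m.
\end{align*}

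Putting the pieces together yields, with probability approaching one,
\begin{align*}
	c\,\frac{L_N}{N}\,m^2\;\le\;\mathcal{O}_p(d_T)+\frac{\mu L_N}{N}\,m.
\end{align*}
Dividing through by $\frac{L_N}{N}>0$ gives $c\,m^2-\mu\,m-\mathcal{O}_p\!\big(\tfrac{N d_T}{L_N}\big)\le0$, and the elementary implication ``$a x^2\le b x+e$ implies $x\le b/a+\sqrt{e/a}$'' delivers $m=\mathcal{O}_p\big(\mu+\sqrt{N d_T/L_N}\big)$, as claimed. The reason the final rate carries a clean $\mu$ rather than a $\mu\sqrt{N/L_N}$ is structural: the curvature term $Q_2$ and the penalty lower bound both come weighted by $L_N/N$, which cancels on division.

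The main obstacle is obtaining the penalty bound at the sharp order $\frac{\mu L_N}{N}\,m$. This forces one to (i) keep the off\-support entries of $\hat{\bLam}$ on the favourable side of the inequality, rather than bounding $\bigl|\sum_{k,i}|\hat{\lambda}_{ik}|-\sum_{k,i}|\lambda_{ik0}|\bigr|$ by the full entrywise $\ell_1$ distance; and (ii) use the crude coordinatewise bound $|\hat{\lambda}_{ik}-\lambda_{ik0}|\le m$ with $|S_0|=L_N$ --- replacing (ii) by Cauchy--Schwarz over $S_0$ would introduce a spurious $\sqrt{L_N}$ factor and destroy the cancellation. A secondary point is to check that the curvature constant furnished by \lemref{lem_b2_1} is genuinely bounded away from zero (this uses that $\pi_{\min}(\hat{\bSigma}_{u}^{-1})$ is bounded below on $\Psi_\delta$ and that $L_N=\mathcal{O}(N)$) and that the lower\-order remainder inherited from $B_1$ is indeed of order $d_T$.
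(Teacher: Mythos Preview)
Your proposal is correct and follows essentially the same route as the paper: start from the basic inequality \eqref{q_1_q2}, absorb $B_1$ into $\mathcal{O}_p(d_T)$ via \lemref{lemma_q2}, lower\-bound the quadratic form by \lemref{lem_b2_1}, control the penalty difference at order $\tfrac{L_N}{N}\mu\,m$, and solve the resulting quadratic in $m$. Your explicit support split on $S_0$ is in fact a cleaner justification of the $L_N/N$ scaling than the paper's corresponding step, which passes from $\frac{1}{N}\mu\sum_{k,i}|\hat{\lambda}_{ik}-\lambda_{ik0}|$ to $\mathcal{O}(L_N/N)\,\mu\max_i\sum_k|\hat{\lambda}_{ik}-\lambda_{ik0}|$ without making the off\-support nonnegativity explicit.
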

\begin{proof}
	If we consider equation \eqref{a1a2}, \lemref{lemma_q2}. and \lemref{lem_b2_1}., we have
	\begin{align*}
		\mathcal{O}_p\left(\frac{L_N}{N}\right)\max_{i \leq N} \spec{\hat{\blam}_i - \blam_{i0}}^2 + \frac{1}{N} \mu \sum_{k = 1}^{r} \sum_{i = 1}^{N} \left|\hat{\lambda}_{ik}\right| - \left|\lambda_{ik0}\right| &\leq d_T\\
		\mathcal{O}_p\left(\frac{L_N}{N}\right)\max_{i \leq N} \spec{\hat{\blam}_i - \blam_{i0}}^2 - \frac{1}{N} \mu \sum_{k = 1}^{r} \sum_{i = 1}^{N}  \left|\lambda_{ik0}\right| - \left|\hat{\lambda}_{ik}\right| &\leq d_T	\\
		\mathcal{O}_p\left(\frac{L_N}{N}\right)\max_{i \leq N} \spec{\hat{\blam}_i - \blam_{i0}}^2 - \frac{1}{N} \mu \sum_{k = 1}^{r} \sum_{i = 1}^{N} \left|\hat{\lambda}_{ik} - \lambda_{ik0}\right| &\leq d_T	\\
		\mathcal{O}_p\left(\frac{L_N}{N}\right)\max_{i \leq N} \spec{\hat{\blam}_i - \blam_{i0}}^2 - \mathcal{O}\left(\frac{L_N}{N}\right) \mu \max_{i \leq N} \sum_{k = 1}^{r}  \left|\hat{\lambda}_{ik} - \lambda_{ik0}\right| & \leq d_T \\
		\mathcal{O}_p\left(\frac{L_N}{N}\right)\max_{i \leq N} \spec{\hat{\blam}_i - \blam_{i0}}^2 - \mathcal{O}\left(\frac{L_N}{N}\right)\mu \sqrt{r} \sqrt{\max_{i \leq N} \spec{\hat{\blam}_i - \blam_{i0}}^2} & \leq d_T
	\end{align*} 
	Solving for $\max_{i \leq N} \spec{\hat{\blam}_i - \blam_{i0}}$ yields
	\begin{align*}
		\max_{i \leq N} \spec{\hat{\blam}_i - \blam_{i0}} &\leq \mu + \sqrt{\mu^2 + \mathcal{O}_p\left(\frac{N d_T}{L_N}\right)}	\\
		&\leq \mu + \mathcal{O}_p\left(\sqrt{\frac{N d_T}{L_N}}\right).
	\end{align*}
\end{proof}
\begin{lemma}\label{lem_est_factor}
	\begin{align*}
		\frac{1}{T} \sum_{t = 1}^{T} \spec{\hat{\bfa}_t - \bfa_t}^2 = o_p(1).
	\end{align*}
\end{lemma}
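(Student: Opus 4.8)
The plan is to substitute the GLS formula \eqref{gls_factors} together with the model identity $\bx_t=\bLam_0\bfa_t+\bu_t$ into $\hat{\bfa}_t-\bfa_t$ and split the error into a ``loading-error'' part driven by $\hat{\bLam}-\bLam_0$ and a ``noise'' part driven by $\bu_t$. Writing $\bD:=\hat{\bLam}'\hat{\bPhi}_{u}^{-1}\hat{\bLam}$ and using $\bD^{-1}\hat{\bLam}'\hat{\bPhi}_{u}^{-1}\hat{\bLam}=\bI_r$,
\begin{align*}
\hat{\bfa}_t-\bfa_t \;=\; \bD^{-1}\hat{\bLam}'\hat{\bPhi}_{u}^{-1}(\bLam_0-\hat{\bLam})\bfa_t \;+\; \bD^{-1}\hat{\bLam}'\hat{\bPhi}_{u}^{-1}\bu_t \;=:\; \mathrm{I}_t+\mathrm{II}_t,
\end{align*}
so that $\frac{1}{T}\sum_t\spec{\hat{\bfa}_t-\bfa_t}^2\le\frac{2}{T}\sum_t\spec{\mathrm{I}_t}^2+\frac{2}{T}\sum_t\spec{\mathrm{II}_t}^2$. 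I would work throughout on the event $\{(\hat{\bLam},\hat{\bPhi}_{u})\in\Psi_\delta\}$, which has probability tending to one by Assumption \ref{assum_lam}, Theorem \ref{theo_consistency_lam} and Lemma \ref{lemma_q2}; on this event the bounds \eqref{eig1}--\eqref{eig2} (applied with $\hat{\bLam},\hat{\bPhi}_{u}$ in place of $\bLam,\bSigma_{u}$) give $\spec{\bD^{-1}}=\mathcal{O}_p(N^{-\beta})$, $\spec{\hat{\bPhi}_{u}^{-1}}=\mathcal{O}_p(1)$ and $\spec{\hat{\bLam}}^2=\pi_{\max}(\hat{\bLam}'\hat{\bLam})=\mathcal{O}_p(N^{\beta})$.

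For $\mathrm{I}_t$ I would use $\spec{\mathrm{I}_t}\le\spec{\bD^{-1}\hat{\bLam}'\hat{\bPhi}_{u}^{-1}(\bLam_0-\hat{\bLam})}\spec{\bfa_t}$, submultiplicativity of the spectral norm, $\spec{\bLam_0-\hat{\bLam}}\le\frob{\bLam_0-\hat{\bLam}}$, the rate $\frac1N\frob{\hat{\bLam}-\bLam_0}^2=\mathcal{O}_p(\mu^2+d_T)$ from Theorem \ref{theo_consistency_lam} (equivalently Lemma \ref{lem_est_load} with $L_N=\mathcal{O}(N)$), and the identification restriction $\frac1T\sum_t\bfa_t\bfa_t'=\bI_r$, which gives $\frac1T\sum_t\spec{\bfa_t}^2=r$. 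This yields
\begin{align*}
\frac{1}{T}\sum_t\spec{\mathrm{I}_t}^2 \;\le\; r\,\spec{\bD^{-1}}^2\spec{\hat{\bLam}}^2\spec{\hat{\bPhi}_{u}^{-1}}^2\,\frob{\hat{\bLam}-\bLam_0}^2 \;=\; \mathcal{O}_p\!\left(N^{1-\beta}(\mu^2+d_T)\right),
\end{align*}
and I would then check $N^{1-\beta}d_T=\beta\frac{\log N}{N^{\beta}}+\frac{N^{1-2\beta}\log N}{T}=o(1)$ for $\beta\ge1/2$ and $\log N=o(T)$, while $N^{1-\beta}\mu^2=o(1)$ under the maintained smallness of $\mu$, so $\frac1T\sum_t\spec{\mathrm{I}_t}^2=o_p(1)$. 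This is precisely where the lower limit $\beta\ge1/2$ is used: for smaller $\beta$ the blow-up factor $N^{1-\beta}$ would dominate $d_T$.

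For $\mathrm{II}_t$ I would set $\bM:=\hat{\bLam}\bD^{-1}$, so $\mathrm{II}_t=\bM'\hat{\bPhi}_{u}^{-1}\bu_t$ and, with $\bS_u=\frac1T\sum_t\bu_t\bu_t'$,
\begin{align*}
\frac{1}{T}\sum_t\spec{\mathrm{II}_t}^2 \;=\; \trace{\bM'\hat{\bPhi}_{u}^{-1}\bS_u\hat{\bPhi}_{u}^{-1}\bM} \;=\; \trace{\bM'\hat{\bPhi}_{u}^{-1}\bSigma_{u0}\hat{\bPhi}_{u}^{-1}\bM} + \trace{\bM'\hat{\bPhi}_{u}^{-1}(\bS_u-\bSigma_{u0})\hat{\bPhi}_{u}^{-1}\bM}.
\end{align*}
The first summand is bounded by $\pi_{\max}(\bSigma_{u0})\spec{\hat{\bPhi}_{u}^{-1}}^2\trace{\bM'\bM}$ with $\trace{\bM'\bM}=\trace{\hat{\bLam}'\hat{\bLam}\bD^{-2}}\le r\,\pi_{\max}(\hat{\bLam}'\hat{\bLam})\spec{\bD^{-1}}^2=\mathcal{O}_p(N^{-\beta})$, hence is $o_p(1)$. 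The second summand is the part I expect to be the main obstacle: an operator-norm bound on $\bS_u-\bSigma_{u0}$ is too crude when $N\gg T$, so instead I would bound $\trace{\bM'\hat{\bPhi}_{u}^{-1}(\bS_u-\bSigma_{u0})\hat{\bPhi}_{u}^{-1}\bM}$ entrywise, using $\max_{i,j\le N}\big|\frac1T\sum_t u_{it}u_{jt}-\sigma_{u0,ij}\big|=\mathcal{O}_p(\sqrt{(\log N)/T})$ from Lemma \ref{lemma_fan}(i), weighted by the entries of $\hat{\bPhi}_{u}^{-1}\bM$, whose columns have $\ell_2$-norm $\mathcal{O}_p(N^{-\beta/2})$ (from $\trace{\bM'\bM}=\mathcal{O}_p(N^{-\beta})$) and hence $\ell_1$-norm $\mathcal{O}_p(N^{(1-\beta)/2})$; this produces $\mathcal{O}_p(N^{1-\beta}\sqrt{(\log N)/T})=o_p(1)$ for $\beta\ge1/2$. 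Should that entrywise bound prove too lossy, the fallback is a direct concentration argument for the quadratic forms $\bv'(\bS_u-\bSigma_{u0})\bw$ along the random directions $\bv,\bw$ given by the columns of $\hat{\bPhi}_{u}^{-1}\bM$, exploiting the exponential-tail and strong-mixing conditions of Assumption \ref{data_assum}. Collecting the three contributions gives $\frac1T\sum_t\spec{\hat{\bfa}_t-\bfa_t}^2=o_p(1)$.
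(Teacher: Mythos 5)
Your decomposition of $\hat{\bfa}_t-\bfa_t$ via the GLS formula into a loading\-/error part and a noise part is exactly the paper's starting point, so the overall route is the same. On the loading\-/error part your spectral-norm chain produces the factor $N^{1-\beta}(\mu^2+d_T)$; note that $N^{1-\beta}\mu^2$ is \emph{not} $o(1)$ under the only maintained condition $\mu=o(1)$ once $\beta<1$, so your phrase ``under the maintained smallness of $\mu$'' is silently importing $\mu=o(N^{-(1-\beta)/2})$. The paper instead works row-wise, combining $\max_{i\le N}\spec{\hat{\blam}_i-\blam_{i0}}=\mathcal{O}_p(\mu+\sqrt{Nd_T/L_N})$ with the sparsity count $L_N$, which is why its bound carries $L_N/N^{2\beta}$ rather than your $N^{1-\beta}$ and does not need the extra condition on $\mu$.

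The genuine gap is in the noise term. Your first summand $\traces{\bM'\hat{\bPhi}_u^{-1}\bSigma_{u0}\hat{\bPhi}_u^{-1}\bM}=\mathcal{O}_p(N^{-\beta})$ is fine (it is a deterministic inequality given $\bM$), but the claim that the entrywise bound on $\traces{\bM'\hat{\bPhi}_u^{-1}(\bS_u-\bSigma_{u0})\hat{\bPhi}_u^{-1}\bM}$ yields $\mathcal{O}_p(N^{1-\beta}\sqrt{(\log N)/T})=o_p(1)$ for all $\beta\ge 1/2$ is false in the regime the paper allows: with $\beta=1/2$ and, say, $N\asymp T^2$ (perfectly consistent with $\log N=o(T)$), this quantity is of order $\sqrt{T\log T}\to\infty$. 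An operator-norm bound on $\bS_u-\bSigma_{u0}$ is no better when $N\gg T$, and your fallback --- concentration of quadratic forms in $\bS_u-\bSigma_{u0}$ along the columns of $\hat{\bPhi}_u^{-1}\bM$ --- is precisely the hard point, since those directions are built from the same data and are correlated with $\bu$. The paper sidesteps this entirely by centering the direction at its population counterpart: it writes $\hat{\bLam}'\hat{\bPhi}_u^{-1}\bu_t=(\hat{\bLam}'\hat{\bPhi}_u^{-1}-\bLam_0'\bPhi_{u0}^{-1})\bu_t+\bLam_0'\bPhi_{u0}^{-1}\bu_t$, controls the difference using the already-established consistency of $\hat{\bLam}$ and $\hat{\bPhi}_u$, and is then left with a dominant term whose direction $\bLam_0'\bPhi_{u0}^{-1}$ is deterministic, so that $\E{\spec{\bLam_0'\bPhi_{u0}^{-1}\bu_t}^2}=\trace{\bLam_0'\bPhi_{u0}^{-1}\bSigma_{u0}\bPhi_{u0}^{-1}\bLam_0}=\mathcal{O}(N^{\beta})$ and, after multiplying by $\spec{\bD^{-1}}^2=\mathcal{O}_p(N^{-2\beta})$, contributes $\mathcal{O}_p(N^{-\beta})=o_p(1)$ with no control of $\bS_u-\bSigma_{u0}$ required. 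You need this centering step (or an equivalent decoupling device); as written, the noise term is not closed.
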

\begin{proof}
	By the definition of the factor estimator in equation \eqref{gls_factors} we have:
	\begin{align}\label{diff_f}
		\hat{\bfa}_t - \bfa_t = -\left(\hat{\bLam}'\hat{\bPhi}_u^{-1}\hat{\bLam}\right)^{-1}\hat{\bLam}'\hat{\bPhi}_u^{-1}\left(\hat{\bLam} - \bLam_0\right)\bfa_t + \left(\hat{\bLam}'\hat{\bPhi}_u^{-1}\hat{\bLam}\right)^{-1}\hat{\bLam}'\hat{\bPhi}_u^{-1}\bu_t.
	\end{align}
	As $L_N = \mathcal{O}\left(N^{\beta}\right)$, the first term on the right-hand side can be bounded by:
	\begin{align}
		&\mathcal{O}_p\left(N^{-\beta}\right) \sqrt{\sum_{i = 1}^{N} \spec{\left(\hat{\bLam}'\hat{\bPhi}_{u}^{-1}\right)_i \left(\hat{\blam}_i - \blam_{i0}\right)}^2}\spec{\bfa_t} \notag\\
		\quad & \leq \mathcal{O}_p\left(N^{-\beta}\right) \sqrt{\mathcal{O}_p\left(\sum_{i = 1}^{N}\spec{\hat{\blam}_i - \blam_{i0}}^2\right)} \notag\\
		& \leq \mathcal{O}_p\left(N^{-\beta}\right) \sqrt{\mathcal{O}_p\left(L_N\max_{i \leq N}\spec{\hat{\blam}_i - \blam_{i0}}^2\right)}\notag\\
		&= \mathcal{O}_p\left(\frac{\sqrt{L_N}}{N^{\beta}}\right) o_p(1) = o_p(1).\label{f1_f_hat}
	\end{align}
	Now, we are going to bound the second term on the right-hand side of \eqref{diff_f}. For this we first analyze the term $\hat{\bLam}'\hat{\bPhi}_u^{-1}\bu_t$.
	\begin{align*}
		&\mathcal{O}_p\left(N^{-\beta}\right) \frob{\left(\hat{\bLam}'\hat{\bPhi}_u^{-1} - \bLam_0'\bPhi_{u 0}^{-1}\right)\bu_t} \leq \\
		&\quad \mathcal{O}_p\left(N^{-\beta}\right) \frob{\left(\hat{\bLam} - \bLam_0\right)'\hat{\bPhi}_u^{-1}\bu_t} +\mathcal{O}_p\left(N^{-\beta}\right) \frob{\bLam_0'\left(\hat{\bPhi}_u^{-1} - \bPhi_{u 0}^{-1}\right)\bu_t}.
	\end{align*}
	Using \lemref{lem_est_load}., the first term can be bounded by:
	\begin{align}\label{f1}
		&\mathcal{O}_p\left(N^{-\beta}\right) \sqrt{\sum_{i = 1}^{N} \spec{\left(\hat{\blam}_i - \blam_{i0}\right)\left(\hat{\bPhi}_u^{-1}\bu_t\right)_i}^2} \notag	\\
		&\quad\leq \mathcal{O}_p\left(N^{-\beta}\right) \sqrt{L_N \max_{i \leq N} \spec{\hat{\blam}_i - \blam_{i0}}^2\mathcal{O}_p(1)} \notag\\
		&\quad = \mathcal{O}_p\left(\frac{\sqrt{L_N}}{N^{\beta}}\right)o_p(1) = o_p(1).
	\end{align}
	The second term can be bounded using \lemref{lemma_q2}. according to:
	\begin{align}\label{f2}
		\mathcal{O}_p\left(N^{-\beta}\right) \frob{\bLam_0'\left(\hat{\bPhi}_u^{-1} - \bPhi_{u 0}^{-1}\right)\bu_t} &= \mathcal{O}_p\left(N^{-\beta}\right) \sqrt{\sum_{i = 1}^{N} \spec{\left(\bLam_0'\bPhi_{u}^{-1}\right)_i \left(\phi_{iu0} - \hat{\phi}_{iu}\right)\left(\hat{\bPhi}_{u}^{-1}\bu_t\right)_i}^2} \notag\\
		&\leq \mathcal{O}_p\left(N^{-\beta}\right) \sqrt{\sum_{i = 1}^{N} \spec{\hat{\phi}_{iu} - \left(\phi_{iu0} \right)}^2\spec{\left(\bLam_0'\bPhi_{u}^{-1}\right)_i}^2\spec{\left(\hat{\bPhi}_{u}^{-1}\bu_t\right)_i}^2} \notag\\
		&= \mathcal{O}_p \left(\frac{\log N}{N^{\beta}} \frob{\hat{\bPhi}_u - \bPhi_{u 0}}\right) = o_p(1).
	\end{align}
	Hence, using \eqref{f1_f_hat}, \eqref{f1} and \eqref{f2} yields:
	\begin{align*}
		\spec{\hat{\bfa}_t - \bfa_t} &= \mathcal{O}_p\left(N^{-\beta}\right) \sum_{i = 1}^{N}\spec{\left(\bLam_0'\bPhi_{u 0}^{-1}\right)_i u_{it}} + o_p(1) = \mathcal{O}_p\left(N^{-\beta/2}\right) + o_p(1) = o_p(1).
	\end{align*}
\end{proof}

\begin{lemma}\label{lem_uhat}
	\begin{align*}
		\max_{i \leq N} \frac{1}{T} \sum_{t = 1}^{T} \left|\hat{u}_{it} - u_{it}\right|^2 = \mathcal{O}_p\left(\mu^2 + \frac{N d_T}{L_N}\right).
	\end{align*}
\end{lemma}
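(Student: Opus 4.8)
The plan is to split the residual error $\hat u_{it}-u_{it}=\blam_{i0}'\bfa_t-\hat\blam_i'\hat\bfa_t$ into a loadings part and a factors part. Adding and subtracting $\hat\blam_i'\bfa_t$ gives $\hat u_{it}-u_{it}=(\blam_{i0}-\hat\blam_i)'\bfa_t+\hat\blam_i'(\bfa_t-\hat\bfa_t)$, so that
\begin{align*}
	\frac1T\sum_{t=1}^T\left|\hat u_{it}-u_{it}\right|^2\le \frac2T\sum_{t=1}^T\left|(\blam_{i0}-\hat\blam_i)'\bfa_t\right|^2+\frac2T\sum_{t=1}^T\left|\hat\blam_i'(\bfa_t-\hat\bfa_t)\right|^2.
\end{align*}
I would bound the two terms uniformly in $i$ and then take $\max_{i\le N}$. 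For the first term, Cauchy--Schwarz gives $\left|(\blam_{i0}-\hat\blam_i)'\bfa_t\right|^2\le\spec{\hat\blam_i-\blam_{i0}}^2\spec{\bfa_t}^2$, and the identification condition $\frac1T\sum_{t=1}^T\bfa_t\bfa_t'=\bI_r$ used in the earlier lemmas implies $\frac1T\sum_{t=1}^T\spec{\bfa_t}^2=\trace{\tfrac1T\sum_{t=1}^T\bfa_t\bfa_t'}=\trace{\bI_r}=r$. Hence $\max_{i\le N}\frac1T\sum_{t=1}^T\left|(\blam_{i0}-\hat\blam_i)'\bfa_t\right|^2\le r\max_{i\le N}\spec{\hat\blam_i-\blam_{i0}}^2=\mathcal{O}_p\!\left(\mu^2+\tfrac{N d_T}{L_N}\right)$ by \lemref{lem_est_load}.

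For the second term, Cauchy--Schwarz gives $\left|\hat\blam_i'(\bfa_t-\hat\bfa_t)\right|^2\le\spec{\hat\blam_i}^2\spec{\bfa_t-\hat\bfa_t}^2$, and since the loadings are uniformly bounded and $\max_{i\le N}\spec{\hat\blam_i-\blam_{i0}}=o_p(1)$ by \lemref{lem_est_load}, $\max_{i\le N}\spec{\hat\blam_i}^2=\mathcal{O}_p(1)$. It then remains to upgrade \lemref{lem_est_factor} from $o_p(1)$ to the rate $\frac1T\sum_{t=1}^T\spec{\hat\bfa_t-\bfa_t}^2=\mathcal{O}_p\!\left(\mu^2+\tfrac{N d_T}{L_N}\right)$. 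I would obtain this by rerunning the proof of \lemref{lem_est_factor} through the decomposition \eqref{diff_f} while keeping the explicit orders in \eqref{f1_f_hat}--\eqref{f2}: the loadings-error piece contributes $\mathcal{O}_p\!\big(\tfrac{L_N}{N^{2\beta}}\max_{i\le N}\spec{\hat\blam_i-\blam_{i0}}^2\big)=\mathcal{O}_p\!\left(\mu^2+\tfrac{N d_T}{L_N}\right)$, using $L_N=\mathcal{O}(N^\beta)$ and $\beta\ge1/2$ so that $\tfrac{L_N}{N^{2\beta}}=\mathcal{O}(N^{-\beta})\le1$; and the generalized least squares noise piece, after replacing $\hat\bLam,\hat\bPhi_u$ by $\bLam_0,\bPhi_{u0}$ (the remainder being of lower order by \eqref{f1} and \eqref{f2}), reduces to $\frac1T\sum_{t=1}^T\spec{(\bLam_0'\bPhi_{u0}^{-1}\bLam_0)^{-1}\bLam_0'\bPhi_{u0}^{-1}\bu_t}^2$, whose mean is of order $N^{-2\beta}\trace{\bLam_0'\bPhi_{u0}^{-2}\bLam_0}=\mathcal{O}(N^{-\beta})$, hence $\mathcal{O}_p(N^{-\beta})$, which is dominated by $\tfrac{N d_T}{L_N}\ge\tfrac{\log N^\beta}{L_N}\gtrsim N^{-\beta}\log N$. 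Multiplying by $\max_{i\le N}\spec{\hat\blam_i}^2=\mathcal{O}_p(1)$ and combining with the first term gives the claimed bound.

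The main obstacle is precisely this sharpening of the factor-estimation rate: \lemref{lem_est_factor} only records $o_p(1)$, so its proof must be rerun tracking every order, and the delicate step is the bound $\spec{(\hat\bLam'\hat\bPhi_u^{-1}\hat\bLam)^{-1}\hat\bLam'\hat\bPhi_u^{-1}(\hat\bLam-\bLam_0)}^2=\mathcal{O}_p(N^{-\beta})\spec{\hat\bLam-\bLam_0}^2$, which must be combined with the sparsity-aware estimate $\spec{\hat\bLam-\bLam_0}^2\le\frob{\hat\bLam-\bLam_0}^2\le L_N\max_{i\le N}\spec{\hat\blam_i-\blam_{i0}}^2$ (rather than the crude $\frob{\hat\bLam-\bLam_0}^2\le N\max_{i\le N}\spec{\hat\blam_i-\blam_{i0}}^2$), together with the check that the $\mathcal{O}_p(N^{-\beta})$ stochastic term is absorbed into $\tfrac{N d_T}{L_N}$. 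The remaining ingredients---the two applications of Cauchy--Schwarz, the normalization $\frac1T\sum_{t=1}^T\bfa_t\bfa_t'=\bI_r$, and the rate from \lemref{lem_est_load}---are routine.
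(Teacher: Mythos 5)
Your proposal is correct and follows essentially the same route as the paper: split $\hat u_{it}-u_{it}$ into a loadings-error term and a factor-error term (the paper uses $(\hat\blam_i-\blam_{i0})'\hat\bfa_t+\blam_{i0}'(\hat\bfa_t-\bfa_t)$, you use the symmetric split with $\bfa_t$ and $\hat\blam_i$, which is equivalent), bound the first by \lemref{lem_est_load} and the second by the rate of $\frac1T\sum_t\spec{\hat\bfa_t-\bfa_t}^2$. If anything, you are more careful than the paper, which cites only the $o_p(1)$ statement of \lemref{lem_est_factor} while implicitly using the sharper rate $\mathcal{O}_p\left(\mu^2+\tfrac{Nd_T}{L_N}\right)$ that you explicitly extract from \eqref{f1_f_hat}--\eqref{f2}.
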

\begin{proof}
	Since $\hat{u}_{it} - u_{it} = \left(\hat{\blam}_i - \blam_{i0}\right)\hat{\bfa}_t' + \blam_{i0}\left(\hat{\bfa}_t - \bfa_t\right)'$, we have by \lemref{lem_est_load}. and \lemref{lem_est_factor}.:
	\begin{align*}
		\max_{i \leq N} \frac{1}{T} \sum_{t = 1}^{T} \left|\hat{u}_{it} - u_{it}\right|^2 &\leq 2 \max_{i \leq N} \spec{\hat{\blam}_i - \blam_{i0}}^2 \frac{1}{T} \sum_{t = 1}^{T} \spec{\hat{\bfa}_t}^2 + 2 \max_{i \leq N} \spec{\blam_{i0}}^2\frac{1}{T} \sum_{t = 1}^{T} \spec{\hat{\bfa}_t - \bfa_t}^2	\\
		&\leq \mathcal{O}_p\left(\max_{i \leq N} \spec{\hat{\blam}_i - \blam_{i0}}^2\right) + \mathcal{O}_p\left(\frac{1}{T} \sum_{t = 1}^{T} \spec{\hat{\bfa}_t - \bfa_t}^2\right)	\\
		&= \mathcal{O}_p\left(\mu^2 + \frac{N d_T}{L_N}\right).
	\end{align*}
\end{proof}
\begin{lemma}\label{lem_sig}
	\begin{align*}
		\max_{i,j\leq N} \left|\hat{\sigma}_{ij} - \sigma_{ij}\right| =  \mathcal{O}_p\left(\sqrt{\mu^2 + \frac{N d_T}{L_N}}\right),
	\end{align*}
	where $d_T = \frac{\log N^{\beta}}{N} + \frac{1}{N^{\beta}}\frac{\log N}{T}$.
\end{lemma}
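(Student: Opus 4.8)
The plan is to split $\hat\sigma_{ij}-\sigma_{ij}$ into the error incurred by replacing the true idiosyncratic shocks $u_{it}$ by their fitted counterparts $\hat u_{it}$ and the ordinary sampling fluctuation of the second moments of $\bu_t$. Writing $\hat\sigma_{ij}=\frac1T\sum_{t=1}^T\hat u_{it}\hat u_{jt}$ and $\sigma_{ij}=\E{u_{it}u_{jt}}$,
\[
\hat\sigma_{ij}-\sigma_{ij}=\underbrace{\frac1T\sum_{t=1}^T\bigl(\hat u_{it}\hat u_{jt}-u_{it}u_{jt}\bigr)}_{\mathrm{I}_{ij}}+\underbrace{\Bigl(\frac1T\sum_{t=1}^T u_{it}u_{jt}-\sigma_{ij}\Bigr)}_{\mathrm{II}_{ij}}.
\]
By Lemma~\ref{lemma_fan}(i) we have $\max_{i,j\le N}|\mathrm{II}_{ij}|=\mathcal{O}_p\bigl(\sqrt{\log N/T}\bigr)$ --- of the order already built into the threshold $\tau$, and absent altogether if $\sigma_{ij}$ is read as the infeasible $\frac1T\sum_t u_{it}u_{jt}$ --- and the same lemma with $i=j$ together with Assumption~\ref{data_assum}(iv) yields the auxiliary fact $\max_{j\le N}\frac1T\sum_{t=1}^T u_{jt}^2=\mathcal{O}_p(1)$ that I use below.

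For $\mathrm{I}_{ij}$ I would expand
\[
\hat u_{it}\hat u_{jt}-u_{it}u_{jt}=(\hat u_{it}-u_{it})(\hat u_{jt}-u_{jt})+u_{it}(\hat u_{jt}-u_{jt})+u_{jt}(\hat u_{it}-u_{it}),
\]
average over $t$ and apply Cauchy--Schwarz to each of the three terms. With $R_i:=\bigl(\frac1T\sum_{t=1}^T|\hat u_{it}-u_{it}|^2\bigr)^{1/2}$ this gives $|\mathrm{I}_{ij}|\le R_iR_j+R_j\bigl(\tfrac1T\sum_t u_{it}^2\bigr)^{1/2}+R_i\bigl(\tfrac1T\sum_t u_{jt}^2\bigr)^{1/2}$. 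Lemma~\ref{lem_uhat} is precisely the input $\max_{i\le N}R_i^2=\mathcal{O}_p\bigl(\mu^2+\frac{N d_T}{L_N}\bigr)$, and combining it with $\max_j\frac1T\sum_t u_{jt}^2=\mathcal{O}_p(1)$ gives $\max_{i,j\le N}|\mathrm{I}_{ij}|=\mathcal{O}_p\bigl(\mu^2+\frac{N d_T}{L_N}+\sqrt{\mu^2+\frac{N d_T}{L_N}}\bigr)$. Since $\mu=o(1)$ and $\log N=o(T)$ force $\mu^2+\frac{N d_T}{L_N}=o(1)$, the square is dominated by the square root, so $\max_{i,j\le N}|\mathrm{I}_{ij}|=\mathcal{O}_p\bigl(\sqrt{\mu^2+\frac{N d_T}{L_N}}\bigr)$; adding the $\mathrm{II}_{ij}$ bound gives the asserted rate.

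The delicate point is the uniform-in-$(i,j)$ control of the mixed terms $\frac1T\sum_t u_{it}(\hat u_{jt}-u_{jt})$: one cannot bound $u_{it}$ pointwise, so after Cauchy--Schwarz everything rests on $\max_{j\le N}\frac1T\sum_t u_{jt}^2=\mathcal{O}_p(1)$, which is where the exponential-tail and strong-mixing conditions of Assumption~\ref{data_assum} enter through Lemma~\ref{lemma_fan}. The rest is a mechanical chaining of the triangle and Cauchy--Schwarz inequalities with the residual bound of Lemma~\ref{lem_uhat}, which has already propagated the loading and factor rates of Lemmas~\ref{lem_est_load} and~\ref{lem_est_factor}; so the real content sits in those earlier lemmas and this step simply re-expresses it entrywise.
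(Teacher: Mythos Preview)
Your proposal is correct and follows essentially the same route as the paper. The paper uses the identical decomposition into $\mathrm{I}_{ij}$ and $\mathrm{II}_{ij}$, invokes \citesupp{Fan2011a} (their Lemma~A.3(iii), which is Lemma~\ref{lemma_fan}(i) here) for the $\mathcal{O}_p(\sqrt{\log N/T})$ bound on $\mathrm{II}_{ij}$, and then appeals to Lemma~A.3(ii) of the same reference together with Lemma~\ref{lem_uhat} for $\mathrm{I}_{ij}$; you simply unpack that cited Lemma~A.3(ii) by writing out the three-term expansion and the Cauchy--Schwarz step explicitly, which is exactly its content.
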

\begin{proof}
	Based on \textit{Lemma A.3.(iii)} by \citesupp{Fan2011a} we have:
	\begin{align}\label{var_decom}
		\max_{i,j\leq N} \left|\hat{\sigma}_{ij} - \sigma_{ij}\right| \leq \max_{i,j\leq N} \left|\frac{1}{T} \sum_{t = 1}^{T}u_{it} u_{jt} - \sigma_{ij}\right| + \max_{i,j\leq N} \left|\frac{1}{T} \sum_{t = 1}^{T}\hat{u}_{it} \hat{u}_{jt} - u_{it} u_{jt}\right|,
	\end{align}
	where the authors show that the first term on the right-hand side is $\mathcal{O}_p\left(\sqrt{\frac{\log N}{T}}\right)$. Now we are going to analyze the second term on the right-hand side of equation \eqref{var_decom}. In \lemref{lem_uhat}. we have shown that $\max_{i \leq N} \frac{1}{T} \sum_{t = 1}^{T} \left|\hat{u}_{it} - u_{it}\right|^2 = o_p(1)$. Hence, the result follows from Lemma A.3.(ii) by \citesupp{Fan2011a}.\\
\end{proof}
%%%%%%%%%%%%%%%%%%%%%%%%%%%%%%%%%%%%%%%%%%%%%%%%%%%%%%%%%%%%%%%%%%%%%%%%%%%%%%%%%%%%%%%%%%%%%%%%%%%%

\subsection{Rate of convergence for the idiosyncratic error covariance matrix estimator}\label{subsubsec:rate_conv_sigma_u}
In what follows, we are going to determine the convergence rate of the idiosyncratic error covariance matrix estimator based on soft-thresholding.
\begin{lemma}\label{lem_idio}
	\begin{align*}
		\spec{\hat{\bSigma}_u^{\tau} - \bSigma_u} = \mathcal{O}_p\left(S_N\sqrt{\mu^2 + \frac{N d_T}{L_N}}\right).
	\end{align*}
\end{lemma}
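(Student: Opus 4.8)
The plan is to run the standard thresholding argument for sparse covariance estimators, feeding in the uniform entrywise bound of \lemref{lem_sig} as the only stochastic input; the only genuinely new feature here is that the residuals entering POET come from the sparse factor model, so the relevant entrywise rate is the $\mu$- and $L_N$-dependent one of \lemref{lem_sig} rather than the pure $\sqrt{\log N/T}$ rate available with PCA residuals. Because $\hat{\bSigma}_u^{\tau} - \bSigma_u$ is symmetric, its operator norm is controlled by its matrix $l_1$-norm,
\begin{align*}
	\spec{\hat{\bSigma}_u^{\tau} - \bSigma_u} \;\leq\; \lone{\hat{\bSigma}_u^{\tau} - \bSigma_u} \;=\; \max_{i \leq N} \sum_{j = 1}^{N} \bigl|\hat{\sigma}_{ij}^{\tau} - \sigma_{u,ij}\bigr|,
\end{align*}
so it suffices to bound a single row sum uniformly in $i$. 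For each $i$ I would split the column indices into the support $\mathcal{J}_i = \{j \leq N : \sigma_{u,ij} \neq 0\}$, whose cardinality is at most $S_N$ by Assumption \ref{sparsity_assum}\,\textit{\ref{spars2}}, and its complement, treating the diagonal separately since the POET construction in \eqref{soft_t} leaves it unthresholded.

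Write $\omega_{N,T} := \sqrt{\mu^{2} + \tfrac{N d_T}{L_N}}$ for the entrywise rate of \lemref{lem_sig}: for every $\epsilon>0$ there is $C_\epsilon$ with $\Prob{\max_{i,j\leq N}|\hat{\sigma}_{ij} - \sigma_{u,ij}| > C_\epsilon\,\omega_{N,T}} < \epsilon$ for all large $N,T$. I would then work on the event $\mathcal{E}_{N,T} = \{\max_{i,j\leq N}|\hat{\sigma}_{ij} - \sigma_{u,ij}| \leq \tau\}$; provided the threshold $\tau$ is calibrated to be of the exact order of $\omega_{N,T}$ — large enough that $\tau \geq C_\epsilon\,\omega_{N,T}$, so $\Prob{\mathcal{E}_{N,T}} \to 1$, yet small enough that $\tau = \mathcal{O}(\omega_{N,T})$ — the following holds on $\mathcal{E}_{N,T}$: (i) for off-support off-diagonal entries, $|\hat{\sigma}_{ij}| \leq \tau$, hence $\hat{\sigma}_{ij}^{\tau} = \mathcal{S}(\hat{\sigma}_{ij},\tau) = 0 = \sigma_{u,ij}$, so these terms vanish; (ii) for $j \in \mathcal{J}_i$, non-expansiveness of soft-thresholding gives $|\hat{\sigma}_{ij}^{\tau} - \hat{\sigma}_{ij}| \leq \tau$, and the triangle inequality with \lemref{lem_sig} yields $|\hat{\sigma}_{ij}^{\tau} - \sigma_{u,ij}| \leq \tau + \mathcal{O}_p(\omega_{N,T}) = \mathcal{O}_p(\omega_{N,T})$; (iii) the untouched diagonal term $|\hat{\sigma}_{u,ii} - \sigma_{u,ii}|$ is $\mathcal{O}_p(\omega_{N,T})$ by \lemref{lem_sig} and is absorbed into the support sum.

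Summing over $j$ then gives, on $\mathcal{E}_{N,T}$, $\sum_{j=1}^{N}|\hat{\sigma}_{ij}^{\tau} - \sigma_{u,ij}| \leq (|\mathcal{J}_i|+1)\cdot\mathcal{O}_p(\omega_{N,T}) = \mathcal{O}_p(S_N\,\omega_{N,T})$ uniformly in $i$, and since $\Prob{\mathcal{E}_{N,T}} \to 1$ this upgrades to $\spec{\hat{\bSigma}_u^{\tau} - \bSigma_u} = \mathcal{O}_p\bigl(S_N\sqrt{\mu^{2} + N d_T/L_N}\bigr)$, as claimed.

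The split-and-threshold bookkeeping above is routine; the main obstacle is the two-sided threshold calibration. One must produce a $\tau$ that simultaneously dominates the random entrywise fluctuation $\max_{i,j}|\hat{\sigma}_{ij}-\sigma_{u,ij}|$ quantified in \lemref{lem_sig} (so that the off-support entries are annihilated with probability tending to one, which is what buys the factor $S_N$ rather than something like $N$) and is itself $\mathcal{O}(\omega_{N,T})$ (so that the soft-thresholding bias accumulated over the $S_N$ support entries stays inside the target rate). Verifying that the threshold used in the construction — the one in the footnote following \eqref{soft_t} — meets both requirements under the bandwidth conditions on $\mu$, on $d_T$, and on the sparsity levels $L_N$ and $S_N$ supplied by Assumptions \ref{assum_lam}, \ref{data_assum} and \ref{sparsity_assum} is where the real work of the proof lies.
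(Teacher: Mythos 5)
Your argument is correct and is essentially the paper's own proof: the paper disposes of this lemma in one line by invoking \lemref{lem_sig} together with Theorem A.1 of \citesupp{FanLiaoMincheva2013}, and that cited theorem is exactly the support/off-support splitting of the row sums of $\lone{\hat{\bSigma}_u^{\tau} - \bSigma_u}$ under a threshold of the order of the uniform entrywise rate that you spell out. Your closing remark is also apt — the two-sided calibration of $\tau$ against $\omega_{N,T}$ is precisely the step the paper leaves implicit in the citation (and its stated threshold $\tau = 1/\sqrt{N} + \sqrt{\log N / T}$ is only loosely matched to $\omega_{N,T}$), so you have correctly located the only non-routine content of the proof.
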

\begin{proof}
	The result follows from \lemref{lem_sig}. and Theorem A.1. of \citesupp{FanLiaoMincheva2013}.\\
\end{proof}
%%%%%%%%%%%%%%%%%%%%%%%%%%%%%%%%%%%%%%%%%%%%%%%%%%%%%%%%%%%%%%%%%%%%%%%%%%%%%%%%%%%%%%%%%%%%%%%%%%%%
\subsection{Convergence Rates for the Covariance Matrix Estimator} \label{subsubsec:riskbounds}

\textbf{Proof:} Theorem \ref{theorem_cov} (Convergence Rates for the Covariance Matrix Estimator)
\begin{align}
	\bSigma &= \bLam_0\bLam_0' + \bSigma_{u0},	\label{true_cov}\\
	\hat{\bSigma}_\text{SAF} &= \hat{\bLam}\hat{\bLam}' + \hat{\bSigma}_{u}^\tau, \label{est_cov}
\end{align}
where $\hat{\bSigma}_{u}^\tau$ corresponds to the POET estimator of \citesupp{FanLiaoMincheva2013}.
Similar as in \citesupp{FanLiaoMincheva2013}, we consider the weighted quadratic norm introduced by \citesupp{Fan2008a} and which is defined as:
\begin{align*}
	\spec{\bA}_{\bSigma} = N^{-1/2} \frob{\bSigma^{-1/2} \bA \bSigma^{-1/2}}.
\end{align*}
\begin{lemma}\label{lem_sig_norm}
	\begin{align*}
		\frac{1}{N} \spec{\hat{\bSigma}_\text{SAF} - \bSigma}_{\bSigma}^2 = \mathcal{O}_p\left(\frac{L_N^2}{N^2}\left[\mu^4 + \left(\frac{N}{L_N}d_T\right)^2\right] + \left[\frac{N^{\beta}L_N}{N^2} + \frac{S_N^2}{N}\right]\left[\mu^2 + \frac{N}{L_N} d_T\right]\right).
	\end{align*}
\end{lemma}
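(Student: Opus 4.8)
The plan is to write $\tfrac1N\spec{\hat{\bSigma}_{\text{SAF}}-\bSigma}_\bSigma^2=\tfrac1{N^2}\frob{\bSigma^{-1/2}(\hat{\bSigma}_{\text{SAF}}-\bSigma)\bSigma^{-1/2}}^2$ and, recalling \eqref{true_cov} and \eqref{est_cov} and following \citesupp{FanLiaoMincheva2013}, to split the error into a common‑component part and an idiosyncratic part, $\hat{\bSigma}_{\text{SAF}}-\bSigma=(\hat{\bLam}\hat{\bLam}'-\bLam_0\bLam_0')+(\hat{\bSigma}_u^\tau-\bSigma_{u0})$, so that by the triangle inequality (at the cost of a bounded constant) it suffices to bound $\tfrac1{N^2}\frob{\bSigma^{-1/2}(\cdot)\bSigma^{-1/2}}^2$ of each piece. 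Three facts are used repeatedly. (a)~By Assumption~\ref{data_assum}\ref{assum_sig}, $\pi_{\min}(\bSigma)\ge\pi_{\min}(\bSigma_{u0})\ge c_2$, so $\spec{\bSigma^{-1}}=\spec{\bSigma^{-1/2}}^2=\mathcal{O}(1)$; this bounded factor is the only cost of the weighting $\bSigma^{-1/2}(\cdot)\bSigma^{-1/2}$, and is what produces the extra $1/N$ relative to the plain Frobenius norm in Theorem~\ref{theorem_cov}. (b)~By Assumption~\ref{assum_lam}, $\spec{\bLam_0}^2=\pi_{\max}(\bLam_0'\bLam_0)=\mathcal{O}(N^\beta)$. (c)~Combining \lemref{lem_est_load} with the sparsity Assumption~\ref{sparsity_assum}\ref{spars1} (both $\bLam_0$ and the penalised estimate $\hat{\bLam}$ having $\mathcal{O}(L_N)$ non‑zero rows), $\frob{\hat{\bLam}-\bLam_0}^2=\mathcal{O}_p\!\left(L_N\bigl[\mu^2+\tfrac{N}{L_N}d_T\bigr]\right)$. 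Write $\bD\coloneqq\hat{\bLam}-\bLam_0$.

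For the common‑component part, expand $\hat{\bLam}\hat{\bLam}'-\bLam_0\bLam_0'=\bLam_0\bD'+\bD\bLam_0'+\bD\bD'$. For the quadratic term, $\frob{\bD\bD'}^2=\trace{(\bD'\bD)^2}\le(\trace{\bD'\bD})^2=\frob{\bD}^4$, hence $\frob{\bSigma^{-1/2}\bD\bD'\bSigma^{-1/2}}\le\spec{\bSigma^{-1}}\frob{\bD}^2=\mathcal{O}_p\!\left(L_N[\mu^2+\tfrac{N}{L_N}d_T]\right)$; squaring, dividing by $N^2$, and using $[\mu^2+\tfrac{N}{L_N}d_T]^2\le 2[\mu^4+(\tfrac{N}{L_N}d_T)^2]$ gives a contribution of order $\tfrac{L_N^2}{N^2}[\mu^4+(\tfrac{N}{L_N}d_T)^2]$. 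For each bilinear term, $\frob{\bSigma^{-1/2}\bLam_0\bD'\bSigma^{-1/2}}\le\spec{\bSigma^{-1}}\spec{\bLam_0}\frob{\bD}=\mathcal{O}_p\!\left(N^{\beta/2}\sqrt{L_N[\mu^2+\tfrac{N}{L_N}d_T]}\right)$ (and likewise for the transpose $\bD\bLam_0'$), contributing, after dividing by $N^2$, a term of order $\tfrac{N^\beta L_N}{N^2}[\mu^2+\tfrac{N}{L_N}d_T]$.

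For the idiosyncratic part, bound $\frob{\hat{\bSigma}_u^\tau-\bSigma_{u0}}\le\sqrt{N}\,\spec{\hat{\bSigma}_u^\tau-\bSigma_{u0}}$ and apply the spectral rate of \lemref{lem_idio}, $\spec{\hat{\bSigma}_u^\tau-\bSigma_{u0}}=\mathcal{O}_p\!\left(S_N\sqrt{\mu^2+\tfrac{N}{L_N}d_T}\right)$; then $\frob{\bSigma^{-1/2}(\hat{\bSigma}_u^\tau-\bSigma_{u0})\bSigma^{-1/2}}\le\spec{\bSigma^{-1}}\sqrt{N}\,\spec{\hat{\bSigma}_u^\tau-\bSigma_{u0}}=\mathcal{O}_p\!\left(\sqrt{N}\,S_N\sqrt{\mu^2+\tfrac{N}{L_N}d_T}\right)$, contributing, after dividing by $N^2$, a term of order $\tfrac{S_N^2}{N}[\mu^2+\tfrac{N}{L_N}d_T]$. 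Adding the three contributions reproduces exactly the asserted bound.

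I expect the bookkeeping step (c) — bounding $\frob{\hat{\bLam}-\bLam_0}^2$ by $L_N$ (not $N$) times the uniform row rate of \lemref{lem_est_load}, which relies on the estimated loadings inheriting the sparsity of the truth — to be the only nontrivial point; the rest is norm manipulation once fact (a) is in hand. A minor caveat: the computation uses $\hat{\bSigma}_{\text{SAF}}=\hat{\bLam}\hat{\bLam}'+\hat{\bSigma}_u^\tau$ as in \eqref{est_cov}, whereas the estimator \eqref{est_rfcov} has $\hat{\bLam}\bS_{\hat F}\hat{\bLam}'$; the discrepancy adds a term of order $\spec{\bSigma^{-1}}\spec{\bSigma^{-1/2}\hat{\bLam}}^2\frob{\bS_{\hat F}-\bI_r}$, which is of smaller order since $\frob{\bS_{\hat F}-\bI_r}=o_p(1)$ by \lemref{lem_est_factor} and $\spec{\bSigma^{-1/2}\hat{\bLam}}=\mathcal{O}_p(1)$ by the matrix‑inversion identity $\bLam_0'\bSigma^{-1}\bLam_0=\bLam_0'\bSigma_{u0}^{-1}\bLam_0\bigl(\bI_r+\bLam_0'\bSigma_{u0}^{-1}\bLam_0\bigr)^{-1}$.
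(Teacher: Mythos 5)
Your proof follows essentially the same route as the paper's: the same splitting into a common\-/component and an idiosyncratic part, the same expansion $\hat{\bLam}\hat{\bLam}'-\bLam_0\bLam_0'=\bC\bC'+\bLam_0\bC'+\bC\bLam_0'$ with $\bC=\hat{\bLam}-\bLam_0$, the same norm bounds via $\spec{\bSigma^{-1}}=\mathcal{O}(1)$ and $\spec{\bLam_0}^2=\mathcal{O}(N^{\beta})$, and the same appeal to Lemma~\ref{lem_idio} for $\hat{\bSigma}_u^{\tau}-\bSigma_{u0}$. If anything you are more explicit than the paper, which silently identifies $\hat{\bLam}\bS_{\hat F}\hat{\bLam}'$ with $\hat{\bLam}\hat{\bLam}'$ and silently uses $\frob{\bC}^2=\mathcal{O}_p\left(L_N\max_{i}\spec{\hat{\blam}_i-\blam_{i0}}^2\right)$ rather than the crude bound with $N$ in place of $L_N$ --- both points you flag and justify.
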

\begin{proof}
	The weighted quadratic norm of the difference between the estimated covariance matrix $\hat{\bSigma}_\text{SAF}$ and the true one $\bSigma$ can be expressed as:
	\begin{align}
		\spec{\hat{\bSigma}_\text{SAF} - \bSigma}_{\bSigma}^2 \leq \spec{\hat{\bLam}\hat{\bLam}' - \bLam_0\bLam_0'}_{\bSigma}^2 + \spec{\hat{\bSigma}_{u}^\tau - \bSigma_{u0}}_{\bSigma}^2. \label{fan_norm}
	\end{align}
	If we consider $\bC = \hat{\bLam} - \bLam_0$ we can introduce the following definitions:
	\begin{align*}
		\bC \bC'             &= \hat{\bLam}\hat{\bLam}' - \hat{\bLam}\bLam_0' - \bLam_0 \hat{\bLam} ' + \bLam_0\bLam_0',	\\
		\bLam_0 \bC' &= \bLam_0 \hat{\bLam}' - \bLam_0 \bLam_0',	\\
		\bC \bLam_0' &= \hat{\bLam} \bLam_0' - \bLam_0\bLam_0'.
	\end{align*}
	Using the previous definitions, we can rewrite the first term in \eqref{fan_norm} in the following form
	\begin{align*}
		\spec{\hat{\bLam}\hat{\bLam}' - \bLam_0\bLam_0'}_{\bSigma}^2 &= \spec{\bC\bC' + \bLam_0\bC' + \bC \bLam_0'}_{\bSigma}^2	\\
		&\leq \spec{\bC\bC'}^2_{\bSigma} + \spec{\bC \bLam_0'}_{\bSigma}^2 + \spec{\bLam_0 \bC'}_{\bSigma}^2.
	\end{align*}
	Hence, equation \eqref{fan_norm} can be expressed as:
	\begin{align}
		\spec{\hat{\bSigma}_\text{SAF} - \bSigma}_{\bSigma}^2 \leq \spec{\bC\bC'}^2_{\bSigma} + \spec{\bC \bLam_0'}_{\bSigma}^2 + \spec{\bLam_0 \bC'}_{\bSigma}^2 + \spec{\hat{\bSigma}_{u}^\tau - \bSigma_{u}}_{\bSigma}^2. \label{sig_est_fan}
	\end{align}
	Now we analyze each term in \eqref{sig_est_fan} separately:
	\begin{align*}
		\spec{\bLam_0 \bC'}_{\bSigma}^2 &= N^{-1} \trace{\bSigma^{-1/2}\bLam_0 \bC'\bSigma^{-1/2}\bSigma^{-1/2}\bC\bLam_0'\bSigma^{-1/2}}	\\
		&= N^{-1} \trace{\bLam_0'\bSigma^{-1}\bLam_0 \bC'\bSigma^{-1}\bC}	\\
		&\leq N^{-1} \spec{\bLam_0'\bSigma^{-1}\bLam_0} \spec{\bSigma^{-1}}\frob{\bC}^2 = \mathcal{O}_p\left(\frac{N^{\beta}}{N}\frob{\bC}^2\right).
	\end{align*}
	Similarly, we get $\spec{\bC\bLam_0'}_{\bSigma}^2 = \mathcal{O}_p\left(\frac{N^{\beta}}{N}\frob{\bC}^2\right)$. Further,
	$\spec{\bC\bC'}_{\bSigma}^2 = \frac{1}{N}	\frob{\bC}^4$.\\
	Hence, by \lemref{lem_idio}. we get:
	\begin{align*}
		\spec{\hat{\bSigma}_\text{SAF} - \bSigma}_{\bSigma}^2 &= \mathcal{O}_p\left( \frac{1}{N}	\frob{\bC}^4 + \frac{N^{\beta}}{N} \frob{\bC}^2\right) + \mathcal{O}_p\left(\spec{\hat{\bSigma}_{u}^\tau - \bSigma_{u}}_{\bSigma}^2\right)\\
		&= \mathcal{O}_p\left(\frac{L_N^2}{N}\left[\mu^4 + \left(\frac{N}{L_N}d_T\right)^2\right] + \frac{N^{\beta}L_N}{N} \left[\mu^2 + \frac{N}{L_N} d_T\right]\right) + \mathcal{O}_p\left(S_N^2\left[\mu^2+\frac{N}{L_N}d_T\right]\right)	\\
		&= \mathcal{O}_p\left(\frac{L_N^2}{N}\left[\mu^4 + \left(\frac{N}{L_N}d_T\right)^2\right] + \left[\frac{N^{\beta}L_N}{N} + S_N^2\right]\left[\mu^2 + \frac{N}{L_N} d_T\right]\right).
	\end{align*}	
\end{proof}

Under the \textbf{Frobenius norm} we have:
\begin{lemma}
	\begin{align*}
		\frac{1}{N} \frob{\hat{\bSigma}_\text{SAF} - \bSigma}^2 = \mathcal{O}_p\left(\frac{L_N^2}{N} \left[\mu^2 + \frac{N}{L_N}d_T\right]^2 + \left[\frac{N^{\beta}L_N}{N} + S_N^2\right]\left[\mu^2 + \frac{N}{L_N}d_T\right]\right).
	\end{align*}
\end{lemma}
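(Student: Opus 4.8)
The plan is to mirror the proof of \lemref{lem_sig_norm}, simply replacing the weighted quadratic norm $\spec{\cdot}_{\bSigma}$ by the ordinary Frobenius norm throughout. First I would write $\hat{\bSigma}_\text{SAF} - \bSigma = \left(\hat{\bLam}\hat{\bLam}' - \bLam_0\bLam_0'\right) + \left(\hat{\bSigma}_{u}^{\tau} - \bSigma_{u0}\right)$ and use $\left(a+b\right)^2 \leq 2a^2 + 2b^2$, so that it is enough to bound $\frac{1}{N}\frob{\hat{\bLam}\hat{\bLam}' - \bLam_0\bLam_0'}^2$ and $\frac{1}{N}\frob{\hat{\bSigma}_{u}^{\tau} - \bSigma_{u0}}^2$ separately.

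For the common component, set $\bC = \hat{\bLam} - \bLam_0$, so that $\hat{\bLam}\hat{\bLam}' - \bLam_0\bLam_0' = \bC\bC' + \bLam_0\bC' + \bC\bLam_0'$, exactly as in \lemref{lem_sig_norm}. By the triangle inequality and submultiplicativity one has $\frob{\bC\bC'} \leq \frob{\bC}^2$ and $\frob{\bLam_0\bC'} = \frob{\bC\bLam_0'} \leq \spec{\bLam_0}\frob{\bC}$, while Assumption \ref{assum_lam} gives $\spec{\bLam_0}^2 = \pi_{\max}\left(\bLam_0'\bLam_0\right) = \mathcal{O}\left(N^{\beta}\right)$; hence $\frob{\hat{\bLam}\hat{\bLam}' - \bLam_0\bLam_0'}^2 = \mathcal{O}_p\left(\frob{\bC}^4 + N^{\beta}\frob{\bC}^2\right)$. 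It then remains to control $\frob{\bC}^2 = \sum_{i=1}^{N}\spec{\hat{\blam}_i - \blam_{i0}}^2$. Invoking the sparsity of $\bLam_0$ from Assumption \ref{sparsity_assum} together with the sparsity the $l_1$ penalty induces in $\hat{\bLam}$ — the same bookkeeping that produces the factor $\mathcal{O}_p\left(L_N/N\right)$ in \lemref{lem_b2_1} — only $\mathcal{O}_p\left(L_N\right)$ of the rows $\hat{\blam}_i - \blam_{i0}$ are non-zero, so $\frob{\bC}^2 \leq \mathcal{O}_p\left(L_N\right)\max_{i\leq N}\spec{\hat{\blam}_i - \blam_{i0}}^2$. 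Plugging in \lemref{lem_est_load} and collapsing the cross term by the AM--GM inequality gives $\frob{\bC}^2 = \mathcal{O}_p\left(L_N\left[\mu^2 + \frac{N}{L_N}d_T\right]\right)$, and therefore $\frob{\bC}^4 = \mathcal{O}_p\left(L_N^2\left[\mu^2 + \frac{N}{L_N}d_T\right]^2\right)$ and $N^{\beta}\frob{\bC}^2 = \mathcal{O}_p\left(N^{\beta}L_N\left[\mu^2 + \frac{N}{L_N}d_T\right]\right)$. Dividing by $N$ produces the first two of the three asserted terms.

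For the idiosyncratic component I would pass from the spectral bound of \lemref{lem_idio} to the Frobenius norm via the elementary inequality $\frob{\bA}^2 \leq N\spec{\bA}^2$, valid for any $N\times N$ matrix $\bA$ since it has at most $N$ eigenvalues, each of modulus at most $\spec{\bA}$. This yields $\frac{1}{N}\frob{\hat{\bSigma}_{u}^{\tau} - \bSigma_{u0}}^2 \leq \spec{\hat{\bSigma}_{u}^{\tau} - \bSigma_{u0}}^2 = \mathcal{O}_p\left(S_N^2\left[\mu^2 + \frac{N}{L_N}d_T\right]\right)$, which is exactly the remaining $S_N^2$ term. Combining the three bounds and absorbing absolute constants completes the argument.

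The algebra above is routine; the one step that needs genuine care is the reduction $\frob{\bC}^2 \leq \mathcal{O}_p\left(L_N\right)\max_{i}\spec{\hat{\blam}_i - \blam_{i0}}^2$, which rests on controlling the support of $\hat{\bLam}$ and not only of $\bLam_0$ — if one prefers not to argue about $\hat{\bLam}$'s sparsity pattern directly, the safest route is to retrace the $\mathcal{O}_p\left(L_N/N\right)$ accounting already carried out in \lemref{lem_b2_1} and \lemref{lem_est_load}, where precisely this factor appears. A minor inefficiency worth flagging is that the bound $\frob{\bA}^2 \leq N\spec{\bA}^2$ is wasteful — it turns the genuinely $O(S_N)$ Frobenius rate of soft-thresholding into an $O\left(S_N^2\right)$ rate — but since $S_N \geq 1$ this loses nothing and keeps the statement aligned with \lemref{lem_sig_norm}.
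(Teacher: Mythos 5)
Your proposal is correct and follows essentially the same route as the paper: the same decomposition into $\bC\bC' + \bLam_0\bC' + \bC\bLam_0'$ plus the idiosyncratic part, the same bounds $\frob{\bLam_0\bC'}^2 \leq \spec{\bLam_0}^2\frob{\bC}^2 = \mathcal{O}_p\left(N^{\beta}\frob{\bC}^2\right)$ and $\frob{\hat{\bSigma}_u^{\tau}-\bSigma_u}^2 \leq N\spec{\hat{\bSigma}_u^{\tau}-\bSigma_u}^2$ via Lemma \ref{lem_idio}, and the same conversion $\frob{\bC}^2 = \mathcal{O}_p\left(L_N\left[\mu^2 + \frac{N}{L_N}d_T\right]\right)$. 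You are in fact somewhat more explicit than the paper about where the factor $L_N$ (rather than $N$) in $\frob{\bC}^2$ comes from, which is the one step the paper leaves implicit.
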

\begin{proof}
	A similar argument as in \lemref{lem_sig_norm} leads to:
	\begin{align}\label{fro_sig}
		\frob{\hat{\bSigma}_\text{SAF} - \bSigma}^2 \leq \frob{\bC\bC'}^2 + \frob{\bLam_0 \bC'}^2 + \frob{\bC \bLam_0'}^2 + \frob{\hat{\bSigma}_{u}^\tau - \bSigma_{u}}^2,
	\end{align}
	where the second term can be bounded by
	\begin{align*}
		\frob{\bLam_0 \bC'}^2 &= \trace{\bLam_0'\bLam_0\bC'\bC}	\\
		& \leq \spec{\bLam_0}^2 \frob{\bC}^2 = \mathcal{O}_p\left(N^{\beta}\frob{\bC}^2\right).
	\end{align*}
	Furthermore, the first term in \eqref{fro_sig} has the same upper bound. Hence, again by using Lemma \ref{lem_idio} we get:
	\begin{align*}
		\frob{\hat{\bSigma}_\text{SAF} - \bSigma}^2 &\leq \mathcal{O}_p\left(\frob{\bC}^4 + N^{\beta}\frob{\bC}^2\right) + \mathcal{O}_p\left(\frob{\hat{\bSigma}_{u}^\tau - \bSigma_{u}}^2\right)	\\
		&\leq \mathcal{O}_p\left(L_N^2 \left[\mu^2 + \frac{N}{L_N}d_T\right]^2 +  N^{\beta}L_N\left[\mu^2 + \frac{N}{L_N}d_T\right]\right) + \mathcal{O}_p\left(N\left[\mu^2 + \frac{N}{L_N}d_T\right]S_N^2\right)\\
		&= \mathcal{O}_p\left(L_N^2 \left[\mu^2 + \frac{N}{L_N}d_T\right]^2 + \left[N^{\beta}L_N + N S_N^2\right]\left[\mu^2 + \frac{N}{L_N}d_T\right]\right).
	\end{align*}
\end{proof}

\textbf{Inverse of the covariance matrix}\\

Define,
\begin{align*}
	\hat{\bG} &= \left(\bI_r + \hat{\bLam}'\left(\hat{\bSigma}_{u}^{\tau}\right)^{-1}\hat{\bLam}\right)^{-1},	\\
	\bG_0     &= \left(\bI_r + \bLam_0'\bSigma_{u 0}^{-1}\bLam_0\right)^{-1}.
\end{align*}
\begin{lemma}\label{bound}
	\leavevmode
	\begin{enumerate}[label=(\roman*)]
		\item$\spec{\hat{\bG}} = \mathcal{O}_p\left(N^{-\beta}\right)$,\label{bound1}
		\item $\frob{\hat{\bG}^{-1} - \bG_0^{-1}} = \mathcal{O}_p\left(N^{\beta}\left(N^{-\beta/2}\frob{\bC} + \frob{\left(\hat{\bSigma}_{u}^{\tau}\right)^{-1} - \bSigma_{u}^{-1}} \right)\right)$.\label{bound2}
	\end{enumerate}
\end{lemma}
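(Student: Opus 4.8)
The plan is to reduce both statements to eigenvalue bounds on $\hat{\bLam}'\bigl(\hat{\bSigma}_{u}^{\tau}\bigr)^{-1}\hat{\bLam}$, combining the weak-factor condition of Assumption~\ref{assum_lam} with the consistency rates already proved in Theorem~\ref{theo_consistency_lam} and \lemref{lem_idio}. Keeping the notation $\bC=\hat{\bLam}-\bLam_0$ from the proof of Theorem~\ref{theorem_cov}, I would use throughout that $\frob{\bC}=\mathcal{O}_p\bigl(\sqrt{N(\mu^{2}+d_T)}\bigr)=o_p\bigl(N^{\beta/2}\bigr)$ (valid under the maintained $\mu=o(N^{-\beta/2})$, $\log N=o(T)$, $1/2\le\beta\le1$), that $\spec{\bLam_0}=\mathcal{O}\bigl(N^{\beta/2}\bigr)$ by Assumption~\ref{assum_lam}, and that $\spec{\bigl(\hat{\bSigma}_{u}^{\tau}\bigr)^{-1}}=\mathcal{O}_p(1)$, the latter because $\pi_{\max}(\hat{\bSigma}_{u}^{\tau})\le\pi_{\max}(\bSigma_{u0})+\spec{\hat{\bSigma}_{u}^{\tau}-\bSigma_{u0}}\le c_1+o_p(1)$ by Assumption~\ref{data_assum} and \lemref{lem_idio}.

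\textbf{Part \ref{bound1}.} Since $\spec{\hat{\bG}}=\pi_{\min}^{-1}\bigl(\bI_r+\hat{\bLam}'(\hat{\bSigma}_{u}^{\tau})^{-1}\hat{\bLam}\bigr)\le\pi_{\min}^{-1}\bigl(\hat{\bLam}'(\hat{\bSigma}_{u}^{\tau})^{-1}\hat{\bLam}\bigr)$, it is enough to show $\pi_{\min}\bigl(\hat{\bLam}'(\hat{\bSigma}_{u}^{\tau})^{-1}\hat{\bLam}\bigr)\asymp N^{\beta}$ with probability tending to one. Writing $\hat{\bLam}'(\hat{\bSigma}_{u}^{\tau})^{-1}\hat{\bLam}=\bM'\bM$ with $\bM=(\hat{\bSigma}_{u}^{\tau})^{-1/2}\hat{\bLam}$ gives $\pi_{\min}\bigl(\hat{\bLam}'(\hat{\bSigma}_{u}^{\tau})^{-1}\hat{\bLam}\bigr)\ge\pi_{\min}\bigl((\hat{\bSigma}_{u}^{\tau})^{-1}\bigr)\,\pi_{\min}(\hat{\bLam}'\hat{\bLam})\ge c\,\pi_{\min}(\hat{\bLam}'\hat{\bLam})$ for some $c>0$, using the $\mathcal{O}_p(1)$ bound on $\pi_{\max}(\hat{\bSigma}_{u}^{\tau})$. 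By Weyl's inequality and $\hat{\bLam}'\hat{\bLam}-\bLam_0'\bLam_0=\bC'\bC+\bC'\bLam_0+\bLam_0'\bC$,
\[
\pi_{\min}(\hat{\bLam}'\hat{\bLam})\ge\pi_{\min}(\bLam_0'\bLam_0)-\frob{\bC}^{2}-2\frob{\bC}\spec{\bLam_0}\ge c^{-1}N^{\beta}-o_p\bigl(N^{\beta}\bigr),
\]
so $\pi_{\min}(\hat{\bLam}'\hat{\bLam})\asymp N^{\beta}$ and hence $\spec{\hat{\bG}}=\mathcal{O}_p(N^{-\beta})$; the same Weyl argument applied to the largest eigenvalue also yields $\spec{\hat{\bLam}}^{2}=\pi_{\max}(\hat{\bLam}'\hat{\bLam})=\mathcal{O}_p(N^{\beta})$, which feeds into part \ref{bound2}.

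\textbf{Part \ref{bound2}.} I would start from the telescoping identity
\[
\hat{\bG}^{-1}-\bG_0^{-1}=\hat{\bLam}'(\hat{\bSigma}_{u}^{\tau})^{-1}\hat{\bLam}-\bLam_0'\bSigma_{u0}^{-1}\bLam_0=\hat{\bLam}'(\hat{\bSigma}_{u}^{\tau})^{-1}\bC+\bC'(\hat{\bSigma}_{u}^{\tau})^{-1}\bLam_0+\bLam_0'\bigl((\hat{\bSigma}_{u}^{\tau})^{-1}-\bSigma_{u0}^{-1}\bigr)\bLam_0 .
\]
Applying $\frob{\bA\bB\bD}\le\spec{\bA}\spec{\bD}\frob{\bB}$ to each summand, placing the Frobenius norm on the ``small'' factor ($\bC$, resp.\ $(\hat{\bSigma}_{u}^{\tau})^{-1}-\bSigma_{u0}^{-1}$), and inserting $\spec{\hat{\bLam}},\spec{\bLam_0}=\mathcal{O}_p(N^{\beta/2})$ and $\spec{(\hat{\bSigma}_{u}^{\tau})^{-1}}=\mathcal{O}_p(1)$, I obtain
\[
\frob{\hat{\bG}^{-1}-\bG_0^{-1}}=\mathcal{O}_p\Bigl(N^{\beta/2}\frob{\bC}+N^{\beta}\frob{(\hat{\bSigma}_{u}^{\tau})^{-1}-\bSigma_{u0}^{-1}}\Bigr)=\mathcal{O}_p\Bigl(N^{\beta}\bigl(N^{-\beta/2}\frob{\bC}+\frob{(\hat{\bSigma}_{u}^{\tau})^{-1}-\bSigma_{u0}^{-1}}\bigr)\Bigr),
\]
which is the claim (recalling $\bSigma_u=\bSigma_{u0}$).

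The main obstacle is controlling $\hat{\bLam}'\hat{\bLam}$ at the exact order $N^{\beta}$ in part \ref{bound1}: I need the perturbation $\hat{\bLam}'\hat{\bLam}-\bLam_0'\bLam_0$ to be genuinely $o_p(N^{\beta})$, which forces $\frob{\bC}=o_p(N^{\beta/2})$, i.e.\ $\mu^{2}+d_T=o(N^{\beta-1})$. This is precisely where the rate restriction $\mu=o(N^{-\beta/2})$, the condition $\log N=o(T)$, and the lower bound $\beta\ge1/2$ enter, and why the crude statement $\tfrac1N\frob{\hat{\bLam}-\bLam_0}^{2}=o_p(1)$ of Theorem~\ref{theo_consistency_lam} alone does not suffice. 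A secondary care point is keeping $\spec{(\hat{\bSigma}_{u}^{\tau})^{-1}}=\mathcal{O}_p(1)$, which needs $\pi_{\min}(\hat{\bSigma}_{u}^{\tau})$ bounded away from zero; this follows from the positive-definiteness of the POET estimator with probability tending to one together with \lemref{lem_idio}.
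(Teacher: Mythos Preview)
Your proof is correct and follows essentially the same route as the paper. For part~\ref{bound1} the paper writes the one-line bound $\spec{\hat{\bG}}\le\pi_{\max}(\hat{\bSigma}_u^\tau)/\pi_{\min}(\hat{\bLam}'\hat{\bLam})=\mathcal{O}_p(N^{-\beta})$, implicitly using that the estimate lies in $\Psi_\delta$; your Weyl argument makes the order of $\pi_{\min}(\hat{\bLam}'\hat{\bLam})$ explicit from the consistency rate, which is a harmless elaboration. For part~\ref{bound2} the paper uses the telescoping $\hat{\bG}^{-1}-\bG_0^{-1}=\bC'(\hat{\bSigma}_u^\tau)^{-1}\hat{\bLam}+\bLam_0'\bSigma_{u0}^{-1}\bC+\bLam_0'\bigl((\hat{\bSigma}_u^\tau)^{-1}-\bSigma_{u0}^{-1}\bigr)\hat{\bLam}$, whereas you telescope in a different order; both decompositions are valid and yield the identical $\mathcal{O}_p(N^{\beta/2}\frob{\bC}+N^{\beta}\frob{(\hat{\bSigma}_u^\tau)^{-1}-\bSigma_{u0}^{-1}})$ after applying $\frob{\bA\bB\bD}\le\spec{\bA}\spec{\bD}\frob{\bB}$.
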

\begin{proof}
	
	\textit{\ref{bound1}} \lemref{lem_idio}. implies $\spec{\left(\hat{\bSigma}_{u}^{\tau}\right)^{-1}} = \mathcal{O}_p(1)$. Then, by the definition of $\hat{\bG}$ we have:
	\begin{align*}
		\spec{\hat{\bG}} &\leq \spec{\left(\hat{\bLam}'\left(\hat{\bSigma}_{u}^{\tau}\right)^{-1} \hat{\bLam}\right)^{-1}} \\
		&\leq \frac{\pi_{\max}\left(\hat{\bSigma}_{u}^{\tau}\right)}{\pi_{\min}\left(\hat{\bLam}'\hat{\bLam}\right)} = \mathcal{O}_p\left(N^{-\beta}\right).
	\end{align*}
	\textit{\ref{bound2}} By the definition of $\hat{\bG}$ and $\bG_0$, we have:
	$\hat{\bG}^{-1} - \bG_0^{-1} = \hat{\bLam}'\left(\hat{\bSigma}_{u}^{\tau}\right)^{-1}\hat{\bLam} - \bLam_0'\bSigma_{u 0}^{-1}\bLam_0$. Hence, the previous quantitiy can be decomposed according to:
	\begin{align}\label{ginv}
		\hat{\bG}^{-1} - \bG_0^{-1} = \bC' \left(\hat{\bSigma}_{u}^{\tau}\right)^{-1} \hat{\bLam} + \bLam_0'\bSigma_{u0}^{-1}\bC + \bLam_0' \left(\left(\hat{\bSigma}_{u}^{\tau}\right)^{-1} - \bSigma_{u0}^{-1}\right) \hat{\bLam}.
	\end{align}
	If we bound all three terms on the right hand side of equation \eqref{ginv}, we get:
	\begin{align*}
		\frob{\hat{\bG}^{-1} - \bG_0^{-1}} &\leq \frob{\bC} \mathcal{O}_p\left(N^{\beta/2}\right) + \frob{\left(\hat{\bSigma}_{u}^{\tau}\right)^{-1} - \bSigma_{u0}^{-1}} \mathcal{O}_p\left(N^{\beta}\right)\\
		&= \mathcal{O}_p\left(N^{\beta}\left(N^{-\beta/2}\frob{\bC} + \frob{\left(\hat{\bSigma}_{u}^{\tau}\right)^{-1} - \bSigma_{u}^{-1}} \right)\right).
	\end{align*}
\end{proof}
\begin{lemma}
	\begin{align*}
		\frac{1}{N}\frob{\hat{\bSigma}_\text{SAF}^{-1} - \bSigma^{-1}}^2 = \mathcal{O}_p\left(\frac{L_N}{N^{\beta+1}}\left[\mu^2 + \frac{N}{L_N}d_T\right] + S_N^2\left[\mu^2 + \frac{N}{L_N} d_T\right]\right).
	\end{align*}
\end{lemma}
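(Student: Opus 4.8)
The plan is to invert both covariance matrices explicitly with the Sherman--Morrison--Woodbury identity and then bound the resulting pieces separately, in the same spirit as the preceding Frobenius-norm bound for $\hat{\bSigma}_\text{SAF}$ itself, but now also exploiting the fast decay of $\hat{\bG}$ and $\bG_0$. Write
\[
\bSigma^{-1} = \bSigma_{u0}^{-1} - \bSigma_{u0}^{-1}\bLam_0\bG_0\bLam_0'\bSigma_{u0}^{-1}, \qquad \hat{\bSigma}_\text{SAF}^{-1} = (\hat{\bSigma}_{u}^{\tau})^{-1} - (\hat{\bSigma}_{u}^{\tau})^{-1}\hat{\bLam}\hat{\bG}\hat{\bLam}'(\hat{\bSigma}_{u}^{\tau})^{-1},
\]
with $\hat{\bG}$, $\bG_0$ as defined just above. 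Setting $\bM := (\hat{\bSigma}_{u}^{\tau})^{-1}\hat{\bLam}$, $\bM_0 := \bSigma_{u0}^{-1}\bLam_0$ and $\bC := \hat{\bLam} - \bLam_0$, the triangle inequality reduces the task to bounding $\frac1N\frob{(\hat{\bSigma}_{u}^{\tau})^{-1} - \bSigma_{u0}^{-1}}^2$ and $\frac1N\frob{\bM\hat{\bG}\bM' - \bM_0\bG_0\bM_0'}^2$.

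For the first of these I would use $(\hat{\bSigma}_{u}^{\tau})^{-1} - \bSigma_{u0}^{-1} = -(\hat{\bSigma}_{u}^{\tau})^{-1}(\hat{\bSigma}_{u}^{\tau} - \bSigma_{u0})\bSigma_{u0}^{-1}$ together with $\spec{\bSigma_{u0}^{-1}} = \mathcal{O}(1)$ and $\spec{(\hat{\bSigma}_{u}^{\tau})^{-1}} = \mathcal{O}_p(1)$ (Assumption~\ref{data_assum}\ref{assum_sig} and Lemma~\ref{lem_idio}), and the crude bound $\frob{\bA}^2 \le N\spec{\bA}^2$, to obtain from Lemma~\ref{lem_idio} that $\frac1N\frob{(\hat{\bSigma}_{u}^{\tau})^{-1} - \bSigma_{u0}^{-1}}^2 = \mathcal{O}_p\!\big(\spec{\hat{\bSigma}_{u}^{\tau}-\bSigma_{u0}}^2\big) = \mathcal{O}_p\!\big(S_N^2[\mu^2 + \tfrac{N}{L_N}d_T]\big)$, which already delivers the $S_N^2[\mu^2 + \tfrac{N}{L_N}d_T]$ term of the statement.

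For the second term, decompose
\[
\bM\hat{\bG}\bM' - \bM_0\bG_0\bM_0' = (\bM - \bM_0)\hat{\bG}\bM' + \bM_0(\hat{\bG}-\bG_0)\bM' + \bM_0\bG_0(\bM - \bM_0)',
\]
and bound each factor with ingredients already available: $\spec{\bLam_0} = \mathcal{O}(N^{\beta/2})$ (Assumption~\ref{assum_lam}) and $\spec{\hat{\bLam}} = \mathcal{O}_p(N^{\beta/2})$ (Theorem~\ref{theo_consistency_lam}), hence $\spec{\bM},\spec{\bM_0} = \mathcal{O}_p(N^{\beta/2})$; $\spec{\hat{\bG}},\spec{\bG_0} = \mathcal{O}_p(N^{-\beta})$ (Lemma~\ref{bound}\ref{bound1}); $\frob{\bM - \bM_0} \le \spec{(\hat{\bSigma}_{u}^{\tau})^{-1}}\frob{\bC} + \spec{\bLam_0}\frob{(\hat{\bSigma}_{u}^{\tau})^{-1} - \bSigma_{u0}^{-1}}$; and, crucially, $\hat{\bG}-\bG_0 = -\hat{\bG}(\hat{\bG}^{-1}-\bG_0^{-1})\bG_0$, so that the $\mathcal{O}_p(N^\beta)$ bound of Lemma~\ref{bound}\ref{bound2} on $\frob{\hat{\bG}^{-1}-\bG_0^{-1}}$ picks up the extra factor $\spec{\hat{\bG}}\spec{\bG_0} = \mathcal{O}_p(N^{-2\beta})$ and becomes $\frob{\hat{\bG}-\bG_0} = \mathcal{O}_p\!\big(N^{-\beta}(N^{-\beta/2}\frob{\bC} + \frob{(\hat{\bSigma}_{u}^{\tau})^{-1}-\bSigma_{u0}^{-1}})\big)$. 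Plugging in $\frob{\bC}^2 \le L_N\max_{i\le N}\spec{\hat{\blam}_i - \blam_{i0}}^2 = \mathcal{O}_p\!\big(L_N[\mu^2 + \tfrac{N}{L_N}d_T]\big)$ (Lemma~\ref{lem_est_load}) and the first-term bound for $\frob{(\hat{\bSigma}_{u}^{\tau})^{-1}-\bSigma_{u0}^{-1}}$, each of the three pieces turns out to have Frobenius norm $\mathcal{O}_p\!\big(N^{-\beta/2}\sqrt{L_N[\mu^2 + \tfrac{N}{L_N}d_T]} + N^{1/2}S_N\sqrt{\mu^2 + \tfrac{N}{L_N}d_T}\big)$; squaring and dividing by $N$ gives $\mathcal{O}_p\!\big(\tfrac{L_N}{N^{\beta+1}}[\mu^2 + \tfrac{N}{L_N}d_T] + S_N^2[\mu^2 + \tfrac{N}{L_N}d_T]\big)$, which together with the first term is exactly the claimed rate.

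The step I expect to be the main obstacle — the one that must be handled carefully rather than crudely — is the middle piece $\bM_0(\hat{\bG}-\bG_0)\bM'$. A lazy bound $\spec{\bM_0}\frob{\hat{\bG}-\bG_0}\spec{\bM'}$ carries a prefactor $\spec{\bM_0}\spec{\bM'} = \mathcal{O}_p(N^\beta)$, and inserting Lemma~\ref{bound}\ref{bound2} directly (without first rewriting $\hat{\bG}-\bG_0$ through $\hat{\bG}$ and $\bG_0$) would let the rate blow up by a further $\mathcal{O}_p(N^\beta)$. Everything hinges on the contraction $\spec{\hat{\bG}}\spec{\bG_0} = \mathcal{O}_p(N^{-2\beta})$ exactly cancelling the $N^\beta$ growth from $\spec{\bM_0}\spec{\bM'}$ and the $N^\beta$ from Lemma~\ref{bound}\ref{bound2} — i.e.\ $N^{\beta}\cdot N^{-2\beta}\cdot N^{\beta}=1$ — so that this term ends up of the same order as the other two. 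A secondary bookkeeping point is to check that the cross-contributions (the $N^{(\beta+1)/2}S_N$ piece of $\frob{\bM-\bM_0}$ coming from the idiosyncratic-inverse error) get absorbed into the $S_N^2[\mu^2+\tfrac{N}{L_N}d_T]$ term rather than generating a genuinely larger rate.
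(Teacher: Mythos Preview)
Your proposal is correct and follows essentially the same route as the paper: Sherman--Morrison--Woodbury on both sides, a telescoping decomposition of the low-rank parts, and the key observation that $\hat{\bG}-\bG_0=-\hat{\bG}(\hat{\bG}^{-1}-\bG_0^{-1})\bG_0$ so that the $\mathcal{O}_p(N^{-2\beta})$ from $\spec{\hat{\bG}}\spec{\bG_0}$ exactly cancels the $N^{\beta}$ growth coming from Lemma~\ref{bound}\ref{bound2} and from $\spec{\bM_0}\spec{\bM}$. The only cosmetic difference is that the paper splits directly into six pieces $L_1,\dots,L_6$ (separating the $\bC$- and $[(\hat{\bSigma}_u^\tau)^{-1}-\bSigma_{u0}^{-1}]$-contributions from the outset) whereas you telescope into three pieces and then split $\bM-\bM_0$; the bounds and the crucial cancellation are identical.
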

\begin{proof}
	Using the Sherman-Morrison-Woodbury inverse formula, we get
	\begin{align*}
		\frob{\hat{\bSigma}_\text{SAF}^{-1} - \bSigma^{-1}}^2 = \sum_{i = 1}^{6} L_i,
	\end{align*}
	where
	\begin{align*}
		L_1 &= \frob{\left(\hat{\bSigma}_{u}^{\tau}\right)^{-1} - \bSigma_{u 0}^{-1}}^2,	\\
		L_2 &= \frob{\left[\left(\hat{\bSigma}_{u}^{\tau}\right)^{-1} - \bSigma_{u 0}^{-1}\right] \hat{\bLam}\left[\bI_r + \hat{\bLam}'\left(\hat{\bSigma}_{u}^{\tau}\right)^{-1}\hat{\bLam}\right]^{-1}\hat{\bLam}'\left(\hat{\bSigma}_{u}^{\tau}\right)^{-1}}^2,	\\
		L_3 &= \frob{\left[\left(\hat{\bSigma}_{u}^{\tau}\right)^{-1} - \bSigma_{u 0}^{-1} \right]\hat{\bLam}\left[\bI_r + \hat{\bLam}'\left(\hat{\bSigma}_{u}^{\tau}\right)^{-1}\hat{\bLam}\right]^{-1}\hat{\bLam}'\bSigma_{u 0}^{-1}}^2,	\\
		L_4 &= \frob{\bSigma_{u 0}^{-1} \left(\hat{\bLam} - \bLam_0\right) \left[\bI_r + \hat{\bLam}'\left(\hat{\bSigma}_{u}^{\tau}\right)^{-1}\hat{\bLam}\right]^{-1}\hat{\bLam}'\bSigma_{u 0}^{-1}}^2,	\\
		L_5 &= \frob{\bSigma_{u 0}^{-1} \left(\hat{\bLam} - \bLam_0\right) \left[\bI_r + \hat{\bLam}'\left(\hat{\bSigma}_{u}^{\tau}\right)^{-1}\hat{\bLam}\right]^{-1}\bLam_0'\bSigma_{u 0}^{-1}}^2,	\\
		L_6 &= \frob{\bSigma_{u 0}^{-1} \bLam_0\left(\left[\bI_r + \hat{\bLam}'\left(\hat{\bSigma}_{u}^{\tau}\right)^{-1}\hat{\bLam}\right]^{-1} - \left[\bI_r + \bLam_0'\bSigma_{u}^{-1}\bLam_0\right]^{-1} \right)\bLam_0'\bSigma_{u 0}^{-1}}^2.	\\
	\end{align*}
	In the following, we bound each of the six terms, separately.
	\begin{align*}
		L_2 \leq \frob{\left(\hat{\bSigma}_{u}^{\tau}\right)^{-1} - \bSigma_{u 0}^{-1}}^2 \spec{\hat{\bLam}\hat{\bG}\hat{\bLam}'}^2 \spec{\left(\hat{\bSigma}_{u}^{\tau}\right)^{-1}}^2.
	\end{align*}
	By \lemref{bound}. \textit{\ref{bound1}} follows that $L_2 \leq \mathcal{O}_p(L_1)$. Similarly, $L_3$ is also $\mathcal{O}_p(L_1)$.\\
	Further,
	\begin{align*}
		L_4 \leq \spec{\bSigma_{u 0}^{-1}}^2 \frob{\bC}^2 \spec{\hat{\bG}}^2 \spec{ \hat{\bLam}'\bSigma_{u 0}^{-1}}^2.
	\end{align*}
	Hence, also by \lemref{bound}. \textit{\ref{bound1}}
	\begin{align*}
		L_4 \leq \frob{\bC}^2 \mathcal{O}_p\left(N^{-\beta}\right) = \mathcal{O}_p\left(\frob{\bC}^2N^{-\beta}\right).
	\end{align*}
	Similarly, $L_5 = \mathcal{O}_p\left(L_4\right)$. Finally,
	\begin{align*}
		L_6 \leq \spec{\bSigma_{u 0}^{-1} \bLam_0}^4 \frob{\hat{\bG} - \bG_0}^2.
	\end{align*}
	By \lemref{bound}. \textit{\ref{bound2}} we have,
	\begin{align*}
		L_6 &\leq \mathcal{O}_p\left(N^{2\beta}\right) \frob{\hat{\bG}\left(\bG_0^{-1} - \hat{\bG}^{-1}\right)\bG_0}^2	\\
		&\leq \mathcal{O}_p\left(N^{-2\beta}\right)\frob{\bG_0^{-1} - \hat{\bG}^{-1}}^2\\
		&= \mathcal{O}_p\left(N^{-2\beta}\right) \mathcal{O}_p\left(N^{2\beta}\left(N^{-\beta}\frob{\bC}^2 + \frob{\left(\hat{\bSigma}_{u}^{\tau}\right)^{-1} - \bSigma_{u}^{-1}}^2 \right)\right)\\
		&= \mathcal{O}_p\left(N^{-\beta}\frob{\bC}^2 + \frob{\left(\hat{\bSigma}_{u}^{\tau}\right)^{-1} - \bSigma_{u}^{-1}}^2 \right).
	\end{align*}
	Adding up the terms $L_1 - L_6$ gives
	\begin{align*}
		\frac{1}{N}\frob{\hat{\bSigma}_\text{SAF}^{-1} - \bSigma^{-1}}^2 = \mathcal{O}_p\left(\frac{L_N}{N^{\beta+1}}\left[\mu^2 + \frac{N}{L_N}d_T\right] + S_N^2\left[\mu^2 + \frac{N}{L_N} d_T\right]\right).
	\end{align*}
\end{proof}
\subsection{Isometry group for the $l_p$-norm} \label{subsec:unitary_permutation}
In this section, we provide a short demonstration that only a unitary generalized permutation matrix can be an isometry for the $l_p$-norm. 

Define the $l_p$ norm of a $(n \times n)$-dimensional matrix $\bA$ as:
\begin{align*}
	|\lVert \bA\lVert|_p = \left(\sum_{i = 1}^n \sum_{j = 1}^n |a_{ij}|^p\right)^{1/p}.
\end{align*}

Consider the isometry $\bP$, which is $(n \times n)$-dimensional for the $l_p$ norm, with \\$1\leq p < 2 < q \leq \infty$ and the $(n \times n)$-dimensional isometry $\bP^*$ for the $l_q$ norm. For each standard basis vector $e_j$, we have:
\begin{align*}
	|\lVert \bP e_j\lVert|_p = |\lVert e_j\lVert|_p = 1.
\end{align*}
Hence, $\sum_{i = 1}^n |p_{ij}|^p = 1$, for each $j = 1, \dots, n$. Further, we have $|p_{ij}| \leq 1$, for all $i,j = 1, \dots, n$ and $\sum_{i = 1}^n \sum_{j = 1}^n |p_{ij}|^p = n$. Similar, considering $\bP^*$ and the $l_q$ norm, yields \\
$\sum_{i = 1}^n \sum_{j = 1}^n |p_{ij}|^q = n$, as well. However, $|p_{ij}|^q \leq |p_{ij}|^p$ holds with equality if and only if $p_{ij} = 0, 1$ or $-1$. Hence, each column of $\bP$ contains only one non-zero element and it has unit modulus. Furthermore, the non-singularity of $\bP$ ensures that each row of $\bP$ has no more than one non-zero element.

\subsection{Proof of \propref{prop_eig_cov}} \label{subsec:gmv_weights}

\begin{proof}
	As presented in Section \ref{fm_cov_est}, the general equation of the covariance matrix estimator based on an approximate factor model is given by:
	\begin{align}
		\bSigma       &= \bLam\bLam' + \bSigma_{u}.\label{cov_factor}
	\end{align}
	Correspondingly, the precision matrix is given by the inverse of the two matrices on the right-hand side of \eqref{cov_factor}:
	\begin{align}
		\bSigma^{-1}  &= \bSigma_{u}^{-1} - \bSigma_{u}^{-1}\bLam\left(\bI_r + \bLam'\bSigma_{u}^{-1}\bLam\right)^{-1}\bLam'\bSigma_{u}^{-1}. \label{prec_fac}
	\end{align}
	Further, by the definition of the factor loadings matrix $\bLam$, the first part on the right-hand side of \eqref{cov_factor} can be expressed as:
	\begin{align*}
		%		\bLam &= \left(\begin{array}{cccc}
		%			\lambda_{11} & \lambda_{12} & \cdots & \lambda_{1r}	\\
		%			\vdots 		 & \vdots       &        & \vdots 		\\
		%			\lambda_{N1} & \lambda_{N2} & \cdots & \lambda_{Nr}
		%		\end{array}\right)\\
		\bLam\bLam' &= \left(\begin{array}{cccc}
			\sum_{k = 1}^r \lambda_{1k}^2 & 	& 	 &	\bC\\
			& \sum_{k = 1}^r \lambda_{2k}^2 & &	\\
			& & \ddots	&					\\
			\bC	& & &			\sum_{k = 1}^r \lambda_{Nk}^2	 
		\end{array}\right),
	\end{align*}
	where $\bC$ denotes the upper and lower diagonal block of the matrix $\bLam\bLam'$.
	
	Hence, the sum of the eigenvalues of $\bLam\bLam'$ is calculated as:
	\begin{align} 
		\sum_{k = 1}^r \pi_k\left(\bLam\bLam'\right) = \trace{\bLam\bLam'} = \sum_{i = 1}^{N} \sum_{k = 1}^r \lambda_{ik}^2.\label{eig_lam}
	\end{align}
	From equation \eqref{eig_lam}, we can clearly see that sparsity or zeros in the factor loadings matrix corresponds to shrinking the sum of the eigenvalues of the covariance of the common component.\\
	In the next step, we want to analyze the global minimum variance portfolio weights based on the estimate of the covariance matrix of our SAF model.
	
	Without loss of generality we assume that the idiosyncratic error covariance matrix is an identity matrix, which corresponds to a high penalization of the off-diagonal elements based on the POET method. Hence, the precision matrix in \eqref{prec_fac} simplifies to: 
	\begin{align}
		\bSigma^{-1} &= \bI_N - \bLam\left(\bI_r + \bLam'\bLam\right)^{-1}\bLam' \notag\\
		&= \bI_N - \left[\bI_N + \left(\bLam\bLam'\right)^{-1}\right]^{-1}. \label{prec_f}
	\end{align}
	In the following, we have a look at the eigenvalues of the precision matrix of our SAF estimator based on equation \eqref{prec_f}:
	\begin{align}
		\sum_{i = 1}^{N} \pi_i\left(\bSigma^{-1}\right) &= \sum_{i = 1}^{N} \pi_i\left(\bI_N\right) - \sum_{i = 1}^{N} \frac{1}{1 + 1/\pi_i\left(\bLam\bLam'\right)}	\notag \\
		&= \sum_{i = 1}^{N} \pi_i\left(\bI_N\right) - \sum_{i = 1}^{N} \frac{\pi_i\left(\bLam\bLam'\right)}{\pi_i\left(\bLam\bLam'\right) + 1}	\notag\\
		&\le \sum_{i = 1}^{N} \pi_i\left(\bI_N\right) - \frac{\sum_{i = 1}^{N}\pi_i\left(\bLam\bLam'\right)}{N + \sum_{i = 1}^{N}\pi_i\left(\bLam\bLam'\right)}. \label{eig_inv}
	\end{align}
	From equations \eqref{eig_lam} and \eqref{eig_inv}, we can see that the possible sparsity in $\bLam$ allowed by our SAF model shrinks the precision matrix based on the SAF model towards an identity matrix. As the GMVP weights directly depend on an estimate of the precision matrix this implies a shrinkage of the SAF portfolio weights towards the weights of the $1/N$-portfolio.
\end{proof}

%%%%%%%%%%%%%%%%%%%%%%%%%%%%%%%%%%%%%%%%%%%%%%%%%%%%%%%%%%%%%%%%%%%%%%%%%%%%%%%%%%%%%%

\section{Implementation Issues}\label{sec:implem}
For the implementation of the SAF model, we use a two-step estimation procedure that treats $\bSigma_u$ in the first step as a diagonal matrix, denoted as $\bPhi_u$ and re-estimates the idiosyncratic error covariance matrix in a second step by the method introduced in Section \ref{error_factor}. Theorem \ref{theo_consistency_lam} shows that this two-step procedure yields consistent estimates for $\bLam$ and $\bSigma_u$.
\subsection{Majorized Log-Likelihood Function}
The numerical minimization of the objective function  \eqref{sparse_lasso} is cumbersome as it is not globally convex.   
This problem arises because the first term in \eqref{sparse_lasso} $\log\left|\det\left(\bLam\bLam' + \bPhi_{u}\right)\right|$ is concave in $\bLam$ and $\bPhi_{u}$, whereas the second term $\traces{\bS_{x}\left(\bLam \bLam' + \bPhi_{u}\right)^{-1}}$ is convex.
For our implementation we employ the majorize-minimize EM algorithm introduced by \citesupp{BienTibshiraniothers2011}.
%
%In this section we elaborate the details regarding the implementation of our proposed sparse factor model.
% If we take a closer look at our objective function in \eqref{sparse_lasso}, we notice that it is cumbersome to solve that function numerically. 
%This issue arises from the fact that the first term in \eqref{sparse_lasso} $\log\left|\det\left(\bLam\bLam' + \bPhi_{u}\right)\right|$ is concave in $\bLam$ and $\bPhi_{u}$, whereas the second term $\traces{\bS_{x}\left(\bLam \bLam' + \bPhi_{u}\right)^{-1}}$ is convex.
%In order to solve this issue we are referring to a majorize-minimize EM algorithm introduced by \citesupp{BienTibshiraniothers2011}. 
The idea of this optimization approach is to approximate the numerically unstable concave part $\log\left|\det\left(\bLam\bLam' + \bPhi_{u}\right)\right|$ by its tangent plane, which corresponds to the following expression:
\begin{align}
	\log\left|\det\left(\hat{\bLam}_m\hat{\bLam}_m' + \hat{\bPhi}_{u, m}\right)\right| + \traces{2\hat{\bLam}_m'\left(\hat{\bLam}_m\hat{\bLam}_m' + \hat{\bPhi}_{u, m}\right)^{-1}\left(\bLam - \hat{\bLam}_m\right)} \label{tanget_plane_lik},
\end{align}
where the subscript $m$ denotes the $m$-th step in an iterative procedure outlined in Section \ref{impl:pga}. Replacing the concave part in \eqref{quasi_lik} by the convex expression in \eqref{tanget_plane_lik}, yields the following majorized log-likelihood function:
\begin{align}
	\begin{split}\label{major_lik}
		\bar{\mathcal{L}}_m(\bLam) = &\log\left|\det\left(\hat{\bLam}_m\hat{\bLam}_m' + \hat{\bPhi}_{u, m}\right)\right| + \traces{2\hat{\bLam}_m'\left(\hat{\bLam}_m\hat{\bLam}_m' + \hat{\bPhi}_{u, m}\right)^{-1}\left(\bLam - \hat{\bLam}_m\right)} 	\\
		& + \traces{\bS_{x}\left(\bLam\bLam' + \hat{\bPhi}_{u, m}\right)^{-1}}.
	\end{split}
\end{align}
Augmenting the majorized log-likelihood by the $l_1$-penalty term, leads to the following optimization problem for our SAF model: 
\begin{align}
	\begin{split}\label{major_p_lik}
		\min_{\bLam} &\left\{\log\left|\det\left(\hat{\bLam}_m\hat{\bLam}_m' + \hat{\bPhi}_{u, m}\right)\right| + \traces{2\hat{\bLam}_m'\left(\hat{\bLam}_m\hat{\bLam}_m' + \hat{\bPhi}_{u, m}\right)^{-1}\left(\bLam - \hat{\bLam}_m\right)}\right. 	\\
		& \left.+ \traces{\bS_{x}\left(\bLam\bLam' + \hat{\bPhi}_{u, m}\right)^{-1}} + \mu \sum_{k = 1}^{r} \sum_{i = 1}^{N} \left|\lambda_{ik}\right|\right\}.
	\end{split}
\end{align}
As all three components in \eqref{major_p_lik} are convex, the optimization problem simplifies considerably compared to the original problem in equation \eqref{sparse_lasso}.

\subsection{Projection Gradient Algorithm} \label{impl:pga}
In order to minimize \eqref{major_p_lik} efficiently, we apply 
the fast projected gradient algorithm proposed by \citesupp{BienTibshiraniothers2011}. More specifically, we approximate the majorized log-likelihood $\bar{\mathcal{L}}_m(\bLam)$ in \eqref{major_lik} by the following expression:
\begin{align*}
	\tilde{\mathcal{L}}(\bLam) = \frac{1}{2t}\left\Vert\bLam - \hat{\bLam}_m + t\hat{\bA}\right\Vert_F^2 \, ,
\end{align*}
where $t$ is the depth of projection\footnote{We set $t = 0.01$ for all our applications below.} and
\begin{align}
	\hat{\bA} = 2\left[\left(\hat{\bLam}_m\hat{\bLam}_m' + \hat{\bPhi}_{u, m}\right)^{-1} -\left(\hat{\bLam}_m\hat{\bLam}_m' + \hat{\bPhi}_{u, m}\right)^{-1}\bS_{x}\left(\hat{\bLam}_m\hat{\bLam}_m' + \hat{\bPhi}_{u, m}\right)^{-1}\right]\hat{\bLam}_m, \label{f_diff}
\end{align}
which corresponds to the first derivative of $\bar{\mathcal{L}}(\bLam)$ with respect to $\bLam$.
Hence, our final optimization problem corresponds to:
\begin{align}
	\underset{\lambda_{ik}}{\min} \; \frac{1}{2t} \sum_{k = 1}^{r} \sum_{i = 1}^{N} \left(\lambda_{ik} - \hat{\lambda}_{ik, m} + t \hat{A}_{ik, m}\right)^2 + \mu \sum_{k = 1}^{r} \sum_{i = 1}^{N} \left|\lambda_{ik}\right|.	\label{p_lasso_app}
\end{align}
%
%\subsection{Solution for the Sparse Approximate Factor Model}
%In this section we propose an iterative procedure to find a solution for our sparse approximate factor model. \\
The minimization of the objective function  \eqref{p_lasso_app} can be carried out by computing its gradient with respect to $\lambda_{ik}$ 
and setting it to zero, which yields:
\begin{align}
	\begin{split}\label{sec:impl:grad}
		\frac{\partial}{\partial \lambda_{ik}} \;&\left[\frac{1}{2t} \sum_{k = 1}^{r} \sum_{i = 1}^{N} \left(\lambda_{ik} - \hat{\lambda}_{ik, m} + t \hat{A}_{ik, m}\right)^2 + \mu \sum_{k = 1}^{r} \sum_{i = 1}^{N} \left|\lambda_{ik}\right|\right]	\\
		&= \frac{1}{t} \sum_{k = 1}^{r} \sum_{i = 1}^{N} \left(\hat{\lambda}_{ik} - \hat{\lambda}_{ik, m} + t \hat{A}_{ik, m}\right) + \mu \sum_{k = 1}^{r} \sum_{i = 1}^{N} \nu_{ik} = 0 \, ,
	\end{split}
\end{align}
where $\nu_{ik}$ denotes the subgradient of $\left|\lambda_{ik}\right|$. Solving \eqref{sec:impl:grad} for a specific $\hat{\lambda}_{ik}$, gives:
\begin{align}
	\hat{\lambda}_{ik} + t\cdot\mu \nu_{ik} &= \hat{\lambda}_{ik, m} - t\hat{A}_{ik, m}	\notag\\
	\hat{\lambda}_{ik} &= \mathcal{S}\left(\hat{\lambda}_{ik, m} - t\hat{A}_{ik, m}, \, t \cdot \mu\right),\label{lasso_sol}
\end{align}
where $\mathcal{S}$ denotes the soft-thresholding operator defined in equation \eqref{soft_t}.
Equation \eqref{lasso_sol} can be used to update the estimated factor loadings $\hat{\lambda}_{ik, m+1}$ given the estimate from the previous step $\hat{\lambda}_{ik, m}$.

In order to obtain an update for the estimate of the covariance matrix of the idiosyncratic errors $\bPhi_{u}$, we use the EM algorithm 
suggested by \citesupp{BaiLi2012}:
\begin{align*}
	\hat{\bPhi}_{u, m+1} = \text{diag}\left[\bS_{x} - \hat{\bLam}_{m+1} \hat{\bLam}_{m}' \left(\hat{\bLam}_{m}\hat{\bLam}_{m}' + \hat{\bPhi}_{u, m}\right)^{-1} \bS_{x}\right].
\end{align*}
Our iterative estimation procedure for the SAF model can be briefly summarized as given below.\\

\textit{Iterative Algorithm}
\begin{itemize}
	\item[\textit{Step 1:}] Obtain an initial consistent estimate for the factor loadings matrix $\bLam$ and for the diagonal idiosyncratic error covariance matrix $\bPhi_{u}$ , i.e. by using unpenalized MLE and set $m = 1$.
	\item[\textit{Step 2:}] Update $\hat{\lambda}_{ik,m-1}$, by $ \hat{\lambda}_{ik,m} = \mathcal{S}\left(\hat{\lambda}_{ik, m-1} - t \hat{A}_{ik, m-1}, \, t \cdot \mu\right) $.
	%	\begin{align*}
	%	\hat{\lambda}_{ik,m} = S\left(\hat{\lambda}_{ik, m-1} - t \hat{A}_{ik, m-1}, \quad t \cdot \mu\right)
	%	\end{align*}
	%
	\item[\textit{Step 3:}] Update $\hat{\bPhi}_{u}$ using the EM algorithm in \citesupp{BaiLi2012}, according to \\
	$ \hat{\bPhi}_{u, m} = \text{diag}\left[\bS_{x} - \hat{\bLam}_m \hat{\bLam}_{m-1}' \left(\hat{\bLam}_{m-1}\hat{\bLam}_{m-1}' + \hat{\bPhi}_{u, m-1}\right)^{-1} \bS_{x}\right] $.
	%	\begin{align*}
	%	\hat{\bPhi}_{u, m} = \text{diag}\left[\bS_{x} - \hat{\bLam}_m \hat{\bLam}_{m-1}' \left(\hat{\bLam}_{m-1}\hat{\bLam}_{m-1}' + \hat{\bPhi}_{u, m-1}\right)^{-1} \bS_{x}\right]
	%	\end{align*}
	\item[\textit{Step 4:}] If $\spec{\hat{\bLam}_m - \hat{\bLam}_{m-1}}$ and $\spec{\hat{\bPhi}_{u,m} - \hat{\bPhi}_{u,m-1}}$ are sufficiently small, stop the procedure,  
	otherwise set $m = m + 1$ and return to \textit{Step 2}.
	\item[\textit{Step 5:}] Estimate the factors by $\hat{\bfa}_t = \left(\hat{\bLam}' \hat{\bPhi}_{u}^{-1} \hat{\bLam}\right)^{-1}\hat{\bLam}'\hat{\bPhi}_{u}^{-1}x_t$ ,
	%\begin{align*}
	%\hat{\bfa}_t = \left(\hat{\bLam}' \hat{\bPhi}_{u} \hat{\bLam}\right)^{-1}\hat{\bLam}'\hat{\bPhi}_{u}^{-1}x_t \, ,
	%\end{align*}
	where $\hat{\bLam}$ and $\hat{\bPhi}_{u}$ are the parameter estimates after convergence.
	\item[\textit{Step 6:}] Re-estimate the covariance matrix of the idiosyncratic errors based on the procedure introduced in Section \ref{error_factor}.
\end{itemize}
For the high dimensional case of $N > T$, the sample covariance matrix $\bS_x$ is not of full rank and hence leads to inconsistent parameter estimates. To overcome this problem, we adopt the solution proposed by \citesupp{BienTibshiraniothers2011}, who suggest augmenting the diagonal elements of $\bS_x$ by an arbitrarily small $\varepsilon > 0$, when $\bS_x$ is not of full rank. This augmentation stabilizes $\bS_x$ and yields a non-degenerate solution for our sparse factor model.

\subsection{Selecting the number of factors}
In order to select the number of latent factors $r$, we follow \citesupp{Onatski2010}.  
To the best of our knowledge, Onatski's method is the only one, which does not explicitly require that all factors are strong. Therefore, it is suitable for our setting, which  allows as well for weak factors.
The method uses the difference in subsequent eigenvalues and chooses the largest $\hat{r}$ such that:
\begin{align*}
	\{\hat{r} \leq r_{\max}: \pi_{\hat{r}}((\bX'\bX)/T) - \pi_{\hat{r}+1}((\bX'\bX)/T) > \xi\},
\end{align*}
where $\xi$ is a fixed positive constant, $r_{\max}$ is an upper bound for the possible number of factors and $\pi_{\hat{r}}((\bX'\bX)/T)$ denotes the $\hat{r}$-th largest eigenvalue of the covariance matrix of $\bX$.
For the choice of $\xi$, the empirical distribution of the eigenvalues of the data sample covariance matrix is taken into account.\footnote{We refer to  \citesupp{Onatski2010} for the detailed description of the determination of $\xi$.} However, the estimation of the number of factors based on the empirical distribution of the eigenvalues of the sample covariance matrix still requires a clear separation of the eigenvalues of the common and idiosyncratic component. Therefore, its selection accuracy depends on the degree of differentiability between the two components. Nevertheless, even if the selection method of \citesupp{Onatski2010} overestimates the true number of factors, the sparsity assumption in our setting would allow us to disentangle the informative factors from those that are too weak. Thus, compared to the standard approximate factor model we avoid including redundant factors that amplify the misspecification error. Moreover, to further support the above argument, we refer to \citesupp{YuSamworth2013}, who show that in the weak factor setting the true number of factors is not asymptotically overestimated.
%
%Nevertheless, the main reason of choosing the method by \citesupp{Onatski2010} arises from the fact that it is to the best of our knowledge the only %method that does not explicitly require that all factors are strong and have eigenvalues that diverge with a rate of $N$. Therefore, it is suitable for our %setting that allows also for weak factors with a slower divergence rate.

\subsection{Choosing the tuning parameter}
As for any penalized estimation approach, the selection of the tuning parameter $\mu$ is crucial, as it controls the degree of sparsity in the factor loadings matrix and it affects the performance of our estimator. 
In our case we select $\mu$ based on a type of Bayesian information criterion, according to:
\begin{align}%2*(log(N)./(sqrt(N)) .* sqrt(3*log(N)./(N.*T)));
	IC(\mu) = \mathcal{L}\left(\hat{\bLam}, \bS_{\hat{F}}, \hat{\bSigma}_{u}^{\tau}\right) + 2 \kappa_\mu \sqrt{\frac{\log N}{N} + \frac{\log N}{N\cdot T}},\label{alpha_crit}
\end{align}
where $\kappa_\mu$ denotes the number of non-zero elements in the factor loadings matrix $\hat{\bLam}$ for a given value of $\mu$ and $\mathcal{L}\left(\hat{\bLam}, \bS_{\hat{F}}, \hat{\bSigma}_{u}^{\tau}\right)$ is the value of the log-likelihood function in equation \eqref{quasi_lik_initial}, evaluated at the estimates of the factors, the factor loadings and the covariance matrix of the idiosyncratic errors. The penalty term in \eqref{alpha_crit} has the property of converging to zero as both $N$ and $T$ approach infinity. Hence, the penalization vanishes as the sample size increases and a smaller value for $\mu$ is selected. The characteristics of our information criterion are therefore convenient with respect to the asymptotic properties we require for the regularization parameter $\mu$. In fact, we need $\mu = o(1)$ in order to achieve estimation consistency, as elaborated in Section \ref{sec:theo}. The representation of the penalty term is based on the convergence rate of the factor loadings estimator in \lemref{lem_est_load}.

To select the optimal $\mu$, we estimate the criterion in \eqref{alpha_crit} for a grid of different values for $\mu$ and choose the one that minimizes our information criterion. For the grid of the shrinkage parameter we consider the interval $\mu = (0, \mu_{\text{max}})$, where $\mu_{\text{max}}$ denotes the highest value for the shrinkage parameter such that all imposed model restrictions are still fulfilled.
%%%%%%%%%%%%%%%%%%%%%%%%%%%%%%%%%%%%%%%%%%%%%%%%%%%%%%%%%%%%%%%%%%%%%%%%%%%%%%%%%%%%%%
\section{Competing Approaches}
\label{sec:A_methods}
In this section, we summarize the estimation methods that are used in the simulation study and in the empirical application.\footnote{We also included in our extended  comparative study the approaches 
	by \citesupp{Frahm2010} and \citesupp{Pollak2011}, 
	which are based on direct shrinkage of the portfolio weights. The performance of these two models was clearly inferior, so that we refrained from giving the results here. However, they can be obtained from the authors upon request.}
\vspace*{0.3cm}
\begin{itemize}[leftmargin=*]
	\item \textit{Equally Weighted Portfolio ($1/N$)}
\end{itemize}
The equally weighted or $1/N$ portfolio strategy comprises identical portfolio weights of size $1/N$, for each of the risky assets. 
By ignoring any type of optimizing portfolio strategy it often serves as a benchmark case to be outperformed in empirical performance 
comparisons. As the weights have not to be estimated, the  $1/N$-strategy is free from any estimation risk. Moreover, the $1/N$ portfolio weights can be considered as the outcome for portfolio weights under extreme $l_2$-penalization. 
\citesupp{DeMiguel2009a} find that the mean-variance portfolio and most of its extensions cannot significantly outperform the $1/N$ portfolio.\footnote{\citesupp{Kazak/Pohlmeier2018} show, however, that conventional portfolio performance tests suffer from 
	very low power, so that the rejection of null hypothesis of equal performance of a given data-based strategy and the $1/N$-strategy is very unlikely. }
%%%%%%%%%%%%%%%%%%%%%%%%%%%%%%%%%%%%%%%%%%%%%%%%%%%%%%%%%%%%%%%%%%%%%%%%
\vspace*{0.3cm}
\begin{itemize}[leftmargin=*, resume]
	\item \textit{Sample (GMVP)}
\end{itemize}
As the extreme alternative to the $1/N$-strategy, we consider the 
plug-in estimator of the GMVP based on the sample covariance matrix of the asset returns. The plug-in estimator is free from any type 
of regularization. The plug-in approach yields unbiased estimates of the true weights (\citesupp{Okhrin/Schmid2006}), but the weight estimates are extremely unstable 
when the asset space is large relative to the time series dimension. 
For some of our empirical designs with $N= 100, 200 $, the asset dimension exceeds the sample size, $T=60$.  For these cases the plug-in estimator is infeasible, because the sample covariance matrix is singular.
%%%%%%%%%%%%%%%%%%%%%%%%%%%%%%%%%%%%%%%%%%%%%%%%%%%%%%%%%%%%%%%%%%%%%%%%
\vspace*{0.3cm}
\begin{itemize}[wide, labelwidth=!, labelindent=0pt]
	\setlength\itemsep{0.5cm}
	\item[\textbf{A.}]\textbf{\textit{Factor Models}}
	%%%%%%%%%%%%%%%%%%%%%%%%%%%%%%%%%%%%%%%%%%%%%%%%%%%%%%%%%%%%%%%%%%%%%%%%
	\begin{itemize}[leftmargin=0cm]
		\item[]\textbf{\textit{Factor models with latent factors}}
	\end{itemize}
	\begin{enumerate}[wide, labelwidth=!, labelindent=0pt]
		\setlength\itemsep{0.5cm}
		\item \textit{\citesupp{FanLiaoMincheva2013}} (POET) \label{gmv_factor}\\
		In our comparative study we include the POET estimator by \citesupp{FanLiaoMincheva2013} that is based on the standard approximate factor model with a dense factor loadings matrix and a sparse idiosyncratic error covariance matrix.
		%A precise description of the estimation of an approximate factor model can be found in \citesupp{Bai2002}. 
		Similar to SAF, we use the number of factors selected by \citesupp{Onatski2010}.
		%%%%%%%%%%%%%%%%%%%%%%%%%%%%%%%%%%%%%%%%%%%%%%%%%%%%%%%%%%%%%%%%%%%%%%%%
		\item \textit{\citesupp{Doz2011}}  (DFM) \label{gmv_dynf}\\
		To allow for some dynamics in the latent factors, we consider also a dynamic factor model originally proposed by \citesupp{Geweke1977}.
		Specifically, the dynamic factor model is represented by the following equation:
		\begin{align}
			x_{it} = \bB_i'(L) \bfa_t + \varepsilon_{it}, \label{dyn_factor_m}
		\end{align}
		where $\bB_i(L) = \left(b_{i1} + b_{i2}L + \cdots + b_{ip}L^p\right)$ and $L$ corresponds to the lag operator such that, $\forall p$, $L^p \bfa_t = \bfa_{t-p}$.
		In this setup $\bfa_t = \left(f_{1t}, f_{2t}, \dots, f_{qt}\right)'$ is a $(q \times 1)$-dimensional vector of dynamic factors 
		following  a VAR process and $b_{ij},  j = 1,\dots,p$ denote the corresponding $q$-dimensional factor loadings.
		%In this setup $\mathbf{f}_t$ is a $q \times 1$-dimensional vector of dynamic factors, i.e. $f_t = \left(f_{1t}, f_{2t}, \dots, f_{qt}\right)'$ that offer their own dynamic structure according to a VAR process and $b_{ij}, \text{ where } j = 1,\dots,p$ are the corresponding $q$-dimensional factor loadings.
		In order to estimate the dynamic factor model in \eqref{dyn_factor_m}, we use the two step procedure of \citesupp{Doz2011}. The estimation 
		requires that the number of dynamic factors is given ex-ante. We use the consistent method by \citesupp{Bai2007} to determine $q$.
	\end{enumerate}
	%%%%%%%%%%%%%%%%%%%%%%%%%%%%%%%%%%%%%%%%%%%%%%%%%%%%%%%%%%%%%%%%%%%%%%%%
	\begin{itemize}[leftmargin=0cm]
		\item[]\textbf{\textit{Factor models with observable factors}}
	\end{itemize}
	In addition, we consider two factor models  that have been frequently used in the empirical finance literature. 
	Contrary to the approximate factor models, the factors in these models
	are not latent but observable time series variables. In this respect, these type of models incorporate more information than approaches, which solely use the information on the return process itself to estimate the covariance matrix of returns. 
	However, the inclusion of additional time series information may give rise to an additional  source of misspecification, if the factor specification fails to describe the true data generating process properly.
	\begin{enumerate}[leftmargin=*]
		\item \textit{The Single Index Model} (SIM) \label{sim_sharpe}
	\end{enumerate}
	The single index model by \citesupp{Sharpe1963} is based on a single observable factor, $f_{1t}$, representing the excess market return:
	\begin{align}
		x_{it} = \alpha + \beta_{i1} f_{1t} + \varepsilon_{it} \, .  \label{single_index}
	\end{align}
	In our study, we use as a proxy for the market return, the value-weighted returns of all Center for Research in Security Prices (CRSP) firms incorporated in the US and listed on the AMEX, NASDAQ, or the NYSE. %\footnote{AMEX - American Stock Exchange, \\ NASDAQ - National Association of Securities Dealers Automated Quotations and\\ NYSE - New York Stock Exchange}. % In particular, the data contains ordinary common shares at the beginning of month t, good shares and price data at the beginning of t, and good return data for t. 
	The one-month treasury bill rate serves as the risk free rate to construct the excess market returns. The estimator for the covariance matrix of the single index model is given by:
	\begin{align*}
		\hat{\bSigma}_{\text{SIM}} = \hat{\bbeta}_1 \hat{\sigma}_{f_{1}} \hat{\bbeta}_1' + \hat{\bD} \, ,
	\end{align*}
	where $\hat{\sigma}_{f_1}$ denotes the sample variance of the market excess returns.  $\hat{\beta}_1$ represents the OLS estimates of the factor loadings and $\hat{\bD}$ is a diagonal matrix of the OLS residual variances of regression model \eqref{single_index} assuming that the observed factor   picks up the cross-correlations of the returns completely.
	%%%%%%%%%%%%%%%%%%%%%%%%%%%%%%%%%%%%%%%%%%%%%%%%%%%%%%%%%%%%%%%%%%%%%%%%
	\vspace*{0.3cm}
	\begin{enumerate}[leftmargin=*, resume]
		\item \textit{Fama and French 3-Factor Model (FF3F)}
	\end{enumerate}
	The Fama and French 3-factor model offers an extension to the single index model by \citesupp{Sharpe1963} and is defined as:
	\begin{align}
		\bx_t =   \bbeta_1 f_{1t} + \bbeta_2 f_{2t} + \bbeta_3 f_{3t} + \bvar_t \, . \label{FF3}
	\end{align}
	
	The first factor $f_1$ is identical to the one of the one-factor model  in \eqref{single_index}. The second factor $f_{2t}$ often denoted by the acronym SMB is composed as the average returns on the three small portfolios minus the average returns on the three big portfolios. In particular, it defines a zero-cost portfolio that is long in stocks with a small market capitalization and short in stocks with a large market capitalization\footnote{It is important to note that securities with a long position in a portfolio are expected to rise in value and on the other hand securities with short positions in a portfolio are expected to decline in value.}. The third factor $f_{3t}$, denoted as HML,  comprises a zero-cost portfolio that is long in stocks with a high book-to-market value and short in low book-to-market stocks
	\footnote{A detailed definition of the factors can be found on the website of Kenneth R. French. See \text{http://mba.tuck.dartmouth.edu/pages/faculty/ken.french/data\_library.html}}.
	In matrix notation \eqref{FF3} is given by:
	\begin{align}
		\bX = \bbeta \bFa' + \bvar \, , 
	\end{align}
	where $\bFa = [\bfa_1, \bfa_2, \bfa_3]$ with dimension $T \times 3$ and $\bbeta = [\bbeta_1, \bbeta_2, \bbeta_3]$ with dimension $N \times 3$.\\
	The estimator for the covariance matrix for the 3-factor model by \citesupp{Fama1993} $\bSigma_{\text{FF}}$ is equal to the following equation:
	\begin{align*}
		\hat{\bSigma}_{\text{FF}} = \hat{\bbeta} \hat{\bSigma}_{F} \hat{\bbeta}' + \hat{\bD}_{\text{FF}} \, ,
	\end{align*}
	where $\hat{\bSigma}_{F}$ denotes the covariance matrix of the three factors and $\hat{\bD}_{\text{FF}}$ represents a diagonal matrix that contains the variances of the OLS residuals covariance matrix on its main diagonal.
	
	%%%%%%%%%%%%%%%%%%%%%%%%%%%%%%%%%%%%%%%%%%%%%%%%%%%%%%%%%%%%%%%%%%%%%%%%
	%\begin{itemize}[leftmargin=0cm]
	%	\item[]\textbf{\textit{Covariance Matrix Estimation Strategies}}
	%\end{itemize}
	%Further, we consider from the group of covariance matrix estimators introduced in Section \ref{sub_sec:cov_est} the plug-in estimation approaches for the GMVP weights based  
	%on the shrinkage covariance estimator by \citesupp{Ledoit2003} (LW) and the design free estimator by \citesupp{AbadirDistasoZikes2014} (ADZ). 
	%We refrain from considering the soft thresholding estimator (ST), because, as mentioned earlier,  this estimator does not necessarily yield estimates of the covariance matrix, which are a positive definite, hence its inverse, needed for the computation of the GMVP weights, might be ill-conditioned. However, in addition to the estimators considered in the Monte Carlo study in Section \ref{sec:sim}, we consider the shrinkage approach by \citesupp{Kourtis2012} (KDM), which targets directly on the inverse of the covariance matrix and is particularly designed for portfolio applications.
	%%Within the class of covariance matrix shrinkage strategies  we consider the methods proposed  by \citesupp{Ledoit2003} and \citesupp{Kourtis2012} and the design-free estimator by \citesupp{AbadirDistasoZikes2014}
	\item[\textbf{B.}]\textbf{\textit{Covariance Matrix Shrinking Strategies}}\\
	Within the class of covariance matrix shrinkage strategies, we consider the method proposed  by \citesupp{Ledoit2003}, \citesupp{Kourtis2012}, 
	the design-free estimator by \citesupp{AbadirDistasoZikes2014} and \citesupp{LedoitWolf2018}.
	
	\begin{enumerate}[wide, labelwidth=!, labelindent=0pt]
		\setlength\itemsep{0.5cm}
		\item \textit{\citesupp{Ledoit2003}} (LW)\\
		The LW approach shrinks the sample covariance matrix $\bS_{x}$ towards the covariance matrix of a single index model that is well-conditioned. This yields the following definition:  
		\begin{align*}
			\hat{\bSigma}_{\text{LW}} = \alpha^{*} \bS_{x} + (1 - \alpha^{*}) \hat{\bSigma}_\text{SIM},
		\end{align*}
		where $\alpha^{*} \in (0,1)$ is a constant, which corresponds to the shrinkage intensity. \citesupp{Ledoit2003} propose the following estimator to be used in practice $\hat{\alpha}^{*} = \frac{1}{T}\frac{\tau-\rho}{\gamma}$, where $\tau$ denotes the error on the sample covariance matrix, $\rho$ measures the covariance between the estimation errors of $\hat{\bSigma}_\text{SIM}$ and $\bS_{x}$ and $\gamma$ accounts for the misspecification of the shrinkage target $\hat{\bSigma}_\text{SIM}$.
		%%%%%%%%%%%%%%%%%%%%%%%%%%%%%%%%%%%%%%%%%%%%%%%%%%%%%%%%%%%%%%%%%%%%%%%%
		\item \textit{\citesupp{Kourtis2012} (KDM)}\\
		The estimation method by \citesupp{Kourtis2012} directly shrinks the inverse of the sample-based covariance matrix $\bS_{x}$ towards the identity matrix $I_N$ and the inverse of the covariance matrix resulting from a single index model by \citesupp{Sharpe1963}, according to the following equation:
		\begin{align}
			\hat{\bSigma}^{-1}_{\text{KDM}} = \zeta_1 \bS_{x}^{-1} + \zeta_2 \bI_N + \zeta_3 \hat{\bSigma}_\text{SIM}^{-1}. \label{kouris_three}
		\end{align}
		The authors show that the resulting weights are a three-fund strategy, i.e. a linear combination of the sample-based weights $\hat{\bomega}$, the equally weighted portfolio weights $\hat{\bomega}_{1/N}$ and those of the single index model model $\hat{\bomega}_\text{SIM}$.
		In order to select the optimal shrinkage coefficients in \eqref{kouris_three}, the authors suggest minimizing the out-of-sample portfolio variance using cross-validation. It is important to note that this portfolio strategy is also applicable for the case when $N > T$. In order to obtain reliable results for the inverse of $\bS_{x}$ the authors use the Moore-Penrose pseudo-inverse.
		%%%%%%%%%%%%%%%%%%%%%%%%%%%%%%%%%%%%%%%%%%%%%%%%%%%%%%%%%%%%%%%%%%%%%%%%
		\item \textit{\citesupp{AbadirDistasoZikes2014}} (ADZ)\\
		The design-free estimator for the covariance matrix by \citesupp{AbadirDistasoZikes2014} aims to improve the estimation of the eigenvalues $\hat{\bP}$ of $\bS_x$, that is a possible source of ill-conditioning. The authors consider the following spectral decomposition of $\bS_{x}$:
		\begin{align}
			%\V{X} = \hat{\bSigma}_{X} = \hat{\bGamma}\hat{\bP}\hat{\bGamma}'
			\bS_{x} = \hat{\bGamma}\hat{\bP}\hat{\bGamma}'.
		\end{align}
		In order to obtain an improved estimator for 
		% $\bP$ 
		$\bP$, $\bX$ is split into two subsamples $\bX = \left(\underset{N \times n}{\bX_1}, \underset{N \times (T- n)}{\bX_2} \right)$.
		Calculating the sample covariance matrix for the first $n$ observations yields:
		\begin{align}
			\bS_1 = \frac{1}{n} \bX_1\bM_n \bX_1' = \hat{\bGamma}_1\hat{\bP}_1\hat{\bGamma}_1',
			%\V{\bX_1} = \hat{\bSigma}_1 = \frac{1}{n} \bX_1\mathbf{M}_n \bX_1' = \hat{\bGamma}_1\hat{\bP}_1\hat{\bGamma}_1'
		\end{align}
		where $\bM_n = \bI_n - \frac{1}{n}\mathbf{1}_n\mathbf{1}_n'$ is the de-meaning matrix of dimension $n$ and $\mathbf{1}_n$ denotes a $n \times 1$ vector of ones. The spectral decomposition of $\bS_1$ provides the matrix of eigenvectors $\hat{\bGamma}_1$ and the diagonal matrix of eigenvalues $\hat{\bP}_1$.
		
		In the second step, an improved estimator for $\bP$ is computed from the remaining orthogonalized observations:
		\begin{align}
			\tilde{\bP} = \text{diag}\left(\cova{\left[\hat{\bGamma}_1'\bX_2\right]}\right) = \text{diag}\left(\hat{\bGamma}_1'\bS_2\hat{\bGamma}_1\right).
		\end{align}
		The new estimator for the covariance matrix is now obtained according to:
		\begin{align}
			\hat{\bSigma}_\text{AZD} = \hat{\bGamma}\tilde{\bP}\hat{\bGamma}'.
		\end{align}
		%%%%%%%%%%%%%%%%%%%%%%%%%%%%%%%%%%%%%%%%%%%%%%%%%%%%%%%%%%%%%%%%%%%%%%%%
		\item \textit{\citesupp{LedoitWolf2018}} (LW-NL)\\
		Another method that aims to improve on the estimation of the eigenvalues of $\bS_x$ is provided by \citesupp{LedoitWolf2018}. The covariance estimator is given by:
		\begin{align}
			\hat{\bSigma}_\text{LW-NL} = \hat{\bGamma}\hat{\bD}\hat{\bGamma}',
		\end{align}
		where $ \hat{\bGamma}$ are the sample eigenvectors of $\bS_x$ and the eigenvalues in the diagonal matrix $\hat{\bD}$ are estimated in a non-linear fashion as in Theorem 6.2. in \citesupp{LedoitWolf2018}.
		
	\end{enumerate}
	%%%%%%%%%%%%%%%%%%%%%%%%%%%%%%%%%%%%%%%%%%%%%%%%%%%%%%%%%%%%%%%%%%%%%%%%
	\item[\textbf{C.}]\textbf{\textit{Sparse Covariance Estimators}}\\
	The following estimators are explicitly designed to provide sparse covariance matrices. Hence, these models are appropriate for empirical settings that are reflected by our second simulation design.
	%%%%%%%%%%%%%%%%%%%%%%%%%%%%%%%%%%%%%%%%%%%%%%%%%%%%%%%%%%%%%%%%%%%%%%%%
	\begin{enumerate}[wide, labelwidth=!, labelindent=0pt]
		\setlength\itemsep{0.5cm}
		\item \textit{\citesupp{Rothman2009} (ST)}\\
		As a special case of the generalized thresholding estimators studied by \citesupp{Rothman2009}, we use the soft-thresholding (ST) method as a sparse covariance estimator and obtain:
		\begin{align*}
			\hat \bSigma_\text{ST} = \left(\hat{\sigma}_{\text{ST},ij}\right)_{N \times N}, \quad \hat{\sigma}_{\text{ST}, ij} = \left\{\begin{array}{ll}
				\hat{\sigma}_{s,ij},   		&i = j	\\
				\mathcal{S}(\hat{\sigma}_{s,ij}, \kappa),	    &i \neq j 
			\end{array}\right.
		\end{align*}
		where $\hat{\sigma}_{s,ij}$ is the $ij$-th element of the sample covariance matrix and $\mathcal{S}$ denotes the soft-thresholding operator defined in \eqref{soft_t}. The thresholding parameter $\kappa$ is selected by minimizing the difference between $\hat \bSigma_\text{ST}$ and $\bS_x$ in Frobenius norm based on cross-validation.
		%%%%%%%%%%%%%%%%%%%%%%%%%%%%%%%%%%%%%%%%%%%%%%%%%%%%%%%%%%%%%%%%%%%%%%%%
		\item \textit{\citesupp{BienTibshiraniothers2011} (BT)}\\
		The authors propose a penalized maximum likelihood estimator based on a lasso penalty in order to allow for sparsity in the covariance matrix and to reduce the effective number of parameters. More specifically, the following objective function is optimized:  
		\begin{align*}
			\min_{\bSigma \succ 0} \quad \log\det\left(\bSigma\right) + \trace{\bSigma^{-1} \bS_x} + \alpha_N \sum_{i = 1}^{N} \sum_{j = 1}^{N} \left|h_{ij} \sigma_{ij}\right|,
		\end{align*}
		where $\alpha_N$ is a regularization parameter selected based on 5-fold cross-validation. The $ij$-th element of the selection matrix $H$ is defined as $h_{ij} = \1\left\{i \neq j\right\}$ and enables an equal penalization of the off-diagonal elements and leaves the diagonal elements unaffected. Furthermore, \citesupp{BienTibshiraniothers2011} show that the estimated sparse covariance matrix is positive definite.  
	\end{enumerate}
\end{itemize}

\bibliographystylesupp{econometrica}
\bibliographysupp{DaPoZaFLasso}

%\bibliographystyle{Chicago}
%\bibliography{DaPoZaFLasso}

%============= END APPENDIX =======================

\end{document}